\documentclass[12pt]{article}
\usepackage{amsmath,amsfonts,amsthm,amssymb,mathrsfs,graphicx}
\usepackage{times}
\usepackage{color,ulem, cancel, slashed}
\textwidth=16cm
\textheight=23cm
\voffset=-1.8cm
\hoffset=-1.1cm
\theoremstyle{plain}
\newtheorem{theorem}{Theorem}[section]
\newtheorem{lemma}[theorem]{Lemma}
\newtheorem{lem}[theorem]{Lemma}
\newtheorem{prop}[theorem]{Proposition}
\newtheorem{cor}[theorem]{Corollary}
\theoremstyle{definition}

\newtheorem{defn}[theorem]{Definition}
\theoremstyle{remark}
\newtheorem*{remark}{Remark}
\newtheorem*{rmk}{Remark}

\newcommand{\vphi}{\varphi}

\newcommand{\pl}{\partial}

\newcommand{\na}{\nabla}
\newcommand{\bna}{\bar\nabla}
\newcommand{\lt}{\left}
\newcommand{\rt}{\right}
\newcommand{\rw}{\rightarrow}

\renewcommand{\tilde}{\widetilde}

\newcommand{\tr}{\mbox{tr}}

\newcommand{\RN}[1]{\textup{\uppercase\expandafter{\romannumeral#1}}}

\title{Transformation of mass-angular momentum aspect under BMS transformations}
\author{Po-Ning Chen, Mu-Tao Wang, Ye-Kai Wang, and Shing-Tung Yau}
\date{\today}

\begin{document}

\maketitle
\begin{abstract} 
In this article, we present the definitive transformation formulae of the mass aspect and angular momentum aspect under BMS transformations. Two different approaches that lead to the same formulae are taken. In the first approach, the formulae are derived by reading off the aspect functions from the curvature tensor. While in the second and more traditional approach, we read them off from the metric coefficients. As an application of the angular momentum aspect transformation formula, we directly verify a relation concerning the Dray-Streubel angular momentum. 
It also enables us to reinterpret our calculations in terms of differential forms on null infinity, and leads to an exact expression of the Drey-Streubel angular momentum of a general section.  The formulae we obtained played crucial roles in our recent work on supertranslation invariant charges, and resolved some inconsistencies in the literature.
\end{abstract}

\section{Introduction}
In the pioneering work \cite{BVM, Sachs}, Bondi-van der Burg-Metzner and Sachs introduced a coordinate-based approach, now called the Bondi-Sachs formalism \cite{MW}, to study the gravitational fields at large distance. The Bondi-Sachs coordinate system consists of an optical function $u$ whose level sets are null hypersurfaces, the luminosity radius $r$, and the coordinates of the spherical section $x^A, A=2,3$. In Bondi-Sachs coordinates, the spacetime metric takes the form 
\begin{align}\label{Bondi-Sachs}
-UV du^2 - 2 U dudr + r^2 h_{AB} (dx^A + W^A du)(dx^B+W^B du), A, B=2, 3
\end{align}
where $h_{AB}$ satisfies the {\it determinant condition}
\begin{align*}
\det h_{AB} &= \det \sigma_{AB}.
\end{align*}
Here $\sigma$ denotes the standard metric on $S^2$. The boundary conditions assume that the spacetime is asymptotically Minkowskian and the metric coefficients can be expanded into power series of $\frac{1}{r}$:
\begin{align}\label{boundary condition}
\begin{split}
U &= 1 + \frac{U^{(-2)}}{r^2} + O(r^{-3}),\\ 
V &= 1 - \frac{2m}{r} + O(r^{-2}),\\
W^A &=\frac{W^{A(-2)}}{r^2} + \frac{W^{A(-3)}}{r^3} + O(r^{-4})\\
h_{AB} &= \sigma_{AB} + \frac{1}{r} C_{AB} + \frac{1}{r^2} d_{AB} + O(r^{-3}) \end{split}
\end{align} where the determinant condition implies that $C_{AB}$ is a symmetric traceless 2-tensor on $(S^2,\sigma)$. Each of $U^{(-2)}$, $m$, 
$W^{A(-2)}$, $W^{A(-3)}$, $C_{AB}$, and $d_{AB}$ depends on the coordinates $(u, x^A)$.

We further define\footnote{Our convention is $N^A = -3 N^A_{(CJK)}$. See \cite[(C.104)]{CJK}.}  $N^A$ by way of the following equation: 
\begin{align}\label{defn N^A}
W^{A(-3)} = \frac{2}{3} N^A - C^A_B W^{B(-2)} + \pl_A U^{(-2)}.
\end{align}

\begin{defn}
The function $m(u,x^A)$, 1-form $N_A(u,x^A)=\sigma_{AB} N^B$ and 2-tensor $C_{AB}(u,x^A)$ are called mass aspect, angular momentum aspect and shear tensor. They are regarded as tensors (depending on $u$) on $S^2$. In this note we raise and lower indices with respect to $\sigma_{AB}$.
\end{defn}

\begin{rmk} Our convention of the angular momentum aspect features $N_A = \lim_{r\rw\infty} r^3 R_{Arru}$ where $R_{Arru}=R(\partial_A, \partial_r, \partial_r, \partial_u)$ and $R(X, Y, Z, W) = \langle (D_XD_Y -D_YD_X - D_{[X,Y]})W,Z \rangle$ is the curvature tensor.
\end{rmk}

Working in a coordinate-based formalism, it is important to understand the coordinate transformations that preserve \eqref{Bondi-Sachs} and \eqref{boundary condition} \textemdash the {\it BMS transformation}. The goal of this note is to derive the transformation formula for the mass aspect and the angular momentum aspect under the BMS transformations, Theorem \ref{mass aspect} and Theorem \ref{angular momentum aspect}. In one approach, the mass and angular aspect are regarded as curvature components, and this reduces the amount of computation significantly. Traditionally, the mass and angular momentum aspect are treated as metric components, and one endures the calculation of the sub-leading orders of BMS transformations to obtain the transformation formula. The transformation formula of the angular momentum aspect under supertranslation was first computed in \cite{CJK}. The correct form of this formula is essential in our recent work on supertranslation invariant charges that include angular momentum and center of mass \cite{CWWY1, CWWY2, CKWWY}. Due to the inconsistency of our result with the result in \cite{CJK}, we decide to undertake two different approaches of derivation to make sure the formulae we obtained here are definitive.

The derivation of the formulae is rather complicated and involves quantities such as $U$, $V$, $W^A$, etc. that were originally adopted in \cite{BVM, Sachs} and several more intermediate quantities that were introduced in the calculation (we closely follow the Bondi-Sachs formalism in \cite{MW} though). The following presentation is only in terms of the metric coefficients and their expansions; the correspondence can be found in appendix B.

\begin{theorem}\label{main} Suppose $(\bar{u}, \bar{r}, \bar{x}^A), A=2, 3$ is a Bondi-Sachs coordinate system for a vacuum spacetime such that for $\bar{r}>\bar{r}_0>>1$, the spacetime metric takes the form
\[\bar{g}_{\bar{u}\bar{u}} d\bar{u}^2+2\bar{g}_{\bar{u}\bar{r}} d\bar{u} d\bar{r}+2\bar{g}_{\bar{u}{A}} d\bar{u} d\bar{x}^{ A}+\bar{g}_{{A}{B}} d\bar{x}^{{A}} d\bar{x}^{{B}}\] and that the metric components allow the following expansions: 
\[\begin{split}\bar{g}_{\bar{u}\bar{u}}&=-1+\bar{g}_{\bar{u}\bar{u}}^{(-1)}(\bar{u}, \bar{x}^A)\bar{r}^{-1} +o(\bar{r}^{-1})\\
\bar{g}_{\bar{u}\bar{r}}&=-1+\bar{g}_{\bar{u}\bar{r}}^{(-1)}(\bar{u}, \bar{x}^A)\bar{r}^{-1}+o(\bar{r}^{-1})\\
\bar{g}_{\bar{u}{A}}&=\bar{g}_{\bar{u}{A}}^{(0)}(\bar{u}, \bar{x}^A)+\bar{g}_{\bar{u}{A}}^{(-1)}(\bar{u}, \bar{x}^A)\bar{r}^{-1}+o(\bar{r}^{-1})\\
\bar{g}_{{A}{B}}&=\sigma_{AB}\bar{r}^2+\bar{g}_{{A}{B}}^{(1)}(\bar{u}, \bar{x}^A)\bar{r}+o(\bar{r}^{}).\end{split}\]

Denote by $m(\bar{u}, \bar{x}^A)=\frac{1}{2}\bar{g}_{\bar{u}\bar{u}}^{(-1)}(\bar{u}, \bar{x}^A)$ the mass aspect,  \[ N_A(\bar{u}, \bar{x}^A)=\frac{3}{2}[\bar{g}_{\bar{u}{A}}^{(-1)}+\frac{1}{16}\partial_A (\sigma^{BD}\sigma^{CE} \bar{g}_{BC}^{(1)} \bar{g}_{DE}^{(1)})] (\bar{u}, \bar{x}^A) \] the angular momentum aspect, and $C_{AB}(\bar{u}, \bar{x}^A)=\bar{g}_{{A}{B}}^{(1)}(\bar{u}, \bar{x}^A)$ the shear of the $(\bar{u}, \bar{r}, \bar{x}^A)$ coordinates.

Suppose $({u}, {r}, {x}^a), a=2, 3$ is another Bondi-Sachs coordinate system that is related to $(\bar{u}, \bar{r}, \bar{x}^A)$ by 
\begin{align}
\bar x^A (u,r,x^a) &= g^A(x^a) + \bar{x}^{(-1)A}(u,x^a) r^{-1} + o(r^{-1})\\
\bar u(u,r,x^a) &= K(x^a) \lt( u + f(x^a) \rt) +  \bar{u} ^{(-1)}(u,x^a) r^{-1}+ o(r^{-1}) \label{bar u}\\
\bar r(u,r,x^a )& = \frac{1}{K(x^a)} r + \bar{r}^{(0)}(u,x^a) + o(1),
\end{align} 
where $g^A(x^a)$ is a conformal map on $S^2$ that satisfies:
\begin{align}\label{conformal}
\sigma_{AB} \pl_a g^A \pl_b g^B = K^2(x^a) \sigma_{ab} =: \bar\sigma_{ab}, 
\end{align} such that the spacetime metric in $({u}, {r}, {x}^a)$ near $r=\infty$ takes the form 

\[{g}_{{u}{u}} d{u}^2+2{g}_{{u}{r}}  d{u} d{r}+2{g}_{{u}{a}} d{u} d{x}^{a}+{g}_{ab} d{x}^{a} d{x}^{b}\] and the metric components have the following expansions: 
\[\begin{split}{g}_{{u}{u}}&=-1+{g}_{{u}{u}}^{(-1)}({u}, {x}^a){r}^{-1} +o({r}^{-1})\\
{g}_{{u}{r}}&=-1+{g}_{{u}{r}}^{(-1)}({u}, {x}^a){r}^{-1}+o({r}^{-1})\\
{g}_{{u}{a}}&={g}_{{u}{a}}^{(0)}({u}, {x}^a)+{g}_{{u}{a}}^{(-1)}({u}, {x}^a)r^{-1}+o(r^{-1})\\
{g}_{ab}&={r}^2\sigma_{ab}+{g}_{ab}^{(1)}({u}, {x}^a){r}+o({r}^{})\end{split}\]

Denote by $\mathfrak{m}({u}, {x}^a)=\frac{1}{2}{g}_{{u}{u}}^{(-1)}({u}, {x}^a)$ the mass aspect and \[ \mathfrak{N}_a({u}, {x}^a)=\frac{3}{2}[g_{{u}{a}}^{(-1)}+\frac{1}{16}\partial_a(\sigma^{bd}\sigma^{ce} g_{bc}^{(1)} g_{de}^{(1)})](u, x^a) \] the angular momentum aspect of the $(u, r, x^a)$ coordinate system. 

Denoting $\hat f(u, x^a)= K(x^a) \lt( u + f(x^a) \rt)$, then the following transformation formulae of the mass aspects and angular momentum aspects hold:

\begin{equation}\label{final_mass_formula}
\begin{split}\mathfrak{m}(u,x^a) = K^3 \Big( m + \frac{1}{2}\bna^a \hat f \pl_a g^A \pl_{\bar u} \na^D C_{AD} + \frac{1}{4} \bna^a \hat f \bna^b \hat f \pl_a g^A \pl_b g^B \pl_{\bar u} \pl_{\bar u} C_{AB} \\
+ \frac{1}{4} \bna^a \bna^b \hat f \pl_a g^A \pl_b g^B \pl_{\bar u}C_{AB}  \Big)\end{split}
\end{equation}
where $m, \pl_{\bar u}C_{AB}, \pl_{\bar u }\pl_{\bar u}C_{AB}$ on the right hand side are evaluated at $(\bar u=\hat f(u,x^a), \bar x^A=g^A(x^a))$.

\begin{equation}\label{final_angular_formula}
\begin{split}\mathfrak{N}_a(u,x^a) &= K^2 \Big(\pl_a g^A N_A + 3m \na_a \hat f\\
&\qquad - \frac{3}{4} \pl_a g^A \lt( \na_B \na^D C_{DA} - \na_A \na^D C_{DB} \rt)\pl_c g^{B} \bar\na^c \hat f  - \frac{3}{4} \pl_a g^A \na^B \pl_{\bar u}C_{BA} |\bna\hat f|^2\\
&\qquad +\frac{3}{4} \pl_{\bar u}C^B_D C_{AB} \pl_a g^A   \pl_c g^D \bar\na^c \hat f + \frac{1}{2} \pl_{\bar u} \pl_{\bar u} C_{AB} \pl_c g^{A}\pl_d g^{B} \bar\na^c \hat f \bar\na^d \hat f \na_a \hat f \\
&\qquad  +\frac{3}{2} \pl_{\bar u} \na^D C_{DB} \pl_cg^{B} \bar\na^c \hat f \na_a \hat f -\frac{1}{4} \pl_{\bar u} \pl_{\bar u} C_{AB} \pl_a g^A \pl_c g^{B} \bar\na^c \hat f |\bna \hat f|^2 \Big)\end{split}
\end{equation} where $N_A, m, C_{AB}, \pl_{\bar u}C_{AB}, \pl_{\bar u}\pl_{\bar u}C_{AB}$ on the right hand side are evaluated at $(\bar{u}, \hat f(u,x^a), \bar x^A=g^A(x^a))$.
\end{theorem}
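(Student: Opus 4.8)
The plan is to exploit the fact, recorded in the Remark, that the angular momentum aspect is a rescaled limit of a curvature component, $N_A = \lim_{r\rw\infty} r^3 R_{Arru}$, and that the mass aspect admits a similar curvature characterization (this is precisely the content of the first approach advertised in the introduction). Since the Riemann tensor is a genuine $(0,4)$-tensor, its components transform by the purely tensorial law
\[ R_{\mu\nu\rho\sigma} = \bar R_{\alpha\beta\gamma\delta}\, \frac{\pl \bar x^\alpha}{\pl x^\mu}\frac{\pl \bar x^\beta}{\pl x^\nu}\frac{\pl \bar x^\gamma}{\pl x^\rho}\frac{\pl \bar x^\delta}{\pl x^\sigma}, \]
with no inhomogeneous Christoffel terms. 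This is the structural advantage of the approach: the whole computation reduces to (i) knowing the Jacobian of the BMS map to sufficient order, and (ii) knowing the $\bar r$-expansion of the barred curvature components in terms of $\bar m$, $\bar N_A$, $\bar C_{AB}$ and their derivatives. I would then obtain $\mathfrak m$ and $\mathfrak N_a$ by extracting the coefficient of $r^{-1}$, respectively $r^{-3}$, on the left-hand side.

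First I would compute the Jacobian $\pl \bar x^\alpha/\pl x^\mu$ from the three asymptotic expansions defining the transformation. The leading entries are $\pl_u \bar u = K$, $\pl_r \bar r = K^{-1}$, $\pl_a \bar x^A = \pl_a g^A$, $\pl_a \bar u = \pl_a \hat f$, together with the crucial growing entry $\pl_a \bar r = r\,\pl_a(K^{-1}) + O(1)$, while $\pl_r \bar u$ and $\pl_r \bar x^A$ are $O(r^{-2})$. The subleading coefficients $\bar x^{(-1)A}$, $\bar u^{(-1)}$, $\bar r^{(0)}$ are not free: I would pin them down in terms of $K$, $f$, $g^A$ and the shear by imposing that the transformed metric again lies in Bondi-Sachs gauge, i.e. reproduces the listed $g_{\mu\nu}$ expansions and obeys the determinant condition. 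This gauge-fixing step is the part inherited from the traditional metric-based derivation, and it is what converts $\hat f$ into the combinations $\bna^a \hat f$, $\bna^a\bna^b \hat f$ and $|\bna \hat f|^2$ appearing in the final formulae via the conformal relation \eqref{conformal}.

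With the Jacobian in hand, I would substitute into the transformation law and organize everything by order in $r$. The mechanism producing the many correction terms is that the growing entry $\pl_a \bar r \sim r$ multiplies subleading data, coupling lower orders of the barred curvature expansion, and of the Taylor expansion of the aspects about the leading evaluation point, back into the target order. Concretely, since $\bar u = \hat f + \bar u^{(-1)} r^{-1} + o(r^{-1})$ and $\bar x^A = g^A + \bar x^{(-1)A} r^{-1} + o(r^{-1})$, I would Taylor-expand $\bar C_{AB}(\bar u, \bar x^A)$, $\bar N_A$ and $\bar m$ about $(\bar u = \hat f,\, \bar x^A = g^A)$; the resulting $\pl_{\bar u} \bar C_{AB}$ and $\pl_{\bar u}\pl_{\bar u}\bar C_{AB}$ factors, weighted by the powers of $\bna \hat f$ coming from the contracted Jacobian, are exactly the shear-derivative terms in \eqref{final_mass_formula} and \eqref{final_angular_formula}. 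Collecting the $r^{-1}$ coefficient yields $\mathfrak m$ and the $r^{-3}$ coefficient yields $\mathfrak N_a$, with the overall powers $K^3$ and $K^2$ emerging from the homogeneity of the leading Jacobian factors.

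The main obstacle is the order-counting bookkeeping rather than any single conceptual difficulty: because one index of $R_{arru}$ can be carried by the $O(r)$ entry $\pl_a \bar r$ while the radial indices may be carried by the $O(r^{-2})$ entries $\pl_r \bar u$, $\pl_r \bar x^A$, several distinct products of Jacobian entries and curvature fall-offs land at the same order $r^{-3}$, and each must be retained with the correct numerical coefficient. In particular I expect the genuinely inhomogeneous pieces, namely the $3m\,\na_a \hat f$ term and the terms linear in $\bar N_A$, to require the most care, since they mix the conformal/boost scaling with the supertranslation shift of the evaluation cut. As an independent check I would redo the extraction in the traditional metric-based formalism of \cite{MW}, computing $g_{uu}^{(-1)}$ and $g_{ua}^{(-1)}$ directly from $g_{\mu\nu} = \bar g_{\alpha\beta}\,\pl_\mu \bar x^\alpha\, \pl_\nu \bar x^\beta$ and the stated definitions of $\mathfrak m$ and $\mathfrak N_a$; agreement of the two routes is what certifies the formulae as definitive.
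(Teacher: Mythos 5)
Your overall strategy is exactly the paper's: exploit the tensoriality of the Riemann tensor under the BMS map (so no Christoffel terms appear), fix the subleading coefficients $g^{(-1)A}$, $a^{(-1)}$, $\rho^{(0)}$ of the transformation by the Bondi gauge conditions, read the two aspects off curvature components, and use the traditional metric computation as the independent check \textemdash\ this is precisely Sections 3 and 4 of the paper.

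However, the mechanism you give for producing the shear-derivative terms is wrong, and following it literally would misorganize the computation. In the curvature route \emph{both} aspects are extracted from the $r^{-3}$ coefficients of $R_{urru}$ and $R_{arru}$ (the mass aspect does not sit at $r^{-1}$ there; that order refers to the metric component $g_{uu}$). Since $\bar u - \hat f = O(r^{-1})$ and $\bar x^A - g^A = O(r^{-1})$, Taylor-expanding $\bar m$, $\bar N_A$, $\bar C_{AB}$ about $(\bar u = \hat f, \bar x^A = g^A)$ perturbs the curvature coefficients only at $O(r^{-4})$, strictly below the target order \textemdash\ so the Taylor expansion you invoke contributes nothing here. (That mechanism belongs to the metric-based route of Section 4, where for instance the term $\frac{1}{r}\pl_{\bar u} C_{AB}\, a^{(-1)}$ really does arise from Taylor expansion inside $\bar r^2 h_{AB}$.) In the curvature route the news terms arise instead because the $O(r^{-2})$ Jacobian entries $\pl_r \bar u$ and $\pl_r \bar x^A$ \textemdash\ whose gauge-fixed coefficients $a^{(-1)} = -\frac{1}{2}K|\bna \hat f|^2$ and $g^{(-1)A} = -K \bna^c \hat f\, \pl_c g^A$ supply the $\bna \hat f$ weights \textemdash\ are contracted against barred curvature components that grow or decay slowly: $\bar R_{\bar u A \bar u B} = O(\bar r)$ carrying $\pl_{\bar u}\pl_{\bar u} C_{AB}$, $\bar R_{A \bar u \bar r \bar u} = O(\bar r^{-1})$ carrying $\pl_{\bar u} W^{(-2)}_A$, and $\bar R_{ABD\bar u} = O(\bar r)$ carrying $\pl_{\bar u}\na C$. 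For the same reason, your ``crucial growing entry'' $\pl_a \bar r \sim r$ in fact drops out at target order: it only ever meets curvature components killed by antisymmetry or decaying fast enough, which is why the paper's expansion of $R_{arru}$ contains no such term. Finally, your plan omits the ingredients needed to pass from the curvature-extracted combinations to the stated formulae: the vacuum and outgoing-radiation identities $U^{(-2)} = -\frac{1}{16}C_{DE}C^{DE}$, $W^{A(-2)} = \frac{1}{2}\na_B C^{AB}$, $d_{AB} = \frac{1}{2}C_A^D C_{DB}$ (together with the symmetric-traceless identities of Appendix A) are what convert, e.g., the coefficient $2(m - \pl_{\bar u} U^{(-2)})$ of $R_{urru}$ into the metric-defined $\mathfrak{m}$, and likewise for $\mathfrak{N}_a$. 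None of this is fatal \textemdash\ a faithful execution of your own bookkeeping prescription (``retain every product of Jacobian entries and curvature fall-offs landing at $r^{-3}$'') would self-correct the misattribution \textemdash\ but as written the proposal misidentifies where the correction terms come from.
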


Equation \eqref{final_mass_formula} is proved in Theorem \ref{mass aspect} and equation \eqref{final_angular_formula} is proved in Theorem \ref{angular momentum aspect}.

The note is organized as follows. Section 2 provides the leading orders of Christoffel symbols, curvature tensors, and BMS transformations. They are used to derive the main transformation formulae \eqref{final_mass_formula} and \eqref{final_angular_formula} in Section 3. In Section 4, we present the traditional approach to derive the transformation formulae. In Section 5, we apply the angular momentum aspect transformation formula we obtained to verify a relation concerning the Dray-Streubel angular momentum and identify an exact expression of the Drey-Streubel angular momentum of a general section by a differential 2-form on null infinity. The results are then generalized to the BMS charge of a general section as well. 
There are three appendices: Appendix A presents two identities for symmetric traceless 2-tensors on $S^2$; Appendix B describes the general scheme to compute metric coefficients in the traditional approach; we go over the derivation of transformation formulae of \cite[Appendix C.5]{CJK} and compare with our formulae in Appendix C.

\section{The leading orders of Christoffel symbols, curvature tensors, and BMS transformations}

\subsection{Christoffel symbols and curvature tensors of spacetime}
In this subsection, we assume that the spacetime metric admits power series expansions in $r$ for a given Bondi coordinate and give the leading orders of the Christoffel symbols and the components of curvature tensor. The computation is straightforward. We remind the readers that indices of tensors on $S^2$ are raised and lowered with respect to the standard metric $\sigma_{AB}$. 
Moreover, we denote the covariant derivative of $\sigma_{AB}$ by $\na$.

\begin{lem}
The Christoffel symbols $\mathbf{\Gamma}_{\alpha\beta}^\gamma$ of the spacetime metric are given by 
\begin{align*}
\mathbf\Gamma^u_{uu} &= r^{-2} \lt( \pl_u U^{(-2)} - m \rt)   + O(r^{-3}) \\
\mathbf\Gamma^u_{ru} &= \mathbf\Gamma^u_{rr} =\mathbf\Gamma^u_{rA}=0\\
\mathbf\Gamma^u_{uA} &= r^{-2} \cdot \frac{1}{2} \lt( \pl_A U^{(-2)} -  W^{(-3)}_A - C_{AB} W^{(-2)B} \rt) + O(r^{-3}) \\
\mathbf\Gamma^u_{AB} &= r\sigma_{AB} + \frac{1}{2}C_{AB} + O(r^{-1})
\end{align*}
For $\Gamma^r_{\alpha \beta}$, we have
\begin{align*}
\mathbf\Gamma^r_{uu} &= r^{-1}\cdot (-\pl_u m)   + O(r^{-2}) \\
\mathbf\Gamma^r_{ur} &= r^{-2}\cdot m   + O(r^{-3})\\
\mathbf\Gamma^r_{uA} &= r^{-1}\lt( -\pl_A m + \frac{1}{2}W^{B(-2)}\pl_u C_{AB} \rt)  + O(r^{-2})\\
\mathbf\Gamma^r_{rr} &= r^{-3}\cdot ( - 2U^{(-2)})  + O(r^{-4})\\
\mathbf\Gamma^r_{rA} &= r^{-1} W^{(-2)}_A   + O(r^{-2})\\
\mathbf\Gamma^r_{AB} &= r\lt( \frac{1}{2}\pl_u C_{AB} - \sigma_{AB}\rt)  + O(1)
\end{align*} 
For $\Gamma^A_{\alpha \beta}$, we have
\begin{align*}
\mathbf\Gamma^A_{uu} &= r^{-2} \pl_u W^{(-2)A} + O(r^{-3})\\
\mathbf\Gamma^A_{ur} &= r^{-4} \cdot \frac{1}{2} \lt( \na^A U^{(-2)} - W^{(-3)A} - C^A_B W^{(-2)B} \rt)  + O(r^{-5})\\
\mathbf\Gamma^A_{uB} &= \frac{1}{2r}\pl_u C^A_B + \frac{1}{2r^2} \lt( \pl_u d^A_B -C^{AD} \pl_u C_{BD} +  \na_B W^{(-2)A} - \na^A W^{(-2)}_B \rt)  + O(r^{-3})\\
\mathbf\Gamma^A_{rr} &=0  \\
\mathbf\Gamma^A_{rB} &= r^{-1}\delta^A_B  + O(r^{-2})\\
\mathbf\Gamma_{AB}^C &= \Gamma_{AB}^C + O(r^{-1})\end{align*}
where $\Gamma_{AB}^C = \frac{1}{2} \sigma^{AD}(\pl_B \sigma_{CD} + \pl_C \sigma_{BD} - \pl_D \sigma_{BC})$ is the Christoffel symbol of $\sigma_{AB}$ on $S^2$.
\end{lem}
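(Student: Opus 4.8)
The statement is purely kinematic — it depends only on the metric form \eqref{Bondi-Sachs} and the expansions \eqref{boundary condition}, not on any field equation — so the plan is a direct computation from $\mathbf\Gamma^\gamma_{\alpha\beta} = \frac{1}{2} g^{\gamma\delta}(\pl_\alpha g_{\beta\delta} + \pl_\beta g_{\alpha\delta} - \pl_\delta g_{\alpha\beta})$, organized in three stages: assemble the covariant metric, invert it, and collect orders in the Christoffel formula.

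First I would read off the covariant components by expanding $r^2 h_{AB}(dx^A + W^A du)(dx^B + W^B du)$ in \eqref{Bondi-Sachs}. This gives $g_{uu} = -UV + r^2 h_{AB} W^A W^B$, $g_{ur} = -U$, $g_{uA} = r^2 h_{AB} W^B$, and $g_{AB} = r^2 h_{AB}$, together with the \emph{exact} vanishing $g_{rr} = g_{rA} = 0$. Substituting \eqref{boundary condition} turns each entry into a power series in $r^{-1}$; for instance $g_{uA} = W^{(-2)}_A + (W^{(-3)}_A + C_{AB} W^{B(-2)}) r^{-1} + o(r^{-1})$ and $g_{AB} = r^2\sigma_{AB} + r C_{AB} + d_{AB} + \cdots$. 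I would retain one or two subleading orders in every entry, since several symbols in the statement sit at high order (e.g.\ $\mathbf\Gamma^A_{ur} = O(r^{-4})$, $\mathbf\Gamma^r_{rr} = O(r^{-3})$) and survive only after cancellations among subleading terms.

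Second I would invert $g$, exploiting the null structure $g_{rr} = g_{rA} = 0$. Solving $g^{\alpha\gamma} g_{\gamma\beta} = \delta^\alpha_\beta$ forces $g^{uu} = g^{uA} = 0$ and $g^{ur} = -U^{-1}$; the angular block $g^{AB}$ is exactly the $2\times 2$ inverse of $g_{AB}$ (obtained from $h^{AB} = \sigma^{AB} - C^{AB} r^{-1} + \cdots$, with the determinant condition and the tracelessness of $C_{AB}$ available to control the subleading order), while $g^{rA} = U^{-1} g^{AB} g_{uB}$ and $g^{rr}$ are fixed by the remaining equations. The vanishing of $g^{uu}$ and $g^{uA}$ is what makes the first family tractable: every $\mathbf\Gamma^u_{\alpha\beta}$ collapses to $\frac{1}{2} g^{ur}(\pl_\alpha g_{\beta r} + \pl_\beta g_{\alpha r} - \pl_r g_{\alpha\beta})$, and since $g_{\beta r}$ is supported only on $\beta = u$ this yields $\mathbf\Gamma^u_{ru} = \mathbf\Gamma^u_{rr} = \mathbf\Gamma^u_{rA} = 0$ identically and $\mathbf\Gamma^u_{AB} = \frac{1}{2U}\pl_r(r^2 h_{AB}) = r\sigma_{AB} + \frac{1}{2} C_{AB} + O(r^{-1})$.

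Finally I would push the remaining symbols through the formula one upper index at a time, extracting the stated leading order of each. I expect no conceptual obstruction — the authors rightly call the computation straightforward — and the only genuine difficulty is bookkeeping: for the high-order symbols one must carry both the inverse metric and the metric derivatives deep enough that their leading contributions survive the cancellations. As a consistency check I would set the aspect quantities $m, U^{(-2)}, W^{A(-2)}, W^{A(-3)}, C_{AB}, d_{AB}$ to zero: the formulae must then reduce to the Christoffel symbols of the flat metric $-du^2 - 2\,du\,dr + r^2\sigma_{AB}\,dx^A dx^B$, namely $\mathbf\Gamma^u_{AB} = r\sigma_{AB}$, $\mathbf\Gamma^r_{AB} = -r\sigma_{AB}$, $\mathbf\Gamma^A_{rB} = r^{-1}\delta^A_B$, and $\mathbf\Gamma^C_{AB} = \Gamma^C_{AB}$, with all remaining symbols vanishing. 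Matching these, together with verifying the symmetry $\mathbf\Gamma^\gamma_{\alpha\beta} = \mathbf\Gamma^\gamma_{\beta\alpha}$, pins down the coefficients.
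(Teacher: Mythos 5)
Your proposal is correct and follows exactly the route the paper intends: the paper offers no written proof beyond declaring the computation straightforward, and your plan (read off the covariant metric from \eqref{Bondi-Sachs}, exploit $g_{rr}=g_{rA}=0$ to get $g^{uu}=g^{uA}=0$ and $g^{ur}=-U^{-1}$, then extract each symbol's leading order from the Christoffel formula) is that direct computation, with the key structural observation about the $\mathbf\Gamma^u_{\alpha\beta}$ family and the flat-metric consistency check both accurate. Nothing is missing.
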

 
\begin{prop}\label{curvature}
The curvature tensors of the spacetime metric $R_{\alpha\beta\;\;\delta}^{\;\;\;\;\gamma} = \pl_\alpha \mathbf\Gamma^\gamma_{\beta\delta} - \pl_\beta \mathbf\Gamma^\gamma_{\alpha\delta} + \mathbf\Gamma^\gamma_{\alpha\sigma} \mathbf\Gamma^\sigma_{\beta\delta} - \mathbf\Gamma^\sigma_{\alpha\delta} \mathbf\Gamma^\gamma_{\beta\sigma}$ are given by
\begin{align*}
R_{uAuB} &= -\frac{r}{2} \pl_u\pl_u C_{AB} + O(1)\\
R_{Auru} &= -\frac{1}{r} \pl_u W^{(-2)}_A  + O(r^{-2}) \\
R_{urru} &= \frac{2}{r^3} \lt( m - \pl_u U^{(-2)} \rt)   + O(r^{-4})\\
R_{ABCD} &= O(r) \\
R_{ABru} &= \frac{1}{r} \lt( \frac{1}{4}C_A^D \pl_u C_{DB} -\frac{1}{4} C_B^D \pl_u C_{DA} + \na_A W^{(-2)}_B - \na_B W^{(-2)}_A \rt)   + O(r^{-2})\\
R_{ArBu} &= \frac{1}{r} \lt( m\sigma_{AB} -\frac{1}{4} C_B^D \pl_u C_{DA} + \frac{1}{2} \pl_u d_{AB} + \frac{1}{2} \na_A W^{(-2)}_B - \frac{1}{2} \na_B W^{(-2)}_A \rt)  + O(r^{-2})\\
R_{Arru} &= \frac{1}{r^3} \cdot \frac{3}{2} \lt( C_{AB} W^{(-2)B} + W^{(-3)}_A	-\pl_A U^{(-2)} \rt) = \frac{1}{r^3} N_A + O(r^{-4})\\
R_{ABDu} &= \frac{r}{2} \lt( \na_A \pl_u C_{BD} - \na_B \pl_u C_{AD} \rt) + O(1)\\
R_{rArB} &= O(r^{-3})  
\end{align*}
\end{prop}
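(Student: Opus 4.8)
The statement is a direct computation: I would substitute the Christoffel symbols from the preceding Lemma into the definition $R_{\alpha\beta\;\;\delta}^{\;\;\;\;\gamma} = \pl_\alpha \mathbf\Gamma^\gamma_{\beta\delta} - \pl_\beta \mathbf\Gamma^\gamma_{\alpha\delta} + \mathbf\Gamma^\gamma_{\alpha\sigma} \mathbf\Gamma^\sigma_{\beta\delta} - \mathbf\Gamma^\sigma_{\alpha\delta} \mathbf\Gamma^\gamma_{\beta\sigma}$ and then lower the contravariant index to reach the fully covariant components. By the curvature convention fixed in the Remark, $R_{\alpha\beta\gamma\delta} = g_{\gamma\mu} R_{\alpha\beta\;\;\delta}^{\;\;\;\;\mu}$, and the structural simplification I would lean on throughout is the Bondi gauge $g_{rr} = g_{rA} = 0$: lowering against a slot equal to $r$ collapses to the single term $g_{ur} = -U = -1 + O(r^{-2})$, so the four components whose third index is $r$ (namely $R_{Auru}$, $R_{urru}$, $R_{ABru}$, $R_{Arru}$) equal $-U$ times the corresponding mixed component with upper index $u$. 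When instead the free lowered index is angular ($R_{uAuB}$, $R_{ArBu}$, $R_{ABDu}$, $R_{ABCD}$, $R_{rArB}$), I lower with $g_{AB} = r^2\sigma_{AB} + rC_{AB} + O(1)$, whose leading $r^2$ promotes the order; this is exactly why the Lemma carries the subleading pieces of $\mathbf\Gamma^A_{uB}$ and the leading-but-high pieces of $\mathbf\Gamma^A_{ur}$ and $\mathbf\Gamma^A_{uu}$, and I would need them.

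Having recorded the inverse metric to the required orders, I would isolate component by component which of the derivative terms $\pl_\alpha\mathbf\Gamma$ and which of the quadratic terms $\mathbf\Gamma\mathbf\Gamma$ survive. The essential and error-prone feature is that the quadratic terms are \emph{not} generically subleading, and I see three patterns. In the simplest, a single term dominates: $R_{uAuB}$ comes from $\pl_u\mathbf\Gamma^r_{AB} = \tfrac{r}{2}\pl_u\pl_u C_{AB}$, and $R_{urru}$ from $-\pl_r\mathbf\Gamma^u_{uu}$ at order $r^{-3}$ with all quadratic terms vanishing in the gauge. In a second pattern a derivative term and a quadratic term of the \emph{same} order add: for $R_{Arru}$, which defines $N_A$, the term $-\pl_r\mathbf\Gamma^u_{Au}$ contributes $r^{-3}(\pl_A U^{(-2)} - W^{(-3)}_A - C_{AB}W^{(-2)B})$ with coefficient $1$, while the cross term $\mathbf\Gamma^u_{AB}\mathbf\Gamma^B_{ur}$—pairing $\mathbf\Gamma^u_{AB} = r\sigma_{AB} + \tfrac12 C_{AB}$ with $\mathbf\Gamma^B_{ur} = O(r^{-4})$—supplies exactly half of the same expression, producing the coefficient $\tfrac32$ which after the overall sign $g_{ur} = -1$ reproduces the form recorded in the statement. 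The component $R_{Auru}$ sits here too, coming entirely from the quadratic term $\mathbf\Gamma^u_{AB}\mathbf\Gamma^B_{uu}$, while $R_{ArBu}$ requires the analogous combination carried to order $r^{-3}$ in the mixed index and then promoted by the $r^2$ from $g_{BE}$.

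The third and hardest pattern—where I expect the real difficulty—is when the would-be leading contributions cancel and the stated order only appears one or more steps down. For $R_{ABru}$ the two quadratic terms $\mathbf\Gamma^u_{AD}\mathbf\Gamma^D_{Bu}$ and $\mathbf\Gamma^D_{Au}\mathbf\Gamma^u_{BD}$ each carry an $O(1)$ piece $\pm\tfrac12\pl_u C_{AB}$ that cancels, so the genuine $O(r^{-1})$ answer, including the antisymmetric combination $\na_A W^{(-2)}_B - \na_B W^{(-2)}_A$ and the $C\,\pl_u C$ terms, emerges only after expanding those products one order further, again invoking the subleading part of $\mathbf\Gamma^A_{uB}$. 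The two order-only claims are the most delicate: for $R_{ABCD} = O(r)$ the putative $O(r^2)$ part splits into a round-sphere Gauss piece $\delta^E_A\sigma_{BD} - \delta^E_B\sigma_{AD}$ cancelled by the radial product $\mathbf\Gamma^E_{rA}\mathbf\Gamma^r_{BD}$ (reflecting flatness of the Minkowski background) plus a shear piece $\propto\pl_u C$ (from the remaining radial and $u$-channel products) that vanishes only after lowering by $\sigma_{CE}$ and using that $C_{AB}$ is traceless; and for $R_{rArB} = O(r^{-3})$ the leading term $\pl_r\mathbf\Gamma^u_{AB} = \sigma_{AB}$ cancels against $-\mathbf\Gamma^C_{rB}\mathbf\Gamma^u_{AC}$, with further cancellations to be chased down to order $r^{-3}$. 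To guard against sign and coefficient slips throughout, I would impose the algebraic symmetries of the Riemann tensor (antisymmetry within each index pair and symmetry under pair exchange) as consistency checks, noting in particular that tracelessness of $C_{AB}$—forced by the determinant condition—is precisely what removes the residual $O(r^2)$ shear term in $R_{ABCD}$.
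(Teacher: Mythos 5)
Your proposal is correct and takes exactly the route the paper intends: the paper offers no written proof beyond declaring the computation straightforward from the Christoffel-symbol Lemma, and your term-by-term analysis is sound — in particular the $1+\tfrac{1}{2}$ structure producing the coefficient $\tfrac{3}{2}$ in $R_{Arru}$, the cancellation of the $O(1)$ pieces $\tfrac{1}{2}\pl_u C_{AB}$ in $R_{ABru}$, the Gauss-piece cancellation plus tracelessness of $C_{AB}$ killing the would-be $O(r^2)$ part of $R_{ABCD}$, and the leading cancellation in $R_{rArB}$ all check out. One cosmetic inconsistency: $R_{uAuB}$ and $R_{rArB}$ have third index $u$ and $r$, so they are lowered through $g_{u\mu}$ and $g_{r\mu}$ respectively (which is what your detailed computations actually do, e.g.\ obtaining $R_{uAuB}$ from $g_{ur}\,\pl_u\mathbf\Gamma^r_{AB}$), not through $g_{AB}$ as your opening classification asserts.
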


In particular, we identified the mass aspect $m$ in $R_{urur}$ and the angular momentum aspect $N_A$ in $R_{Arru}$. 

\subsection{BMS transformations}
 
Let $\vphi: (u,r,x^a) \rightarrow (\bar u, \bar r, \bar x^A)$ be a BMS transformation from one Bondi-Sachs coordinate system $(u,r,x^a)$ to another  $(\bar u, \bar r, \bar x^A)$. It is well-known that the leading order is given by
\begin{align}
\bar x^A (u,r,x^a) &= g^A(x^a) + \frac{1}{r} g^{(-1)A}(u,x^a) + O(r^{-2})\\
\bar u(u,r,x^a) &= K(x^a) \lt( u + f(x^a) \rt) + \frac{1}{r} a^{(-1)}(u,x^a) + O(r^{-2})\\
\bar r(u,r,x^a )& = \frac{1}{K(x^a)} r + \rho^{(0)}(u,x^a) + O(r^{-1}),
\end{align} 
where $g^A(x^a)$ is a conformal map on $S^2$ that satisfies:
\begin{align}\label{conformal}
\sigma_{AB} \pl_a g^A \pl_b g^B = K^2(x) \sigma_{ab} =: \bar\sigma_{ab}.
\end{align}
See Sachs \cite[Section 3(b)]{Sachs}. The notation here is adopted from \cite[(28)]{BVM}.

\begin{remark}
When $f$ is a linear combination of constant and first eigenfunctions on $S^2$, the associated BMS transformation is a translation. See the Appendix in \cite{2023MTWang}. 
\end{remark}

\begin{defn}
We use the shorthand $\hat f = K (u + f)$. We denote the covariant derivative and Laplacian with respect to $\bar\sigma$ in \eqref{conformal} by $\bna$ and $\bar\Delta$ and write $|\bna \hat f|^2 = \bar\sigma^{ab} \pl_a \hat f \pl_b \hat f$. Moreover, we write $C_{AB}$ (instead of $\bar C_{AB}$) for the shear tensor of $(\bar u, \bar r, \bar x^A)$ coordinate system.  
\end{defn}

By the chain rule, we obtain:
\begin{equation}\label{tangent_vec_ext_1}
\begin{split}
\vphi_*(\pl_u) &= K \pl_{\bar u} + \pl_u \rho^{(0)} \pl_{\bar r} + \frac{1}{r} \pl_u g^{(-1)A} \pl_A + \mbox{lower order term}\\
\vphi_*(\pl_r) &= - \frac{1}{r^2} a^{(-1)} \pl_{\bar u} + K^{-1} \pl_{\bar r} - \frac{1}{r^2} g^{(-1)A} \pl_A + \mbox{lower order term}\\
\vphi_*(\pl_a) &= \na_a \hat f \pl_{\bar u} - rK^{-2} \na_a K \pl_{\bar r} + \pl_a g^A \pl_A + \mbox{lower order term}.
\end{split}
\end{equation}

Writing $g_{\alpha\beta}$ for the metric components in $(u,r,x^a)$ coordinate system, we have
\begin{align*}
g_{ra} &= - K^{-1} \na_a \hat f - K^{-2} \sigma_{AB} \pl_a g^A g^{(-1)B} + O(r^{-1})\\
g_{rr} &= \frac{1}{r^2} \lt( 2K^{-1} a^{(-1)} + K^{-2} \sigma_{AB} g^{(-1)A} g^{(-1)B} \rt) + O(r^{-3})\\
g_{ab} &= r^2 \sigma_{ab} \\
&\quad + r \Big[ K^{-2} \na_a K \na_b \hat f + K^{-2} \na_b K \na_a \hat f + 2K\rho^{(0)}\sigma_{ab} \\
&\quad\qquad +  K^{-2} \sigma_{AB} \lt(\pl_a g^A \pl_b g^{(-1)B} + \pl_a g^{(-1)A} \pl_b g^B \rt) \\
&\quad\qquad+  \lt( K^{-2} \pl_D \sigma_{AB} g^{(-1)D} + K^{-1} C_{AB} \rt) \pl_a g^A \pl_b g^B \Big] + O(1).
\end{align*}

We solve inductively 
\begin{enumerate}
\item $g^{(-1)A}$ from $g_{ra}^{(0)}=0$,
\item $a^{(-1)}$ from $g_{rr}^{(-2)}=0$,
\item $\rho^{(0)}$ from $\sigma^{ab} g^{(1)}_{ab}=0$.
\end{enumerate}

First of all, one can verify the following formula by \eqref{conformal}.
\begin{lemma}\label{g(-1)}
\begin{align}\label{g^-1}
g^{(-1)A} = -K \bar\na^c \hat f \pl_c g^A
\end{align}
\end{lemma}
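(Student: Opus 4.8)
The plan is to read off $g^{(-1)A}$ directly from the defining condition $g_{ra}^{(0)}=0$, using the displayed expansion of $g_{ra}$ together with the conformal condition \eqref{conformal}; no analysis is needed, only linear algebra on $S^2$.

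First I would isolate the $r^0$-coefficient of $g_{ra}$. From the expansion
\[
g_{ra} = - K^{-1} \na_a \hat f - K^{-2} \sigma_{AB} \pl_a g^A g^{(-1)B} + O(r^{-1}),
\]
the equation $g_{ra}^{(0)}=0$ reads $-K^{-1}\na_a\hat f - K^{-2}\sigma_{AB}\pl_a g^A g^{(-1)B}=0$, i.e.
\[
\sigma_{AB}\,\pl_a g^A\, g^{(-1)B} = -K\,\na_a\hat f .
\]
For each fixed point this is a linear $2\times 2$ system for the unknowns $g^{(-1)B}$, $B=2,3$, with coefficient matrix $M_{aB} := \sigma_{AB}\pl_a g^A$.

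The key step is to invert this system by exploiting \eqref{conformal} rather than computing $M^{-1}$ by hand. I would look for the solution in the form $g^{(-1)B} = \lambda^c\,\pl_c g^B$, that is, as the pushforward under $g$ of an undetermined source vector $\lambda^c\pl_c$. Substituting, the left-hand side becomes $\sigma_{AB}\pl_a g^A\pl_c g^B\,\lambda^c$, which by the conformal condition $\sigma_{AB}\pl_a g^A\pl_c g^B = K^2\sigma_{ac}$ collapses to $K^2\sigma_{ac}\lambda^c = K^2\lambda_a$. Matching with $-K\na_a\hat f$ forces $\lambda_a = -K^{-1}\na_a\hat f$, hence $\lambda^c = -K^{-1}\na^c\hat f$. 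Finally I would convert to the rescaled metric via $\bar\na^c\hat f = \bar\sigma^{cd}\pl_d\hat f = K^{-2}\na^c\hat f$, giving $\lambda^c = -K\bar\na^c\hat f$ and therefore the claimed formula $g^{(-1)A} = -K\,\bar\na^c\hat f\,\pl_c g^A$.

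I do not expect any serious obstacle. The only care required is the index bookkeeping between source indices ($a,b,c$, raised and lowered with $\sigma_{ab}$ or $\bar\sigma_{ab}$) and target indices ($A,B$, raised and lowered with $\sigma_{AB}$), together with the fact that the conformal factor $K$ intertwines $\na$ and $\bna$. Since $g$ is a conformal diffeomorphism with $K\neq 0$, the relation $M_{aB}\pl_c g^B = K^2\sigma_{ac}$ shows that $M_{aB}$ is invertible, so the pushforward ansatz yields the unique solution and no existence or uniqueness question arises.
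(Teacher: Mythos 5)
Your proposal is correct and is essentially the argument the paper intends: the paper simply states that the formula ``can be verified by \eqref{conformal}'', i.e.\ one plugs $g^{(-1)A}=-K\bna^c\hat f\,\pl_c g^A$ into $\sigma_{AB}\pl_a g^A g^{(-1)B}=-K\na_a\hat f$ and collapses the pairing via $\sigma_{AB}\pl_a g^A\pl_c g^B=K^2\sigma_{ac}$, exactly the step at the heart of your derivation. The only difference is cosmetic: you run the computation forward (ansatz plus invertibility of $\sigma_{AB}\pl_a g^A$, which also settles uniqueness), whereas the paper phrases it as a verification.
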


Next, from $g_{rr}^{(0)}=0$ and the above formula, we obtain 
\begin{lemma}
\begin{align*}
a^{(-1)} = - \frac{1}{2} K |\bar\na \hat f|^2
\end{align*}
\end{lemma}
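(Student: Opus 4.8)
The plan is to read off $a^{(-1)}$ directly from the requirement that the new coordinate system $(u,r,x^a)$ be Bondi-Sachs. In the Bondi-Sachs gauge \eqref{Bondi-Sachs} the metric carries no $dr^2$ term, so $g_{rr}\equiv 0$ for all $r$; in particular every coefficient in its $1/r$-expansion must vanish. Since the chain-rule computation above gives
\[
g_{rr} = \frac{1}{r^2}\lt( 2K^{-1} a^{(-1)} + K^{-2} \sigma_{AB} g^{(-1)A} g^{(-1)B} \rt) + O(r^{-3}),
\]
the leading condition $g_{rr}^{(-2)}=0$ is purely algebraic and determines $a^{(-1)}$ once $g^{(-1)A}$ is known.

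First I would solve this leading equation for $a^{(-1)}$, obtaining $a^{(-1)} = -\tfrac{1}{2} K^{-1} \sigma_{AB} g^{(-1)A} g^{(-1)B}$. Next I would substitute the formula $g^{(-1)A} = -K \bar\na^c \hat f\, \pl_c g^A$ from Lemma \ref{g(-1)}, which turns the right-hand side into $-\tfrac{1}{2} K \,\bar\na^c\hat f\,\bar\na^d\hat f\,\sigma_{AB}\pl_c g^A\pl_d g^B$. The final simplification uses the conformal condition \eqref{conformal}, $\sigma_{AB}\pl_c g^A\pl_d g^B = \bar\sigma_{cd}$, so that the contraction $\bar\na^c\hat f\,\bar\na^d\hat f\,\bar\sigma_{cd}$ collapses to $|\bna\hat f|^2$, yielding $a^{(-1)} = -\tfrac{1}{2}K|\bna\hat f|^2$ as claimed.

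The computation is entirely routine, so there is no real obstacle; the only points requiring care are bookkeeping. One must confirm that $\rho^{(0)}$ and the subleading pieces of the transformation do not contribute at order $r^{-2}$ in $g_{rr}$ (they enter only at $O(r^{-3})$), and one must keep the index raising straight — the $g^{(-1)A}$ indices are contracted with $\sigma_{AB}$ while $\hat f$ is differentiated with $\bna$ (i.e.\ raised with $\bar\sigma$), and it is exactly the conformal factor relating $\sigma$ and $\bar\sigma$ that produces the single power of $K$ in the final answer.
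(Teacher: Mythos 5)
Your proposal is correct and follows exactly the paper's route: the paper also obtains $a^{(-1)}$ by setting the $r^{-2}$ coefficient of $g_{rr}$ to zero, solving $2K^{-1}a^{(-1)} + K^{-2}\sigma_{AB}g^{(-1)A}g^{(-1)B}=0$, and then substituting Lemma \ref{g(-1)} together with the conformal relation $\sigma_{AB}\pl_c g^A \pl_d g^B = \bar\sigma_{cd}$. Your bookkeeping remarks (that $\rho^{(0)}$ and subleading transformation terms only enter at $O(r^{-3})$, and that the conformal factor accounts for the single power of $K$) are accurate and match the paper's implicit computation.
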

Finally, we get
\begin{lem}\label{rho(0)}
\begin{align*}
\rho^{(0)} &= \frac{1}{2} \bar\Delta \hat f  
\end{align*}
\end{lem}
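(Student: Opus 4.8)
The plan is to determine $\rho^{(0)}$ from the determinant condition of the target Bondi--Sachs gauge, which at the order in question is the trace-free requirement $\sigma^{ab} g_{ab}^{(1)} = 0$ listed as the third item of the inductive scheme. Accordingly I would first read off $g_{ab}^{(1)}$ as the bracketed $O(r)$ coefficient in the expansion of $g_{ab}$ displayed above, and then contract that tensor with $\sigma^{ab}$, so that $\rho^{(0)}$ is recovered by solving a single scalar equation on $S^2$.

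Two observations make the contraction manageable. Since $S^2$ is two-dimensional, $\sigma^{ab}\sigma_{ab} = 2$, so the term $2K\rho^{(0)}\sigma_{ab}$ contributes exactly $4K\rho^{(0)}$; this is the only appearance of $\rho^{(0)}$, so the resulting equation is linear in it. Moreover, raising indices in the conformal relation \eqref{conformal} gives $\sigma^{ab}\pl_a g^A \pl_b g^B = K^2 \sigma^{AB}$, so the shear contribution contracts to $K\sigma^{AB} C_{AB}$, which vanishes because $C_{AB}$ is trace-free. Thus the shear drops out of the trace entirely, and $\rho^{(0)}$ is determined purely by $K$, $\hat f$, and the conformal geometry of $g$ -- a reassuring consistency check, since the left-hand side ought to be conformally covariant.

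Next I would insert $g^{(-1)A} = -K\bna^c\hat f\,\pl_c g^A = -K^{-1}\na^c\hat f\,\pl_c g^A$ from Lemma \ref{g(-1)} into the two remaining families of terms, namely the explicit $K^{-2}\na K\,\na\hat f$ terms and the two symmetric pieces $\sigma_{AB}\pl_a g^A\pl_b g^{(-1)B}$. Differentiating the product in the latter produces three kinds of contributions: a $\na K$ term from $\pl_b K^{-1}$, a Hessian $\na_b\na_c\hat f$ term from $\pl_b\na^c\hat f$, and a second-derivative-of-$g$ term involving $\pl_b\pl_c g^B$. To dispose of the last kind I would differentiate the conformal identity \eqref{conformal} once and covariantize, which expresses the contracted Hessian $\sigma^{ab}\sigma_{AB}\pl_a g^A\na_b\pl_c g^B$ in terms of $\na K$ and the Christoffel data of $\sigma$; this is the familiar fact that a conformal map of surfaces is harmonic, and it is precisely what allows the $g$-second-derivatives to recombine with the explicit gradient-of-$K$ terms.

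I expect all the $\na K\cdot\na\hat f$ contributions to cancel, while the Hessian term assembles, via $\sigma_{AB}\pl_a g^A\pl_c g^B = K^2\sigma_{ac}$, into the rough Laplacian $\sigma^{bd}\na_b\na_d\hat f = \Delta_\sigma\hat f$. The trace equation then reduces to $4K\rho^{(0)} = 2K^{-1}\Delta_\sigma\hat f$, that is $\rho^{(0)} = \tfrac12 K^{-2}\Delta_\sigma\hat f$, and the two-dimensional conformal rule $\bar\Delta = K^{-2}\Delta_\sigma$ (the gradient correction to the conformal Laplacian vanishing in dimension two) upgrades this to $\rho^{(0)} = \tfrac12\bar\Delta\hat f$, as claimed. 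The main obstacle is purely bookkeeping: correctly generating and then canceling the gradient-of-$K$ terms arising from three separate sources -- the explicit $K^{-2}\na K\,\na\hat f$ terms, the differentiation of $K^{-1}$ inside $g^{(-1)B}$, and the harmonicity relation for $\pl\pl g$ -- so that only the clean Hessian of $\hat f$ survives.
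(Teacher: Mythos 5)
Your strategy is in essence the paper's own (solve $\sigma^{ab}g^{(1)}_{ab}=0$ after inserting Lemma \ref{g(-1)}, with the shear dropping out by tracelessness and the gradient-of-$K$ terms cancelling), but your inventory of the order-$r$ terms has one concrete gap: besides the two families you list, the displayed expansion of $g_{ab}$ contains the term $K^{-2}\pl_D\sigma_{AB}\,g^{(-1)D}\pl_a g^A\pl_b g^B$, which you never account for. Its trace does not vanish in general coordinates: contracting with $\sigma^{ab}$ and using $\sigma^{ab}\pl_a g^A\pl_b g^B=K^2\sigma^{AB}$ gives $\sigma^{AB}\pl_D\sigma_{AB}\,g^{(-1)D}=2\Gamma^A_{AD}\,g^{(-1)D}$, which is nonzero (in standard spherical coordinates $\sigma^{AB}\pl_\theta\sigma_{AB}=2\cot\theta$). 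Moreover, this is exactly the piece that supplies the target Christoffel symbols needed by your own covariantization step: differentiating \eqref{conformal} in $x^c$ and tracing yields
\begin{equation*}
K^2\,\sigma^{AB}\pl_D\sigma_{AB}\,\pl_c g^D + 2\,\sigma^{ab}\sigma_{AB}\,\pl_a g^A\,\pl_b\pl_c g^B = 4K\na_c K + 2K^2\Gamma^a_{ac},
\end{equation*}
so solving for the $\pl\pl g$ contraction necessarily produces a $\pl_D\sigma_{AB}$ term on the other side. If that term is absent from your list of metric coefficients, nothing absorbs it, and the computation closes only up to the uncancelled remainder $2K^{-1}\na^c\hat f\,\Gamma^A_{AD}\pl_c g^D$; the coordinate dependence of this leftover is the signal of the miscount. (Working in normal coordinates centered at $g(x_0)$ would make the term vanish, but you do not invoke that, and then the harmonicity identity must be stated in the same coordinates.)

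The fix stays entirely within your scheme: include the missing family, observe that it combines with the $\pl_b\pl_c g^B$ terms into precisely the $\na\hat f$-contraction of the differentiated identity \eqref{conformal}, and then all $\na K$ contributions cancel as you predicted, leaving $4K\rho^{(0)}-2K^{-1}\Delta_\sigma\hat f=0$, i.e.\ $\rho^{(0)}=\frac{1}{2}\bar\Delta\hat f$. This is also how the paper proceeds, except that it does not trace first: it uses the Christoffel transformation law $\pl_b\pl_d g^B=\bar\Gamma^c_{bd}\pl_c g^B-\Gamma^B_{ED}\pl_b g^E\pl_d g^D$ together with $\pl_C\sigma_{AB}=\sigma_{AD}\Gamma^D_{BC}+\sigma_{BD}\Gamma^D_{AC}$ (exactly the pairing you are missing) to derive the full tensorial identity $g^{(1)}_{ab}=2K^{-1}\rho^{(0)}\bar\sigma_{ab}-2K^{-1}\bna_a\bna_b\hat f+K^{-1}C_{AB}\pl_a g^A\pl_b g^B$ before tracing; that untraced formula is then reused for the shear and news transformation \eqref{shear}--\eqref{news}, so the trace-first shortcut also discards information the paper needs later.
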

\begin{proof}
We compute 
\begin{align*}
\pl_b g^{(-1)B} &= - \na_b K \bna^c \hat f \pl_c g^B - K \pl_b(\bna^c \hat f) \pl_c g^B - K \bna^d \hat f \pl_b \pl_d g^B.
\end{align*}
By the transformation of Christoffel symbols (here $\bar\Gamma_{bd}^c$ stands for the Christoffel symbol of $\bar \sigma$ on $S^2$) 
\begin{equation} \bar\Gamma_{bd}^c = \pl_E g^c \lt( \Gamma_{AD}^E \pl_b g^A \pl_d g^D + \pl_b\pl_d g^E \rt), \end{equation}
we get
\begin{align*}
\pl_b\pl_d g^B = \bar\Gamma_{bd}^c \pl_c g^B - \Gamma_{ED}^B \pl_b g^E \pl_d g^D
\end{align*}
and
\begin{align*}
\pl_b g^{(-1)B} &= - \na_b K \bar\na^c \hat f \pl_c g^B - K \bar\na_b \bar\na^c \hat f \pl_c g^B + K \bna^d \hat f \Gamma_{ED}^B \pl_b g^E \pl_d g^D\\
&= K^{-1} \na_b K g^{(-1)B} - K \bna_b \bna^c \hat f \pl_c g^B - \Gamma_{ED}^B \pl_b g^E g^{(-1)D}.
\end{align*}
Since $\sigma_{AB} \pl_a g^A = \bar\sigma_{ac} \pl_B g^c$, we get
\begin{align*}
&K^{-2}\sigma_{AB} \lt( \pl_a g^A \pl_b g^{(-1)B} + \pl_a g^{(-1)A} \pl_b g^B \rt) \\
&= -2 K^{-1} \bna_a \bna_b \hat f - K^{-2} \na^b K \na_a \hat f - K^{-2} \na_a K \na_b \hat f  \\
&\quad + K^{-1} \bna^d \hat f \sigma_{AB} \lt( \pl_a g^A \Gamma_{ED}^B\ \pl_b g^E + \pl_b g^B \Gamma_{ED}^A \pl_a g^E \rt) \pl_d g^D 
\end{align*}
Since $\pl_C \sigma_{AB} = \sigma_{AD} \Gamma_{BC}^D + \sigma_{BD} \Gamma_{AC}^D$, we get 
\begin{align*}
g^{(1)}_{ab} &= 2K^{-1} \rho^{(0)} \bar\sigma_{ab} - 2K^{-1} \bna_a \bna_b \hat f + K^{-1} C_{AB} \pl_a g^A \pl_b g^B
\end{align*} 
and solve $\rho^{(0)}$ from $\sigma^{ab} g^{(1)}_{ab}=0$. 
\end{proof}

As a consequence, we obtain the transformation of the shear and news tensors.
\begin{cor}
The shear tensor transforms as 
\begin{align}\label{shear}
g^{(1)}_{ab}(u,x) = -K^{-1} \lt( 2 \bna_a \bna_b \hat f - \bar\Delta \hat f \bar\sigma_{ab} \rt) + K^{-1} C_{AB}(\hat f(u,x), g(x)) \pl_a g^A \pl_b g^B
\end{align} 
The news tensor transforms as \begin{align}\label{news}
\pl_u g^{(1)}_{ab} = \pl_{\bar{u}} C_{AB}(\hat f(u,x), g(x)) \pl_a g^A \pl_b g^B.
\end{align} 
\end{cor}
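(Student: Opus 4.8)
The plan is to read both formulae off the expression for $g^{(1)}_{ab}$ that was already assembled at the end of the proof of Lemma \ref{rho(0)}, namely $g^{(1)}_{ab} = 2K^{-1}\rho^{(0)}\bar\sigma_{ab} - 2K^{-1}\bna_a\bna_b\hat f + K^{-1}C_{AB}\pl_a g^A\pl_b g^B$. Substituting the value $\rho^{(0)}=\frac12\bar\Delta\hat f$ furnished by Lemma \ref{rho(0)} turns the first two terms into $-K^{-1}(2\bna_a\bna_b\hat f-\bar\Delta\hat f\,\bar\sigma_{ab})$, which is exactly the right-hand side of \eqref{shear}. So the shear formula requires no further computation; it is a direct consequence of Lemma \ref{rho(0)}.

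For the news tensor \eqref{news} I would apply $\pl_u$ to \eqref{shear}. The essential point is that $\pl_u\hat f=K$ depends only on $x^a$, so $\pl_u$ commutes with $\bna$, with $\bar\Delta$, and with contraction against $\bar\sigma$ (all of which have $u$-independent coefficients). Hence the inhomogeneous term differentiates to $-K^{-1}(2\bna_a\bna_b K-\bar\Delta K\,\bar\sigma_{ab})$. In the $C_{AB}$ term the only $u$-dependence enters through the evaluation point $\bar u=\hat f(u,x)$, so the chain rule gives $\pl_u[C_{AB}(\hat f,g)]=K\,\pl_{\bar u}C_{AB}$, and the prefactor $K^{-1}$ cancels this factor of $K$, leaving precisely $\pl_{\bar u}C_{AB}\,\pl_a g^A\pl_b g^B$.

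The crux is therefore to show that the remaining inhomogeneous piece vanishes, i.e. that the $\bar\sigma$-tracefree Hessian of the conformal factor is zero, $\bna_a\bna_b K=\frac12\bar\Delta K\,\bar\sigma_{ab}$. This is the one genuinely geometric (rather than computational) step, and I expect it to be the main obstacle. I would derive it from the conformal structure forced by \eqref{conformal}: since $\bar\sigma=\sigma_{AB}(g)\,\pl_a g^A\pl_b g^B=g^*\sigma$ is the pullback of the round metric under a diffeomorphism, it has constant Gauss curvature $1$, and $\sigma=K^{-2}\bar\sigma$ has curvature $1$ as well. A positive function relating two curvature-$1$ metrics on $S^2$ must lie in the span of the $\ell\le 1$ eigenfunctions, whose tracefree Hessian vanishes. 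Concretely, the Möbius/Lorentz description of conformal maps of $S^2$ shows that $K^{-1}$ is the restriction of a linear function, so $\na_a\na_b K^{-1}-\frac12\Delta_\sigma K^{-1}\,\sigma_{ab}=0$ with respect to $\sigma$; a short conformal-change-of-Hessian computation under $\bar\sigma=K^2\sigma$ then transfers this to $\bna_a\bna_b K-\frac12\bar\Delta K\,\bar\sigma_{ab}=0$. This identity is of the type recorded in Appendix A, and once it is in hand both \eqref{shear} and \eqref{news} follow as above.
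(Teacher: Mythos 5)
Your proposal is correct, and for the shear formula \eqref{shear} it is exactly the paper's (implicit) argument: the Corollary is stated as a consequence of Lemma \ref{rho(0)}, whose proof ends with $g^{(1)}_{ab} = 2K^{-1}\rho^{(0)}\bar\sigma_{ab} - 2K^{-1}\bna_a\bna_b\hat f + K^{-1}C_{AB}\pl_a g^A \pl_b g^B$, into which one substitutes $\rho^{(0)}=\frac{1}{2}\bar\Delta\hat f$. Where you go beyond the paper is the news formula \eqref{news}: the paper records no justification, while you correctly isolate the one nontrivial ingredient, namely that after differentiating \eqref{shear} in $u$ (legitimate, since $\bar\sigma$ and its connection are $u$-independent and $\pl_u\hat f=K$) the claim is equivalent to the vanishing of the $\bar\sigma$-tracefree Hessian of the conformal factor, $2\bna_a\bna_b K=\bar\Delta K\,\bar\sigma_{ab}$. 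Your proof of this identity is sound: $\bar\sigma=g^*\sigma$ and $\sigma=K^{-2}\bar\sigma$ both have Gauss curvature $1$, so the Lorentz/M\"obius description of conformal maps of $S^2$ gives that $K^{-1}$ is the restriction of a linear function, hence $\na_a\na_b K^{-1}=\frac{1}{2}\Delta K^{-1}\sigma_{ab}$; the conformal change of Hessian then yields $\bna_a\bna_b K=-K^{2}\na_a\na_b K^{-1}+K\,\sigma^{cd}\pl_c(\log K)\pl_d(\log K)\,\sigma_{ab}$, which is therefore a multiple of $\sigma_{ab}$ and hence of $\bar\sigma_{ab}$, i.e. equal to its pure-trace part. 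Two minor corrections to your framing: this identity is not actually among those ``recorded in Appendix A'' (that appendix concerns divergence identities for symmetric traceless 2-tensors), though it is close in spirit to the Remark in Section 2.2 that $\ell\le 1$ functions generate translations; and it is genuinely indispensable rather than a convenience, since without it the term $-K^{-1}u\lt(2\bna_a\bna_b K-\bar\Delta K\,\bar\sigma_{ab}\rt)$ hidden in \eqref{shear} would contribute a spurious $u$-independent piece to the news.
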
 

\subsection{Einstein vacuum equation and the outgoing radiation condition}
In the remaining portion of this article, we assume the spacetime satisfies the vacuum Einstein equations $Ric =0.$ See M\"{a}dler-Winicour \cite[Section 3]{MW} for a detailed summary of the Einstein equation in the Bond-Sachs formalism. In this subsection, we recall some of the formulae we will use in the remaining part of this article.

First, in a Bondi-Sachs coordinate, the vacuum Einstein equations imply
\begin{align}
U^{(-2)} &= - \frac{1}{16} C_{DE}C^{DE} \label{U-2}\\
W^{A(-2)} &= \frac{1}{2} \na_B C^{AB} \label{W-2}.
\end{align}
In the previous subsections, we assumed that the spacetime metric admitting an power series expansion in $r$. It is well-known that in Bondi-Sachs formalism, logarithmic terms typically appear in the expansion. 
To guarantee that there is no logarithmic term in the expansion so that we indeed have power series expansion, we assume the ``outgoing radiation condition.''

Combing with the Einstein equation, it implies
\begin{align}\label{dAB}
d_{AB} = \frac{1}{2} C_A^D C_{DB}
\end{align}

\section{Transformation of mass aspect and angular momentum aspect}
Suppose the metric tensors in $(u,r,x^a)$ and $(\bar u,\bar r, \bar x^A)$ coordinate system are given by
\begin{equation}\begin{split}\label{Bondi-Sachs2}
&- \mathcal{U} \mathcal{V} du^2 -2 \mathcal{U} dudr + g_{ab}\lt( dx^a + \mathcal{W}^a du \rt)\lt( dx^b + \mathcal{W}^b du \rt)\\
&\text{and}\\
&-UV d\bar u^2 - 2U d\bar u d\bar r + \bar r^2 h_{AB} \lt( d\bar x^A + W^A d \bar u\rt)\lt( d\bar x^B + W^B d \bar u \rt)
\end{split}\end{equation} 
respectively.

We first derive the transformation formula of the mass aspect. 
\begin{theorem}\label{mass aspect}
Transformation of the mass aspect is given by
\begin{equation}
\begin{split} \label{transformation_mass_general}
\mathfrak{m}(u,x) = K^3 \Big( m + \bna^a \hat f \pl_a g^A \pl_{\bar u} W^{(-2)}_A + \frac{1}{4} \bna^a \hat f \bna^b \hat f \pl_a g^A \pl_b g^B \pl_{\bar u} \pl_{\bar u} C_{AB} \\
+ \frac{1}{4} \bna^a \bna^b \hat f \pl_a g^A \pl_b g^B \pl_{\bar u}C_{AB}  \Big)
\end{split}
\end{equation}
where $m, \pl_{\bar u} W_A^{(-2)}, \pl_{\bar u}C_{AB}, \pl_{\bar u }\pl_{\bar u}C_{AB}$ are evaluated at $(\hat f(u,x), g(x))$.
\end{theorem}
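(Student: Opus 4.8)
The plan is to take the curvature viewpoint, which is the whole advantage of the first approach. By Proposition \ref{curvature} applied in the unbarred system and the Einstein relation \eqref{U-2}, the mass aspect is recovered from a single curvature coefficient,
\[\mathfrak{m} = \tfrac12 \lim_{r\to\infty} r^3 R_{urru} + \pl_u \mathcal{U}^{(-2)}, \qquad \mathcal{U}^{(-2)} = -\tfrac{1}{16}\sigma^{ac}\sigma^{bd}g^{(1)}_{ab}g^{(1)}_{cd},\]
since $\tfrac12 g_{uu}^{(-1)}$ is the Bondi mass aspect. The Riemann tensor is a spacetime tensor, so $R_{urru}=R(\vphi_*\pl_u,\vphi_*\pl_r,\vphi_*\pl_r,\vphi_*\pl_u)$. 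I would substitute the pushforwards \eqref{tangent_vec_ext_1}, with $g^{(-1)A}$, $a^{(-1)}$, $\rho^{(0)}$ supplied by Lemma \ref{g(-1)} and the two lemmas following it, into this multilinear expression, expand in the barred frame, and read off the coefficient of $r^{-3}$ using the barred components of Proposition \ref{curvature}. All barred curvature components may be evaluated at $(\hat f, g^A)$, since the $r^{-1}$ corrections to $\bar u,\bar x^A,\bar r$ only contribute at order $r^{-4}$.

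The core step is an order count of $R(X,Y,Y,X)$ with $X=\vphi_*\pl_u=K\pl_{\bar u}+\cdots$ and $Y=\vphi_*\pl_r=K^{-1}\pl_{\bar r}+\cdots$, matching the $\bar r$-order of each barred curvature component against the $r$-orders of its accompanying coefficients ($K,\pl_u\rho^{(0)}\sim r^0$; the angular part of $\vphi_*\pl_u$ is $\sim r^{-1}$; the $\pl_{\bar u}$ and angular parts of $\vphi_*\pl_r$ are $\sim r^{-2}$). This shows that exactly three families survive at order $r^{-3}$: the leading $R_{\bar u\bar r\bar r\bar u}$, giving $K^3(m-\pl_{\bar u}U^{(-2)})$; the two terms obtained by replacing one $\pl_{\bar r}$ in $\vphi_*\pl_r$ by its angular component $-r^{-2}g^{(-1)A}\pl_A$, which pair with $R_{\bar uA\bar r\bar u}=\bar r^{-1}\pl_{\bar u}W^{(-2)}_A$ to give $K^3\,\bna^a\hat f\,\pl_a g^A\,\pl_{\bar u}W^{(-2)}_A$; and the term replacing both $\pl_{\bar r}$'s, which pairs with $R_{\bar uAB\bar u}=\tfrac{\bar r}{2}\pl_{\bar u}\pl_{\bar u}C_{AB}$ to give $\tfrac14 K^3\,\bna^a\hat f\,\bna^b\hat f\,\pl_a g^A\,\pl_b g^B\,\pl_{\bar u}\pl_{\bar u}C_{AB}$. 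It is essential that the components $R_{A\bar r\bar r\bar u}\sim N_A\bar r^{-3}$ and $R_{A\bar rB\bar u}\sim m\sigma_{AB}\bar r^{-1}$ cannot be reached at order $r^{-3}$, so neither the angular momentum aspect nor a stray $m\sigma_{AB}$ contaminates $\mathfrak{m}$.

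Finally I would transform the correction $\pl_u\mathcal{U}^{(-2)}=-\tfrac18\sigma^{ac}\sigma^{bd}g^{(1)}_{ab}\pl_u g^{(1)}_{cd}$ by inserting the shear formula \eqref{shear} for $g^{(1)}_{ab}$ and the news formula \eqref{news} for $\pl_u g^{(1)}_{cd}$, contracting the Jacobian factors with the conformal identities $\sigma^{ab}=K^2\bar\sigma^{ab}$ and $\bar\sigma^{ab}\pl_a g^A\pl_b g^B=\sigma^{AB}$ implied by \eqref{conformal}. The trace piece $\bar\Delta\hat f\,\bar\sigma_{ab}$ drops out because $C_{AB}$ is traceless; the Hessian piece $\bna_a\bna_b\hat f$ yields $\tfrac14 K^3\,\bna^a\bna^b\hat f\,\pl_a g^A\,\pl_b g^B\,\pl_{\bar u}C_{AB}$; and the $C_{AB}$ piece produces $-\tfrac18 K^3 C^{DE}\pl_{\bar u}C_{DE}$, which cancels precisely the $-K^3\pl_{\bar u}U^{(-2)}=\tfrac18 K^3 C^{DE}\pl_{\bar u}C_{DE}$ left by the leading curvature family. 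Collecting the survivors produces \eqref{transformation_mass_general}.

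I expect the main obstacle to be the bookkeeping of this multilinear expansion rather than any single idea: fixing every curvature-symmetry sign (for instance $R_{\bar uAB\bar u}=+\tfrac{\bar r}{2}\pl_{\bar u}\pl_{\bar u}C_{AB}$), combining the factor $2$ from the two symmetric $Y$-slots with the overall $\tfrac12$, and certifying that no subleading term of the pushforwards or of the $\bar r$-expansion secretly reaches order $r^{-3}$. Once the order count is settled, the remaining contractions are routine uses of \eqref{conformal} and the tracelessness of $C_{AB}$.
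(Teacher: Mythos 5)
Your proposal is correct and takes essentially the same route as the paper's proof: both exploit the tensoriality of curvature via $R_{urru}=\bar R(\vphi_*\pl_u,\vphi_*\pl_r,\vphi_*\pl_r,\vphi_*\pl_u)$, isolate the same three surviving barred curvature families at order $r^{-3}$ (using Proposition \ref{curvature} and Lemma \ref{g(-1)}), and handle the quadratic-shear correction $\pl_u\mathcal{U}^{(-2)}$ through \eqref{shear}--\eqref{news}, obtaining the same cancellation of the $C^{DE}\pl_{\bar u}C_{DE}$ terms against $-K^3\pl_{\bar u}U^{(-2)}$. The only difference is organizational (you move the correction term to the other side before expanding, while the paper keeps it on the left and invokes a single identity), not mathematical.
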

Recalling the definition of $W_A^{(-2)}$ from \eqref{W-2}, we obtain \eqref{final_mass_formula}. 
\begin{proof}
Let $\vphi$ be a BMS transformation. The tensorial nature of the curvature tensor implies
\begin{align*}
R_{urru} &= \tilde{R} (\vphi_*(\pl_u), \vphi_*(\pl_r), \vphi_*(\pl_r), \vphi_*(\pl_u)).
\end{align*}
Here $\bar R$ denotes the curvature tensor in $(\bar u, \bar r, \bar x^A)$ coordinate system. From \eqref{tangent_vec_ext_1} and Proposition \ref{curvature}, we obtain
\begin{align*}
R_{urru} = \bar R_{\bar u \bar r \bar r \bar u} - K \frac{g^{(-1)A}}{r^2} \bar R_{\bar u \bar r A \bar u} - K \frac{g^{(-1)A}}{r^2} \bar R_{\bar u A \bar r \bar u} + K^2 \frac{1}{r^4} g^{(-1)A} g^{(-1)B} \bar R_{\bar u AB \bar u} + O(r^{-4}).
\end{align*}
By Proposition \ref{curvature}, we have
\begin{align*}
\frac{1}{r^3} \lt( 2 \mathfrak{m} - 2 \pl_u \mathcal{U}^{(-2)} \rt) + O(r^{-4}) &= \frac{1}{\bar r^3} \lt( 2m - 2 \pl_{\bar u} U^{(-2)} \rt) \\
&\quad - \frac{2K}{r^2} g^{(-1)A} \cdot \frac{1}{\bar r} \pl_{\bar u} W^{(-2)}_A \\
&\quad + \frac{K^2}{r^4} g^{(-1)A}g^{(-1)B} \cdot \frac{\bar r}{2} \pl_{\bar u} \pl_{\bar u} C_{AB} + O(r^{-4}). 
\end{align*}
We obtain
\begin{align*}
\mathfrak{m} + \frac{1}{16} \pl_u \lt( \sigma^{ac} \sigma^{bd} g^{(1)}_{ab} g^{(1)}_{cd} \rt) &= K^3 \lt( m + \frac{1}{16} \pl_{\bar u} (C_{DE}C^{DE}) \rt) + K^3 \bna^a \hat f \pl_a g^A \pl_{\bar u} W^{(-2)}_A  \\
&\quad + \frac{1}{4 }K^3 \bna^a \hat f \bna^b \hat f \pl_a g^A \pl_b g^B \pl_{\bar u} \pl_{\bar u} C_{AB}
\end{align*}
The assertion follows by noting the following equality which can be derived from \eqref{shear}.
\begin{align*}
\pl_u \lt( \sigma^{ac} \sigma^{bd} g^{(1)}_{ab} g^{(1)}_{cd} \rt) = K^3 \pl_{\bar u} (C_{DE}C^{DE}) - 4 K^3 \pl_{\bar u}C_{AB} \pl_a g^A \pl_b g^B \bna^a \bna^b \hat f 
\end{align*}
\end{proof}

The above formula gets simplified if written in terms of the modified mass aspect function $m - \frac{1}{4}\na^A\na^B C_{AB}$ discussed in \cite{W}. 
\begin{theorem}\label{modified}

\begin{align}\label{modified2}
\mathfrak{m} - \frac{1}{4} \na^a\na^b g^{(1)}_{ab} = K^3 \lt( m - \frac{1}{4} \na^A \na^B C_{AB} + \frac{1}{4} \bar\Delta(\bar\Delta + 2)\hat f \rt), 
\end{align} where $m$ and  $\na^A \na^B C_{AB}$ on the right hand side are evaluated at $(\bar{u}=\hat f(u, x^a), \bar{x}^A=g^A(x^a))$. 
\end{theorem}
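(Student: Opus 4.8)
The plan is to obtain \eqref{modified2} directly from the mass aspect formula \eqref{transformation_mass_general} of Theorem \ref{mass aspect} by adding the correction term $-\frac14\na^a\na^b g^{(1)}_{ab}$ and inserting the explicit shear \eqref{shear}. Since $g^{(1)}_{ab}$ is symmetric and traceless with respect to $\sigma_{ab}$, the whole task reduces to computing the double divergence $\na^a\na^b g^{(1)}_{ab}$ with respect to $\sigma$ and comparing it with the news-dependent terms already present in \eqref{transformation_mass_general}. The organizing principle is that the modified mass aspect is tailored so that every term containing $\pl_{\bar u}C_{AB}$, $\pl_{\bar u}\pl_{\bar u}C_{AB}$, or $\pl_{\bar u}W^{(-2)}_A$ drops out, leaving only a Coulombic term $-\frac14 K^3\na^A\na^B C_{AB}$ and the pure gauge term $\frac14 K^3\bar\Delta(\bar\Delta+2)\hat f$.

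Concretely, I would split \eqref{shear} into its Hessian part $P_{ab}=-K^{-1}(2\bna_a\bna_b\hat f-\bar\Delta\hat f\,\bar\sigma_{ab})$ and its pulled-back shear part $K^{-1}\hat C_{ab}$, where $\hat C_{ab}=C_{AB}\pl_a g^A\pl_b g^B$, and compute the double $\sigma$-divergence of each. The key tool is the conformal covariance of the divergence of symmetric traceless $2$-tensors on a surface (one of the identities recorded in Appendix A): for such a tensor one has $\na^b T_{ab}=K^2\bna^b T_{ab}$, and a second divergence differs from the $\bna$-double divergence only by curvature-free remainders proportional to $\na K$. Since $g\colon(S^2,\bar\sigma)\to(S^2,\sigma)$ is an isometry, this converts $\bna^a\bna^b\hat C_{ab}$ into the pullback $(\na^A\na^B C_{AB})\circ g$, and converts $\bna^a\bna^b$ of the $\bar\sigma$-traceless Hessian of $\hat f$ into $\frac12\bar\Delta(\bar\Delta+2)\hat f$, via the surface identity $\na^a\na^b(\na_a\na_b h-\frac12\Delta h\,\sigma_{ab})=\frac12\Delta(\Delta+2)h$ on the unit sphere, whose $+2$ is twice the Gauss curvature. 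These leading contributions are exactly the two surviving terms on the right-hand side of \eqref{modified2}.

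The essential mechanism behind the cancellation is the chain rule through the argument $\bar u=\hat f(u,x)$: because $C_{AB}$ on the target is evaluated at $\bar u=\hat f(u,x)$, each $\na_c$ acting on $\hat C_{ab}$ also hits this argument and produces a factor $\pl_{\bar u}C_{AB}\,\pl_c\hat f$, while the second divergence produces $\pl_{\bar u}\pl_{\bar u}C_{AB}\,\pl_c\hat f\,\pl_d\hat f$ together with $\pl_{\bar u}C_{AB}\,\bna_c\bna_d\hat f$. Using $\bna^a\hat f=K^{-2}\na^a\hat f$ and $W^{(-2)}_A=\frac12\na^B C_{AB}$ from \eqref{W-2}, a direct comparison shows that these chain-rule terms, together with the $\na K$-remainders from the two divergence conversions, reproduce precisely the three news-dependent terms of \eqref{transformation_mass_general}, so that all $\pl_{\bar u}$-dependence cancels in the combination $\mathfrak m-\frac14\na^a\na^b g^{(1)}_{ab}$ and only the Coulombic and gauge terms remain.

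I expect the principal obstacle to be the bookkeeping of the sub-leading corrections generated by the conformal factor $K$: both divergence conversions and the differentiation of the explicit $K^{-1}$ prefactors in \eqref{shear} produce terms proportional to $\na K$, and these must be shown to combine with the chain-rule remainders so as to cancel the news-dependent terms exactly. Keeping this tractable requires packaging the computation into the two identities of Appendix A rather than expanding in local coordinates; verifying that the coefficients match term by term — in particular for the $\pl_{\bar u}C_{AB}$ and $\pl_{\bar u}W^{(-2)}_A$ contributions, where both the chain rule and the $\na K$ corrections enter — is the delicate part.
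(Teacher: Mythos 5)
Your proposal is correct and is essentially the paper's own proof: the paper likewise obtains \eqref{modified2} by taking the double divergence of the shear law \eqref{shear}, using conformal covariance of the divergence of symmetric traceless tensors, the chain rule through $\bar u=\hat f(u,x)$, and the sphere identity for the traceless Hessian, and then canceling every news-dependent term against those in Theorem \ref{mass aspect} via $W^{(-2)}_A=\tfrac12\na^BC_{AB}$. The only differences are bookkeeping: the paper converts both divergences in a single step through the identity $\na^a\na^bg^{(1)}_{ab}=K^3\bna^a\bna^b\lt(Kg^{(1)}_{ab}\rt)$, so the $\na K$ remainders you flag as the delicate part never appear (and note that the divergence-covariance identity you invoke, while standard and true, is not one of the identities recorded in Appendix A, which contains only Lemmas \ref{symmetric_traceless} and \ref{multiple of metric}).
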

\begin{proof}
It is straightforward to check the double divergence of symmetric traceless tensor transforms as
\begin{align*}
\na^a \na^b g^{(1)}_{ab} &= K^3 \bna^a \bna^b (K g^{(1)}_{ab}).
\end{align*}
By Corollary \ref{shear}, we have
\begin{align*}
\bna^a \bna^b (K g^{(1)}_{ab}) &= \bna^a \lt( \pl_{\bar u}C_{AB} \bna^b \hat f \pl_a g^A \pl_b g^B + \na^B C_{AB} \pl_a g^A - \bna_a (\bar\Delta + 2)\hat f \rt)\\
&= \pl_{\bar u}\pl_{\bar u}C_{AB} \bna^a \hat f \bna^b \hat f \pl_a g^A \pl_b g^B + \pl_{\bar u} \na^A C_{AB} \bna^b \hat f \pl_b g^B \\
&\quad + \pl_{\bar u} C_{AB} \bna^a \bna^b \hat f \pl_a g^A \pl_b g^B\\
&\quad +\pl_{\bar u} \na^B C_{AB} \bna^a \hat f \pl_a g^A + \na^A \na^B C_{AB} - \bar\Delta (\bar \Delta + 2)\hat f. 
\end{align*}
By Theorem \ref{mass aspect} and $W_A^{(-2)} = \frac{1}{2} \na^B C_{AB}$, the assertion follows.
\end{proof}

Next we derive the transformation formula of the angular momentum aspect.
\begin{theorem}\label{angular momentum aspect}
Transformation of the angular momentum aspect is given by
\begin{equation}
\begin{split} \label{transformation_angular_general}
\mathfrak{N}_a(u,x) &= K^2 \pl_a g^A N_A + 3mK^2 \na_a \hat f\\
&\quad + \frac{3}{2}K \pl_a g^A \lt( \na_B W^{(-2)}_A - \na_A W^{(-2)}_B \rt)g^{(-1)B} - \frac{3}{2}K^2 \pl_a g^A \pl_{\bar u} W^{(-2)}_A |\bna\hat f|^2\\
&\quad - \frac{3}{4}K \pl_{\bar u}C^B_D C_{AB} \pl_a g^A g^{(-1)D} + \frac{1}{2} \pl_{\bar u} \pl_{\bar u} C_{AB} g^{(-1)A}g^{(-1)B} \na_a \hat f \\
&\quad  - 3K \pl_{\bar u} W^{(-2)}_A g^{(-1)A} \na_a \hat f + \frac{1}{4} K \pl_{\bar u} \pl_{\bar u} C_{AB} \pl_a g^A g^{(-1)B} |\bna \hat f|^2
\end{split}
\end{equation}
where $N_A, m, C_{AB}, \pl_{\bar u} W_A^{(-2)}, \pl_{\bar u}C_{AB}, \pl_{\bar u}\pl_{\bar u}C_{AB}$ are evaluated at $(\hat f(u,x), g(x))$, $g^{(-1)A}$ is given by \eqref{g^-1}, and $W^{(-2)}_A$ is given by \eqref{W-2}.
\end{theorem}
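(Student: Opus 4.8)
The plan is to carry over the curvature method used for Theorem \ref{mass aspect}, now built on the identification $\mathfrak{N}_a = \lim_{r \rw \infty} r^3 R_{arru}$ (the unbarred counterpart of the $R_{Arru}$ entry in Proposition \ref{curvature}). The starting identity is the tensoriality of the Riemann tensor,
\[
R_{arru} = \tilde R\lt(\vphi_*(\pl_a), \vphi_*(\pl_r), \vphi_*(\pl_r), \vphi_*(\pl_u)\rt),
\]
into which I would substitute the pushforwards \eqref{tangent_vec_ext_1}, evaluate every resulting barred curvature component by Proposition \ref{curvature}, expand $\bar r = K^{-1} r + \rho^{(0)} + o(1)$ so that $\bar r^{-k} = K^k r^{-k}(1 + O(r^{-1}))$, and read off the coefficient of $r^{-3}$.

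First I would extract the contribution in which all three pushforwards are taken at top order. Here $\vphi_*(\pl_r) \approx K^{-1} \pl_{\bar r}$ and $\vphi_*(\pl_u) \approx K \pl_{\bar u}$, while the $\pl_{\bar r}$ piece of $\vphi_*(\pl_a)$ drops out by the antisymmetry of $\tilde R$ in its first pair of arguments. What survives is $K^{-1} \na_a \hat f\, \bar R_{\bar u \bar r \bar r \bar u} + K^{-1}\pl_a g^A\, \bar R_{A \bar r \bar r \bar u}$, which yields $K^2 \pl_a g^A N_A + 2K^2 \na_a \hat f \,(m - \pl_{\bar u} U^{(-2)})$. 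This already produces the $N_A$ term but only two-thirds of the expected $3m\na_a\hat f$, a clear signal that subleading cross-terms contribute at the same order.

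The bulk of the argument is the systematic harvesting of those cross-terms. Because $\vphi_*(\pl_a)$ carries an $O(r)$ component along $\pl_{\bar r}$ while $\vphi_*(\pl_r)$ and $\vphi_*(\pl_u)$ have genuine $O(r^{-2})$ and $O(r^{-1})$ corrections, and because the barred curvature components range from the growing $\bar R_{\bar u A \bar u B} = O(r)$ down to $\bar R_{\bar u \bar r \bar r \bar u} = O(r^{-3})$, many products of one leading and one subleading factor land at order $r^{-3}$. I would organize the bookkeeping by which barred component appears, discarding vanishing combinations by the pair antisymmetries of $\tilde R$. Two mechanisms are essential: pairing the angular piece $\pl_a g^A \pl_A$ of $\vphi_*(\pl_a)$ with one leading and one subleading $\pl_r$-pushforward brings in $\bar R_{A \bar r B \bar u}$, whose $m\sigma_{AB}$ entry --- contracted via $\pl_a g^A g^{(-1)}_A = -K \na_a \hat f$, which follows from Lemma \ref{g(-1)} and \eqref{conformal} --- supplies the missing $K^2 m \na_a \hat f$ and upgrades the coefficient to $3m$; and pairing the growing components $\bar R_{\bar u A \bar u B}$ with two subleading pushforward factors (carrying $a^{(-1)}$ and $g^{(-1)A}$ from Lemmas \ref{g(-1)}--\ref{rho(0)}) produces the $\pl_{\bar u}\pl_{\bar u} C_{AB}$ terms. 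The remaining components $\bar R_{AB\bar r\bar u}$, $\bar R_{A\bar r B\bar u}$, and $\bar R_{A\bar u\bar r\bar u}$ deliver, respectively, the $\na_A W^{(-2)}_B - \na_B W^{(-2)}_A$, the $C\,\pl_{\bar u}C$, and the $\pl_{\bar u}W^{(-2)}_A$ terms. To reach the stated form \eqref{transformation_angular_general} I would then use the vacuum relations \eqref{U-2}, \eqref{W-2}, \eqref{dAB} to re-express the $\pl_{\bar u}U^{(-2)}$ and $\pl_{\bar u}d_{AB}$ contributions through the shear and its $\bar u$-derivatives, combine like terms via the symmetric-traceless identities of Appendix A, and finally insert $W^{(-2)}_A = \tfrac12 \na^B C_{AB}$ and the explicit $g^{(-1)A}$ of \eqref{g^-1} to obtain \eqref{final_angular_formula}.

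The main obstacle is precisely this bookkeeping. Unlike the mass-aspect computation --- where $\vphi_*(\pl_u)$ and $\vphi_*(\pl_r)$ contribute no awkward high-order pieces and the leading term already carries the answer --- here the $\pl_{\bar r}$ growth of $\vphi_*(\pl_a)$ together with the $O(r)$ curvature components forces one to carry every pushforward one full order beyond leading and to assemble a genuinely large number of $r^{-3}$ contributions without sign or combinatorial error, then to recognize the nontrivial combinations (notably the reabsorption of the scalar $\pl_{\bar u}U^{(-2)}$ piece) that collapse the raw expression into the compact formula.
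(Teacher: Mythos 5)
Your proposal follows essentially the same route as the paper's own proof of Theorem \ref{angular momentum aspect}: tensoriality of the curvature applied to $R_{arru}$, substitution of the pushforwards \eqref{tangent_vec_ext_1}, extraction of the $r^{-3}$ coefficient via Proposition \ref{curvature} (with $\bar r = K^{-1}r + O(1)$), and final simplification using Lemma \ref{symmetric_traceless} together with the vacuum relations \eqref{U-2}, \eqref{W-2}, \eqref{dAB}. The specific mechanisms you single out are the correct ones and match the paper's computation: the $m\sigma_{AB}$ entry of $\bar R_{A \bar r B \bar u}$, contracted through $\sigma_{AB}\partial_a g^A g^{(-1)B} = -K\nabla_a \hat f$, upgrades the coefficient from $2m$ to $3m$, and the pairings of $\bar R_{\bar u A \bar u B}$ with the $a^{(-1)}$ and $g^{(-1)A}$ corrections produce the $\partial_{\bar u}\partial_{\bar u} C_{AB}$ terms, exactly as in the paper.
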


Recalling the definition of $W_A^{(-2)}$ from \eqref{W-2} and the definition of $g^{(-1)A}$ from \eqref{g(-1)}, we obtain \eqref{final_angular_formula}. 
\begin{proof}
Let $\vphi$ be a BMS transformation. The tensorial nature of curvature tensor implies
\[ R_{arru} = \bar{R} \lt( \vphi_*(\pl_a), \vphi_*(\pl_r), \vphi_*(\pl_r), \vphi_*(\pl_u) \rt).\]
Here $\bar R$ denotes the curvature tensor in $(\bar u, \bar r, \bar x^A)$ coordinate system. By \eqref{tangent_vec_ext_1} and Proposition \ref{curvature}, we obtain 
\begin{align*}
R_{arru} &= K^{-1} \na_a \hat f \bar R_{\bar u \bar r \bar r \bar u} - \frac{1}{r^2} \na_a \hat f g^{(-1)A} \bar R_{\bar u \bar r A \bar u}\\
&\quad - \frac{1}{r^2} \na_a \hat f g^{(-1)A} \bar R_{\bar u A \bar r \bar u} + \frac{1}{r^4} K \na_a \hat f g^{(-1)A} g^{(-1)B} \bar R_{\bar u AB \bar u}\\
&\quad - \frac{1}{r^2} a^{(-1)} \pl_a g^A \bar R_{A\bar u \bar r \bar u} + \frac{1}{r^4} K a^{(-1)} \pl_a g^A g^{(-1)B} \bar R_{A \bar u B \bar u}\\
&\quad + K^{-1} \pl_a g^A \bar R_{A \bar r \bar r \bar u} - \frac{1}{r^2} \pl_a g^A g^{(-1)B} \bar R_{A \bar r B \bar u}\\
&\quad - \frac{1}{r^2} \pl_a g^A g^{(-1)B} \bar R_{AB \bar r \bar u} + \frac{1}{r^4} K \pl_a g^A g^{(-1)B} g^{(-1)D} \bar R_{ABD \bar u} + O(r^{-4}).
\end{align*}
By Proposition \ref{curvature}, we obtain
\begin{align*}
&\frac{3}{2} \lt( g_{ua}^{(-1)} - \pl_a \mathcal{U}^{(-2)} \rt) \\
&= K^2 \lt( 2m - 2 \pl_{\bar u} U^{(-2)}\rt)\na_a \hat f - 2K \pl_{\bar u} W^{(-2)}_A g^{(-1)A} \na_a \hat f\\
&\quad - 2K \pl_{\bar u} W^{(-2)}_A g^{(-1)A} \na_a \hat f + \frac{1}{2} \pl_{\bar u} C_{AB} g^{(-1)A} g^{(-1)B} \na_a \hat f\\
&\quad - \frac{1}{2}K^2 \pl_{\bar u} W^{(-2)}_A \pl_a g^A |\bna \hat f|^2 + \frac{1}{4} K \pl_{\bar u} \pl_{\bar u}C_{AB} \pl_a g^A g^{(-1)B} |\bna\hat f|^2 \\
&\quad + K^2 \pl_a g^A \cdot \frac{3}{2} \lt( \bar g^{(-1)}_{\bar u A} - \pl_A U^{(-2)} \rt) \\
&\quad - K\pl_a g^A g^{(-1)B} \lt( m\sigma_{AB} - \frac{1}{4} C_{BD} \pl_{\bar{u}} C^D_A + \frac{1}{2} \pl_{\bar u} d_{AB} + \frac{1}{2} \na_A W^{(-2)}_B - \frac{1}{2} \na_B W^{(-2)}_A \rt)\\
&\quad - K \pl_a g^A g^{(-1)B} \lt( -\frac{1}{4} C_{BD} \pl_{\bar u}C^D_A + \frac{1}{4} C_{AD} \pl_{\bar u}C^D_B + \na_A W^{(-2)}_B - \na_B W^{(-2)}_A \rt) \\
&\quad + \pl_a g^A g^{(-1)B}g^{(-1)D} \cdot \frac{1}{2} \pl_{\bar u} (\na_A C_{BD} - \na_B C_{AD})
\end{align*}
By Lemma \ref{symmetric_traceless}, the last term is equal to \[ -K^2 \pl_{\bar u} W^{(-2)}_A \pl_a g^A |\bna\hat f|^2 - K \pl_{\bar u} W^{(-2)}_B g^{(-1)B} \na_a \hat f \] and we arrive at
\begin{align*}
\mathfrak{N}_a &=\frac{3}{2}\lt( g_{ua}^{(-1)} + \frac{1}{16} \pl_a \lt( \sigma^{bd}\sigma^{ce} g^{(1)}_{bc} g^{(1)}_{de} \rt) \rt) \\
&= K^2 \lt( 3m - 2 \pl_{\bar u} U^{(-2)}\rt)\na_a \hat f - 3K \pl_{\bar u} W^{(-2)}_A g^{(-1)A} \na_a \hat f \\
&\quad + \frac{1}{2} \pl_{\bar u}\pl_{\bar u}C_{AB} g^{(-1)A} g^{(-1)B} \na_a \hat f - \frac{3}{2} K^2 \pl_{\bar u} W^{(-2)}_A \pl_a g^A |\bna\hat f|^2\\
&\quad + \frac{1}{4} K \pl_{\bar u} \pl_{\bar u}C_{AB} \pl_a g^A g^{(-1)B} |\bna\hat f|^2 + K^2 \pl_a g^A N_A + \frac{1}{8}K^2 \pl_{\bar u}(C_{DE}C^{DE}) \na_a \hat f\\
&\quad -\frac{K}{4}C_{AD}\pl_{\bar u}C^D_B \pl_a g^A g^{(-1)B} + \frac{K}{2} C_{BD} \pl_{\bar u}C^D_A \pl_a g^A g^{(-1)B}\\
&\quad - \frac{3}{2}K\pl_a g^A g^{(-1)B} \lt( \na_A W^{(-2)}_B - \na_B W^{(-2)}_A \rt)
\end{align*}
where we use the fact $d_{AB} = \frac{1}{2}C_A^D C_{DB} = \frac{1}{4} C^{DE}C_{DE} \sigma_{AB}$ for symmetric traceless 2-tensors.

Noting that the second to the last line is equal to \[ -\frac{3K}{4} C_{AD} \pl_{\bar u}C^D_B \pl_a g^A g^{(-1)B} - \frac{K^2}{4}\pl_{\bar u} (C_{DE}C^{DE})\na_a \hat f, \] we obtain
\begin{align*}
\mathfrak{N}_a &= K^2 \pl_a g^A N_A + 3mK^2 \na_a \hat f\\
&\quad + \frac{3}{2}K \pl_a g^A \lt( \na_B W^{(-2)}_A - \na_A W^{(-2)}_B \rt)g^{(-1)B} - \frac{3}{2}K^2 \pl_a g^A \pl_{\bar u} W^{(-2)}_A |\bna\hat f|^2\\
&\quad - \frac{3}{4}K \pl_{\bar u}C^B_D C_{AB} \pl_a g^A g^{(-1)D} + \frac{1}{2} \pl_{\bar u} \pl_{\bar u} C_{AB} g^{(-1)A}g^{(-1)B} \na_a \hat f \\
&\quad  - 3K \pl_{\bar u} W^{(-2)}_A g^{(-1)A} \na_a \hat f + \frac{1}{4} K \pl_{\bar u} \pl_{\bar u} C_{AB} \pl_a g^A g^{(-1)B} |\bna \hat f|^2\\
&\quad - 2K^2 \pl_{\bar u} U^{(-2)} \na_a \hat f - \frac{1}{8} K^2 \pl_{\bar u}(C_{DE}C^{DE}) \na_a \hat f.
\end{align*} 
By \eqref{U-2} the last line vanishes and this completes the proof.
\end{proof}

\section{Deriving the transformation formulae from the metric components}
In the above, we derived the transformation of the mass aspect function and the angular momentum aspect function by reading them off from the curvature tensors. This approach took advantage of the tensorial property and led to a simpler computation in comparison to the traditional approach. For completeness, we present the traditional approach in this section where the transformation formulae were read off from the metric components. In order to do so, we need to expand the metric to further orders. 
\subsection{Setup}
In this subsection, we outline the necessary steps to obtain the transformation formulae directly from metric coefficients. We have
\begin{align*}
\bar r^2 h_{AB} &= \bar r^2 \sigma_{AB}(\bar x) + \bar r C_{AB}(\bar u, \bar x) + \frac{1}{4} (C_{DE}C^{DE})(\bar u, \bar x) \sigma_{AB}(\bar x) + O(\bar r^{-1})\\
&= \lt( K^{-2}r^2 + 2 K^{-1} \rho^{(0)}r + 2K^{-1}\rho^{(-1)} + \rho^{(0)2} \rt) \cdot \\
&\qquad\qquad \lt( \sigma_{AB} + \frac{1}{r}\pl_D \sigma_{AB} g^{(-1)D} + \frac{1}{r^2} \pl_D\sigma_{AB} g^{(-2)D} + \frac{1}{2r^2} \pl_D\pl_E\sigma_{AB} g^{(-1)D} g^{(-1)E}\rt)\\
&\quad + \lt( K^{-1}r + \rho^{(0)}\rt)\lt( C_{AB} + \frac{1}{r}\pl_D C_{AB} g^{(-1)D} + \frac{1}{r} \pl_{\bar u}C_{AB} a^{(-1)} \rt) + \frac{1}{4} (C_{DE}C^{DE})\sigma_{AB} \\
&\quad + O(r^{-1})\\
&= r^2 K^{-2} \sigma_{AB} + r \lt( 2K^{-1} \rho^{(0)} \sigma_{AB} + K^{-1} C_{AB} + K^{-2} \pl_D \sigma_{AB} g^{(-1)D} \rt)\\
&\quad + \lt( 2K^{-1} \rho^{(-1)} + \rho^{(0)2} \rt) \sigma_{AB} + 2K^{-1} \rho^{(0)} \pl_D \sigma_{AB} g^{(-1)D} \\
&\quad + K^{-2}\lt( \pl_D\sigma_{AB} g^{(-2)D} + \frac{1}{2} \pl_D \pl_E \sigma_{AB} g^{(-1)D} g^{(-1)E}\rt) + \rho^{(0)}C_{AB} \\
&\quad + K^{-1} \lt( \pl_D\sigma_{AB} g^{(-1)D} + \pl_{\bar u} C_{AB} a^{(-1)} \rt) + \frac{1}{4} C_{DE} C^{DE} \sigma_{AB} \\
&\quad + O(r^{-1}).
\end{align*}
and \begin{align*}
&d\bar x^A + W^A d\bar u \\
&= \lt( \pl_a g^A + \frac{1}{r}\pl_a g^{(-1)A} + \frac{1}{r^2} \lt( \pl_a g^{(-2)A} + K^2 W^{(-2)A} \na_a \hat f \rt) \rt) dx^a \\
&\quad + \lt( \frac{1}{r} \pl_u g^{(-1)A} + \frac{1}{r^2} \lt( \pl_u g^{(-2)A} + K^3 W^{(-2)A} \rt)  \rt) du\\
&\quad + \frac{1}{r^3} \Bigg( \pl_u g^{(-3)A} + K^4 W^{(-3)A} -2K^4 \rho^{(0)} W^{(-2)A} + K^2 \pl_u a^{(-1)} W^{(-2)A} \\
&\qquad\qquad\qquad\qquad\qquad\qquad + K^3 \pl_D W^{(-2)A} g^{(-1)D} +  K^3 \pl_{\bar u}W^{(-2)A} a^{(-1)} \Bigg)du  \\
&\quad + \lt( -\frac{g^{(-1)A}}{r^2} - \frac{2g^{(-2)A}}{r^3} - \frac{3g^{(-3)A}}{r^4} - \frac{K^2a^{(-1)} W^{(-2)A}}{r^4} \rt)dr
\end{align*}
In the above equations, $\sigma_{AB}, \pl_D\sigma_{AB}, \pl_D \pl_E \sigma_{AB}$ are evaluated at $g(x)$ and $C_{AB}, \pl_D C_{AB}, ...$ are evaluated at $(\hat f(u,x), g(x))$.

Therefore, the metric components in $(u,r,x^a)$ coordinate system are given by 
\begin{align*}
 g_{ra} &= - K^{-1} \na_a \hat f - K^{-2} \sigma_{AB} \pl_a g^A g^{(-1)B} \\
&\quad + \frac{1}{r} \Big[  -K^{-1} \na_a a^{(-1)} - K^{-2} \na_a K a^{(-1)}  \\ &\qquad\qquad  - \lt( 2 K^{-1}\rho^{(0)} \sigma_{AB}  + K^{-1}C_{AB} + K^{-2} \pl_D \sigma_{AB} g^{(-1)D} \rt) \pl_a g^A g^{(-1)B}   \\ 
&\qquad\qquad -K^{-2}\sigma_{AB} \lt( 2 \pl_a g^A g^{(-2)B} + \pl_a g^{(-1)A} g^{(-1)B} \rt) \Big] \\
&\quad + O(r^{-2}),
\end{align*} 
\begin{align*}
g_{rr} &= \frac{1}{r^2} \lt( 2K^{-1} a^{(-1)} + K^{-2} \sigma_{AB} g^{(-1)A} g^{(-1)B} \rt) \\&\quad + \frac{1}{r^3} \Big[ 4K^{-1} a^{(-2)} + \lt( 2K^{-1} \rho^{(0)}\sigma_{AB} + K^{-2} \pl_D \sigma_{AB} g^{(-1)D} + K^{-1}C_{AB} \rt)g^{(-1)A} g^{(-1)B} \\&\qquad\qquad + 4K^{-2} \sigma_{AB} g^{(-1)A}g^{(-2)B} \Big]\\
&\quad + O(r^{-4}),
\end{align*}
\begin{align*}
g_{ab} &= r^2 \sigma_{ab} \\
&\quad + r \Big[ K^{-2} \na_a K \na_b \hat f + K^{-2} \na_b K \na_a \hat f + 2K\rho^{(0)}\sigma_{ab} \\
&\quad\qquad +  K^{-2} \sigma_{AB} \lt(\pl_a g^A \pl_b g^{(-1)B} + \pl_a g^{(-1)A} \pl_b g^B \rt) \\
&\quad\qquad+  \lt( K^{-2} \pl_D \sigma_{AB} g^{(-1)D} + K^{-1} C_{AB} \rt) \pl_a g^A \pl_b g^B \Big] \\
&\quad + g^{(0)}_{ab} + O(r^{-1}),
\end{align*}
where
\begin{align*}
g^{(0)}_{ab} &= - \na_a \hat f \na_b \hat f - \na_a \hat f \na_b \rho^{(0)} - \na_b \hat f \na_a \rho^{(0)} + K^{-2} \na_a a^{(-1)} \na_b K + K^{-2} \na_b a^{(-1)} \na_a K \\
&\quad + \lt( 2 K^{-1} \rho^{(-1)} + \rho^{(0)2} \rt)\bar\sigma_{ab}\\
&\quad + \lt( K^{-2} \pl_D \sigma_{AB} g^{(-2)D} + \frac{1}{2} K^{-2} \pl_D \pl_E \sigma_{AB} g^{(-1)D} g^{(-1)E} \rt) \pl_a g^A \pl_b g^B \\
&\quad + \lt( K^{-1} \pl_D C_{AB} g^{(-1)D} + \frac{1}{4} C_{DE}C^{DE} \sigma_{AB} +K^{-1} \pl_{\bar u}C_{AB} a^{(-1)} \rt) \pl_a g^A \pl_b g^B\\
&\quad + K^{-2} \sigma_{AB} \lt( \pl_a g^A \pl_b g^{(-2)B} + \pl_b g^B \pl_a g^{(-2)A} + \pl_a g^{(-1)A} \pl_b g^{(-1)B} \rt)\\
&\quad + \sigma_{AB} \lt( \pl_a g^A W^{(-2)B} \na_b \hat f + \pl_b g^B W^{(-2)A} \na_a \hat f \rt)\\ 
&\quad +2 \rho^{(0)} K^{-1}\pl_D \sigma_{AB} g^{(-1)D} \pl_a g^A \pl_b g^B + \rho^{(0)} C_{AB}\pl_a g^A \pl_b g^B \\
&\quad + 2K^{-1} \rho^{(0)} \sigma_{AB} \lt( \pl_a g^A \pl_b g^{(-1)B} + \pl_b g^B \pl_a g^{(-1)A} \rt)\\
&\quad + \lt( K^{-2} \pl_D \sigma_{AB} g^{(-1)D} + K^{-1} C_{AB} \rt)\lt( \pl_a g^A \pl_b g^{(-1)B} + \pl_b g^B \pl_a g^{(-1)A} \rt).
\end{align*}
In addition to $g^{(-1)A}, a^{(-1)}, \rho^{(0)}$, we will solve inductively 
\begin{enumerate}
\item $g^{(-2)A}$ from $g_{ra}^{(-1)}=0$,
\item $\rho^{(-1)}$ from 
\begin{align}\label{gab0}
g^{(0)}_{ab} = \frac{1}{2} \sigma^{cd} g^{(1)}_{ac} g^{(1)}_{bd}, 
\end{align} 
\item $a^{(-2)}$ from $g_{rr}^{(-3)}=0,$ and
\item $g^{(-3)A}$ from $g_{ra}^{(-2)}=0.$
\end{enumerate} 
We remind the readers that \eqref{gab0} comes from the outgoing radiation condition. See Madler-Winicour \cite[page 13-14]{MW}.

\subsection{Mass aspect function}
To get the transformation formula by regarding the mass aspect function as a metric coefficient, one needs to compute $g^{(-2)B}$ and $\rho^{(-1)}$ in the expansions of a BMS transformation.
\begin{lem}\label{g(-2)}
\begin{align*}
g^{(-2)B} &= -K \rho^{(0)} g^{(-1)B}- \frac{1}{2} \Gamma_{ED}^B  g^{(-1)E} g^{(-1)D}  - \frac{1}{2} K \sigma^{BD}C_{DE} g^{(-1)E}  \end{align*}
\end{lem}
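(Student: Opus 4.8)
The plan is to read off the subleading coefficient $g_{ra}^{(-1)}$ from the expansion of $g_{ra}$ displayed in the Setup and to impose the defining condition $g_{ra}^{(-1)}=0$. Inspecting that expansion, the unknown $g^{(-2)B}$ enters $g_{ra}^{(-1)}$ only through the single term $-2K^{-2}\sigma_{AB}\pl_a g^A g^{(-2)B}$, so the equation $g_{ra}^{(-1)}=0$ becomes
\begin{align*}
2K^{-2}\sigma_{AB}\pl_a g^A g^{(-2)B} &= -K^{-1}\na_a a^{(-1)} - K^{-2}\na_a K\, a^{(-1)} \\
&\quad - \lt(2K^{-1}\rho^{(0)}\sigma_{AB} + K^{-1}C_{AB} + K^{-2}\pl_D\sigma_{AB}g^{(-1)D}\rt)\pl_a g^A g^{(-1)B} \\
&\quad - K^{-2}\sigma_{AB}\pl_a g^{(-1)A}g^{(-1)B}.
\end{align*}
To solve for $g^{(-2)B}$ I would invert the operator $\sigma_{AB}\pl_a g^A$ using the conformal identity \eqref{conformal} in the form $\sigma_{AB}\pl_a g^A = \bar\sigma_{ac}\pl_B g^c$ (already exploited in the proof of Lemma \ref{rho(0)}): contracting the right-hand covector against $\tfrac12 K^2\bar\sigma^{ac}\pl_c g^E$ recovers $g^{(-2)E}$, since $\bar\sigma^{ac}\pl_c g^E\,\sigma_{AB}\pl_a g^A = \delta^E_B$.

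Applying this contraction termwise, the two pieces proportional to $\rho^{(0)}$ and $C_{AB}$ go through immediately: $-2K^{-1}\rho^{(0)}\sigma_{AB}\pl_a g^A g^{(-1)B}$ yields $-K\rho^{(0)}g^{(-1)E}$, while $-K^{-1}C_{AB}\pl_a g^A g^{(-1)B}$ yields $-\tfrac12 K\sigma^{ED}C_{DB}g^{(-1)B}$ after using $\bar\sigma^{ac}\pl_c g^E\pl_a g^A = \sigma^{EA}$. These reproduce the first and third terms of the claimed formula. The remaining task is to show that the four surviving pieces — the two $a^{(-1)}$ terms, the $\pl_D\sigma_{AB}$ term, and the $\pl_a g^{(-1)A}$ term — collapse into the single Christoffel term $-\tfrac12\Gamma^B_{ED}g^{(-1)E}g^{(-1)D}$.

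For this I would substitute $a^{(-1)} = -\tfrac12 K|\bna\hat f|^2$ together with the expression for $\pl_a g^{(-1)A}$ obtained inside the proof of Lemma \ref{rho(0)}, namely $\pl_a g^{(-1)A} = K^{-1}\na_a K\, g^{(-1)A} - K\bna_a\bna^c\hat f\,\pl_c g^A - \Gamma^A_{ED}\pl_a g^E g^{(-1)D}$. The main obstacle is the bookkeeping of the second-derivative contributions: differentiating $a^{(-1)}$ produces a $\bna^c\hat f\,\bna_a\bna_c\hat f$ term and a $\na_a K$ term, and the matching $\bna_a\bna^c\hat f$ and $\na_a K$ contributions coming from $\pl_a g^{(-1)A}$ must cancel these exactly. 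I expect this cancellation to occur already at the level of the covector, before inversion, using the auxiliary identities $\sigma_{AB}g^{(-1)A}g^{(-1)B} = K^2|\bna\hat f|^2$ and $\sigma_{AB}\pl_c g^A g^{(-1)B} = -K\pl_c\hat f$. What remains is purely algebraic in the Christoffel symbols: expanding $\pl_D\sigma_{AB} = \sigma_{AE}\Gamma^E_{BD} + \sigma_{BE}\Gamma^E_{AD}$ by metric compatibility, one of the two resulting terms cancels the Christoffel piece of $\pl_a g^{(-1)A}$, and the other, after the inversion contraction and one more application of the duality relation, produces exactly $-\tfrac12\Gamma^B_{ED}g^{(-1)E}g^{(-1)D}$. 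I anticipate the index juggling in this final reorganization — keeping straight which indices contract against $\pl_a g^A$ as opposed to $g^{(-1)B}$ — to be the most error-prone step, so I would carry it out while symmetrizing carefully in the pair $(E,D)$.
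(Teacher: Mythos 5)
Your proposal is correct, and all of the cancellations you anticipate do in fact occur: substituting $a^{(-1)}=-\tfrac12 K|\bna\hat f|^2$ and the expression for $\pl_a g^{(-1)A}$ from the proof of Lemma \ref{rho(0)}, the $\na_a K$ and Hessian-of-$\hat f$ contributions cancel at the covector level (using $\sigma_{AB}g^{(-1)A}g^{(-1)B}=K^2|\bna\hat f|^2$ and $\sigma_{AB}\pl_c g^A g^{(-1)B}=-K\pl_c\hat f$, exactly as you say), metric compatibility splits the $\pl_D\sigma_{AB}$ term into one piece that kills the Christoffel piece of $\pl_a g^{(-1)A}$ and one piece that survives, and the surviving piece contracted against $\tfrac12 K^2\bar\sigma^{ac}\pl_c g^E$ yields $-\tfrac12\Gamma^B_{ED}g^{(-1)E}g^{(-1)D}$; your treatment of the $\rho^{(0)}$ and $C_{AB}$ terms and the inversion identity $\bar\sigma^{ac}\pl_c g^E\,\sigma_{AB}\pl_a g^A=\delta^E_B$ are likewise correct. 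The paper follows the same skeleton (impose $g_{ra}^{(-1)}=0$ and invert $\pl_a g^A$) but handles the middle step more economically: instead of inserting the closed-form expressions for $a^{(-1)}$ and $\pl_a g^{(-1)A}$, it invokes the relation $2K^{-1}a^{(-1)}+K^{-2}\sigma_{AB}g^{(-1)A}g^{(-1)B}=0$ (which is precisely $g_{rr}^{(-2)}=0$) and differentiates it, giving
\begin{align*}
-K^{-1}\na_a a^{(-1)}-K^{-2}\na_a K\,a^{(-1)}
= K^{-2}\lt(\tfrac12\,\pl_a g^D\pl_D\sigma_{AB}\,g^{(-1)A}g^{(-1)B}+\sigma_{AB}\,\pl_a g^{(-1)A}g^{(-1)B}\rt),
\end{align*}
so the $\sigma_{AB}\pl_a g^{(-1)A}g^{(-1)B}$ term in $g_{ra}^{(-1)}$ cancels identically, no Hessian terms ever appear, and one is left immediately with the $\pl\sigma$/Christoffel algebra. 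Your route pays for this with the extra bookkeeping you flag as error-prone, but requires nothing beyond results already proved; the paper's differentiation trick buys the cancellation for free from a relation that was already established when solving for $a^{(-1)}$.
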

\begin{proof}
The equation $g_{rr}^{(-2)}=0$ implies
\[ 2K^{-1} a^{(-1)} + K^{-2} \sigma_{AB} g^{(-1)A} g^{(-1)B} =0 \]
and
\begin{align*}
-K^{-1} \na_a a^{(-1)} - K^{-2} \na_a K a^{(-1)} &= -K^{-2} \na_a \lt( K a^{(-1)} \rt) \\
&= K^{-2} \lt( \frac{1}{2} \pl_a g^D \pl_D \sigma_{AB} g^{(-1)A} g^{(-1)B} + \sigma_{AB} \pl_a g^{(-1)A} g^{(-1)B} \rt)
\end{align*}
Hence, we have
\begin{align*}
g_{ra}^{(-1)} &= - \lt( 2 K^{-1}\rho^{(0)} \sigma_{AB}  + K^{-1}C_{AB} + K^{-2} \pl_D \sigma_{AB} g^{(-1)D} \rt) \pl_a g^A g^{(-1)B}   \\ 
&\qquad\quad + \frac{1}{2} K^{-2} \pl_A \sigma_{DB} \pl_a g^A g^{(-1)D} g^{(-1)B} - 2 K^{-2}\sigma_{AB} \pl_a g^A g^{(-2)B} 
\end{align*}
The equation $g_{ra}^{(-1)}=0$ becomes
\begin{align*}
\sigma_{AB} \pl_a g^A g^{(-2)B} &= -K \rho^{(0)} \sigma_{AB} \pl_a g^A g^{(-1)B} - \frac{1}{2} K C_{AB} \pl_a g^A g^{(-1)B}\\
&\quad + \lt( \frac{1}{4} \pl_A\sigma_{DB} - \frac{1}{2}\pl_D \sigma_{AB} \rt) \pl_a g^A  g^{(-1)D} g^{(-1)B} 
\end{align*}
and it is straightforward to check $g^{(-2)B}$ has the claimed form.
\end{proof}

\begin{defn}
We denote the pull-back of shear tensor by
\begin{align}
\mathcal{C}_{ab} (u,x) = C_{AB} \lt( \hat f(u,x), g(x) \rt)\pl_a g^A(x) \pl_b g^B(x).
\end{align}
\end{defn}

The derivative of $\mathcal{C}_{ab}$ is computed in the proof of Theorem \ref{modified}
\begin{lem}\label{derivative of shear tensor}
The derivative of shear tensors are related by
\begin{align*}
\bna_a \mathcal{C}_{de}(u,x) = \lt( \pl_{\bar u}C_{DE} \na_a \hat f + \na_A C_{DE} \pl_a g^A \rt)\pl_d g^D \pl_e g^E.
\end{align*}
Here $\pl_{\bar u}C_{DE}$ and $\na_A C_{DE}$ are evaluated at $\lt( \hat f(u,x), g(x)\rt).$
\end{lem}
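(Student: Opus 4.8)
The plan is to unwind $\bna_a\mathcal{C}_{de}$ directly from the definition of $\mathcal{C}_{de}$, combining the chain rule for the $(u,x)$-dependence of $C_{DE}(\hat f(u,x),g(x))$ with the transformation law for the Christoffel symbols already recorded in the proof of Lemma \ref{rho(0)}. Conceptually the content of the lemma is that, by \eqref{conformal}, the conformal map $g$ is an isometry from $(S^2,\bar\sigma)$ onto $(S^2,\sigma)$; hence $g$ intertwines the Levi--Civita connection $\na$ of $\sigma$ with the Levi--Civita connection $\bna$ of $\bar\sigma$, and pull-back of tensors commutes with covariant differentiation. The only wrinkle is that $C_{DE}$ also depends on the retarded time $\bar u=\hat f(u,x)$, and differentiating this dependence is exactly what produces the extra $\pl_{\bar u}C_{DE}\,\na_a\hat f$ term.

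Concretely, I would first expand
\[\bna_a \mathcal{C}_{de} = \pl_a\lt(C_{DE}(\hat f,g)\rt)\pl_d g^D \pl_e g^E + C_{DE}(\hat f,g)\,\pl_a\lt(\pl_d g^D \pl_e g^E\rt) - \bar\Gamma^f_{ad}\mathcal{C}_{fe} - \bar\Gamma^f_{ae}\mathcal{C}_{df}.\]
The chain rule gives $\pl_a\lt(C_{DE}(\hat f,g)\rt) = \pl_{\bar u}C_{DE}\,\pl_a\hat f + \pl_A C_{DE}\,\pl_a g^A$, which already supplies the first-derivative term and the radiative $\bar u$-term. For the middle term I would substitute the second-derivative identity $\pl_a\pl_d g^D = \bar\Gamma^f_{ad}\pl_f g^D - \Gamma^D_{FG}\pl_a g^F \pl_d g^G$, which follows immediately from the Christoffel transformation law $\bar\Gamma^c_{bd} = \pl_E g^c(\Gamma^E_{AD}\pl_b g^A\pl_d g^D + \pl_b\pl_d g^E)$ used in the proof of Lemma \ref{rho(0)}, and likewise for $\pl_a\pl_e g^E$.

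The key observation is then that the two $\bar\Gamma$-contributions produced by this substitution cancel precisely against the connection terms $-\bar\Gamma^f_{ad}\mathcal{C}_{fe}-\bar\Gamma^f_{ae}\mathcal{C}_{df}$. What remains is the genuine chain-rule term $\pl_{\bar u}C_{DE}\,\pl_a\hat f\,\pl_d g^D\pl_e g^E$ together with $\pl_A C_{DE}\,\pl_a g^A\pl_d g^D\pl_e g^E$ and the two $\Gamma$-of-$\sigma$ terms $-\lt(\Gamma^A_{FD}C_{AE}+\Gamma^A_{FE}C_{DA}\rt)\pl_a g^F\pl_d g^D\pl_e g^E$, after relabeling the dummy slot indices. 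Recognizing $\pl_F C_{DE} - \Gamma^A_{FD}C_{AE}-\Gamma^A_{FE}C_{DA}=\na_F C_{DE}$ assembles the last three into $\na_A C_{DE}\,\pl_a g^A\pl_d g^D\pl_e g^E$, and since $\pl_a\hat f=\na_a\hat f$ for the scalar $\hat f$, the claimed formula follows. The only real difficulty is bookkeeping: one must track the index relabelings carefully enough to see both the cancellation of the $\bar\Gamma$ terms and the reassembly of $\na_A C_{DE}$. No genuine analytic input is needed beyond \eqref{conformal} and the Christoffel transformation already in hand, so I expect this step to be purely mechanical.
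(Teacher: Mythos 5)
Your proof is correct and is essentially the argument the paper intends: the paper gives no standalone proof of this lemma (it defers to the computation in the proof of Theorem \ref{modified}, where the identity is used as the statement that pullback by the isometry $g$ of \eqref{conformal} intertwines $\na$ and $\bna$, together with the chain rule in $\bar u$), and your explicit verification via the Christoffel transformation law from the proof of Lemma \ref{rho(0)} is precisely the omitted mechanical check. The cancellation of the $\bar\Gamma^f_{ad}$, $\bar\Gamma^f_{ae}$ contributions and the reassembly of $\pl_F C_{DE} - \Gamma^A_{FD}C_{AE} - \Gamma^A_{FE}C_{DA}$ into $\na_F C_{DE}$ go through exactly as you describe.
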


\begin{lem}\label{rho-1}
\begin{align*}
\rho^{(-1)} &= \frac{1}{2}K |\bna \hat f|^2 + \frac{1}{4} K \lt( \bna_a \bna_b \hat f - \frac{1}{2} \bar\Delta \hat f \bar\sigma_{ab} \rt) \lt( \bna^a \bna^b \hat f - \frac{1}{2} \bar\Delta\hat f \bar\sigma^{ab} \rt) \\
&\quad - \frac{1}{4}K^{-1} \pl_{\bar u}C_{AB} g^{(-1)A}g^{(-1)B} + \sigma_{AB} W^{(-2)A} g^{(-1)B} - \frac{1}{4}K \mathcal{C}_{ab}\bna^a \bna^b \hat f .
\end{align*}
Here $\pl_{\bar u}C_{DE}$ and $W^{(-2)A}$ are evaluated at $\lt( \hat f(u,x), g(x)\rt).$
\end{lem}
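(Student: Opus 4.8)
The plan is to extract $\rho^{(-1)}$ from its defining relation \eqref{gab0} by taking a trace. Since $\rho^{(-1)}$ enters $g^{(0)}_{ab}$ only through the conformally-round piece $\lt(2K^{-1}\rho^{(-1)}+\rho^{(0)2}\rt)\bar\sigma_{ab}$, contracting \eqref{gab0} with $\bar\sigma^{ab}$ isolates it: the left side yields $4K^{-1}\rho^{(-1)}+2\rho^{(0)2}$ together with the traces of the remaining terms of $g^{(0)}_{ab}$, while the right side yields $\tfrac12\bar\sigma^{ab}\sigma^{cd}g^{(1)}_{ac}g^{(1)}_{bd}$. Solving then gives $\rho^{(-1)}=\frac{K}{4}\big(\text{RHS trace}-(\text{LHS trace without the }\rho^{(-1)}\text{ term})\big)$.

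For the right side I would first rewrite the shear via \eqref{shear}. Writing $T_{ab}:=\bna_a\bna_b\hat f-\tfrac12\bar\Delta\hat f\,\bar\sigma_{ab}$ for the trace-free Hessian and $\mathcal{C}_{ab}=C_{AB}\pl_a g^A\pl_b g^B$ for the pulled-back shear, one has $g^{(1)}_{ab}=K^{-1}(-2T_{ab}+\mathcal{C}_{ab})$. Using $\sigma^{cd}=K^2\bar\sigma^{cd}$ and the isometry identity $\bar\sigma^{ab}\pl_a g^A\pl_b g^B=\sigma^{AB}$ (a consequence of \eqref{conformal}, which also gives $|\mathcal{C}|^2=C_{DE}C^{DE}$ and, since $\mathcal{C}$ is $\bar\sigma$-traceless, $\mathcal{C}_{ab}\bna^a\bna^b\hat f=\mathcal{C}_{ab}T^{ab}$), the right-hand trace becomes $2|T|^2-2\,\mathcal{C}_{ab}T^{ab}+\tfrac12 C_{DE}C^{DE}$. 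This already produces the trace-free-Hessian-squared and the $\mathcal{C}_{ab}\bna^a\bna^b\hat f$ structures appearing in the claim.

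The bulk of the work is the left-hand trace, where I would substitute $g^{(-1)A}$ (Lemma \ref{g(-1)}), $a^{(-1)}=-\tfrac12 K|\bna\hat f|^2$, $\rho^{(0)}=\tfrac12\bar\Delta\hat f$ (Lemma \ref{rho(0)}), and $g^{(-2)B}$ (Lemma \ref{g(-2)}) into the displayed expression for $g^{(0)}_{ab}$ and contract with $\bar\sigma^{ab}$. Three points require care. First, the $\tfrac14 C_{DE}C^{DE}\sigma_{AB}\pl_a g^A\pl_b g^B$ term traces to $\tfrac12 C_{DE}C^{DE}$ and cancels the $|\mathcal{C}|^2$ contribution from the right side. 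Second, the $W^{(-2)}$ term and the $\pl_D C_{AB}\,g^{(-1)D}$ term must be reorganized, using $W^{(-2)}_A=\tfrac12\na^B C_{AB}$ and the identity $\pl_C\sigma_{AB}=\sigma_{AD}\Gamma^D_{BC}+\sigma_{BD}\Gamma^D_{AC}$, into the single covariant combination $\sigma_{AB}W^{(-2)A}g^{(-1)B}$ of the answer. Third --- and this is where the $\pl_{\bar u}C$ term is born --- the contribution $K^{-2}\sigma_{AB}\pl_a g^A\pl_b g^{(-2)B}$ forces us to differentiate $g^{(-2)B}$, whose shear piece $-\tfrac12 K\sigma^{BD}C_{DE}\,g^{(-1)E}$ depends on $C_{DE}(\hat f(u,x),g(x))$; by the chain rule (exactly as in Lemma \ref{derivative of shear tensor}) this differentiation produces $\pl_{\bar u}C_{DE}\,\pl_b\hat f$, and contracting through $\pl_a g^D\bna^a\hat f=-K^{-1}g^{(-1)D}$ yields precisely $-\tfrac14 K^{-1}\pl_{\bar u}C_{AB}g^{(-1)A}g^{(-1)B}$. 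The remaining purely metric-derivative terms, once every $\pl_D\sigma_{AB}$ and $\pl_D\pl_E\sigma_{AB}$ is converted to covariant form via the Christoffel transformation law, collapse to the $|\bna\hat f|^2$ and trace-free-Hessian-squared pieces.

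I expect the \emph{main obstacle} to be precisely this third point: correctly differentiating $g^{(-2)B}$ through its dependence on the evaluation point $\hat f(u,x)$ and tracking the resulting $\pl_{\bar u}C$ contribution, together with the sheer volume of covariantization bookkeeping required to watch the partial-derivative-of-$\sigma$ terms assemble into $\bna_a\bna_b\hat f-\tfrac12\bar\Delta\hat f\,\bar\sigma_{ab}$. Once these cancellations and regroupings are carried out and one divides by the factor $4K^{-1}$, the stated formula for $\rho^{(-1)}$ follows.
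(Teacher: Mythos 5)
Your proposal is correct and takes essentially the same route as the paper: both extract $\rho^{(-1)}$ by contracting the outgoing-radiation condition \eqref{gab0} with $\bar\sigma^{ab}$, substituting the previously determined coefficients $g^{(-1)A}$, $a^{(-1)}$, $\rho^{(0)}$, $g^{(-2)B}$, computing the right-hand trace from \eqref{shear} exactly as you describe, and using the chain rule of Lemma \ref{derivative of shear tensor} (in particular on the shear piece of $g^{(-2)B}$) to produce the $\pl_{\bar u}C_{AB}\,g^{(-1)A}g^{(-1)B}$ and $\sigma_{AB}W^{(-2)A}g^{(-1)B}$ terms. The only cosmetic difference is that the paper first assembles the full tensor $g^{(0)}_{ab}$ in normal coordinates, using Lemma \ref{symmetric_traceless} and Lemma \ref{multiple of metric} to organize the non-trace parts, and only then takes the trace, whereas you trace from the outset.
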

\begin{proof}
In the proof we take normal coordinates at $x_0$, $g(x_0)$ and evaluate all tensors at $x_0.$ This implies $\bar\Gamma_{ab}^c(x_0)= \Gamma_{AB}^C(g(x_0)) = \pl_a\pl_b g^A(x_0)=0.$ We write $g_{ab}^{(0)} = 2K^{-1}\rho^{(-1)} \bar\sigma_{ab} + \RN{1} + \RN{2}$ where terms involving $C_{AB}$ are grouped in $\RN{2}$. 

We have
\begin{align*}
\RN{1} &= - \na_a \hat f \na_b \hat f \underline{- \na_a \hat f \na_b \rho^{(0)} - \na_b \hat f \na_a \rho^{(0)} } + K^{-2} \na_a a^{(-1)} \na_b K + K^{-2} \na_b a^{(-1)} \na_a K \\
&\quad + \rho^{(0)2} \bar\sigma_{ab} + \frac{1}{2} K^{-2} \pl_D \pl_E \sigma_{AB} g^{(-1)D} g^{(-1)E}  \pl_a g^A \pl_b g^B \\
&\quad + K^{-2} \sigma_{AB}  \pl_a g^A \lt( \underline{ \pl_b \lt( -K \rho^{(0)} g^{(-1)B} \rt)} + \pl_b \lt( - \frac{1}{2} \Gamma^B_{DE} \rt) g^{(-1)D}g^{(-1)E} \rt) \\
&\quad + K^{-2}\sigma_{AB} \pl_b g^B \lt( \underline{ \pl_a \lt( -K \rho^{(0)} g^{(-1)A} \rt) } + \pl_a \lt( -\frac{1}{2}\Gamma^A_{DE}\rt) g^{(-1)D}g^{(-1)E} \rt)\\
&\quad + K^{-2} \sigma_{AB} \pl_a g^{(-1)A}\pl_b g^{(-1)B} + \underline{ 2K^{-1} \rho^{(0)} \sigma_{AB} \lt( \pl_a g^A \pl_b g^{(-1)B} + \pl_b g^B \pl_a g^{(-1)A} \rt) }.
\end{align*}
By Lemma \ref{g(-1)}, the underlined terms become $-2 \rho^{(0)} \bna_a \bna_b \hat f$. By Lemma \ref{g(-1)} and Lemma \ref{rho(0)}, we have
\begin{align*}
K^{-2} \na_a a^{(-1)} \na_b K + K^{-2} \na_b a^{(-1)} \na_a K + K^{-2} \sigma_{AB} \pl_a g^{(-1)A}\pl_b g^{(-1)B}
= \bna_a \bna^c \hat f \bna_b \bna_c \hat f.
\end{align*}
Finally, from $\pl_b \Gamma^B_{DE} = \pl_b g^F \pl_F \Gamma^B_{DE}$ and $\pl_D\pl_E\sigma_{AB} = \pl_D \Gamma^F_{EA} \sigma_{FB} + \pl_D \Gamma^F_{EB}\sigma_{FA}$, terms involving second derivative of $\sigma$ can be simplified
\begin{align*}
&\frac{1}{2} K^{-2} \lt( \pl_D \pl_E \sigma_{AB} \pl_a g^A \pl_b g^B - \sigma_{AB} \lt( \pl_a g^A \pl_b  \Gamma^B_{DE}  + \pl_b g^B \pl_a \Gamma^A_{DE} \rt) \rt) g^{(-1)D}g^{(-1)E}\\
&= -\frac{1}{2} K^{-2}\lt( \pl_F \Gamma^B_{DE} - \frac{1}{2}\pl_D \Gamma^B_{FE} - \frac{1}{2} \pl_E \Gamma^B_{DF} + \pl_F \Gamma^A_{DE} - \frac{1}{2}\pl_D \Gamma^A_{FE} - \frac{1}{2}\pl_E \Gamma^A_{DF} \rt) g^{(-1)D}g^{(-1)E} \\
&= -\frac{1}{2} \lt( R_{FD\;\;E}^{\;\;\;\;B} + R_{FD\;\;E}^{\;\;\;\;A} \rt) \pl_d g^D \pl_e g^E \bna^d \hat f \bna^e \hat f\\
&= - |\bna \hat f|^2 \bar\sigma_{ab} + \na_a \hat f \na_b \hat f.  
\end{align*}
Putting these together, we obtain
\[  \RN{1} = -|\bna \hat f|^2 \bar\sigma_{ab} + \lt( \bna_a \bna^c \hat f - \frac{1}{2} \bna\Delta \hat f \delta_a^c \rt)\lt( \bna_b \bna_c \hat f - \frac{1}{2}\Delta \hat f \bar\sigma_{bc} \rt).\]

Next, we simplify $\RN{2}$. By Lemma \ref{derivative of shear tensor}, we have 
\begin{align*}
\RN{2} &= \lt( K^{-1} \na_D C_{AB} g^{(-1)D} + K^{-1} \pl_{\bar u} C_{AB} a^{(-1)} \rt) \pl_a g^A \pl_b g^B\\
&\quad + K^{-2} \sigma_{AB} \pl_a g^A \pl_b \lt( -\frac{1}{2} K \sigma^{BD} C_{DE} g^{(-1)E} \rt) + K^{-2} \sigma_{AB} \pl_b g^B \pl_a \lt( -\frac{1}{2} K \sigma^{AD} C_{DE} g^{(-1)E} \rt) \\
&\quad + \sigma_{AB} \lt( \pl_a g^A W^{(-2)B} \na_b \hat f + \pl_b g^B W^{(-2)A} \na_a \hat f \rt) \\
&\quad + K^{-1} C_{AB} \lt( \pl_a g^A \pl_b g^{(-1)B} + \pl_b g^B \pl_a g^{(-1)A} \rt) \\
&\quad + \frac{1}{4} C_{DE}C^{DE} \bar\sigma_{ab} + \rho^{(0)} \mathcal{C}_{ab}\\
&= -\bar\na_d \mathcal{C}_{ab} \bna^d \hat f + \frac{1}{2} \pl_{\bar u}C_{AB} |\bna \hat f|^2 \pl_a g^A \pl_b g^B\\
&\quad + K^{-1} \na_b K \mathcal{C}_{ae} \bna^e \hat f + \frac{1}{2} \bna_b \mathcal{C}_{ae} \bna^e \hat f + \frac{1}{2} \mathcal{C}_{ae} \bna_b \bna^e \hat f \\ &\quad + K^{-1} \na_a K \mathcal{C}_{be} \bna^e \hat f + \frac{1}{2} \bna_a \mathcal{C}_{be} \bna^e \hat f + \frac{1}{2} \mathcal{C}_{be} \bna_a \bna^e \hat f \\
&\quad + \frac{1}{2} \bna^d \mathcal{C}_{ad} \na_b \hat f + \frac{1}{2} \bna^d \mathcal{C}_{bd} \na_a \hat f - \frac{1}{2} \pl_{\bar u}C_{AD} \bna^d \hat f \pl_d g^D \pl_a g^A \na_b \hat f - \frac{1}{2} \pl_{\bar u} C_{BD} \bna^d \hat f \pl_d g^D \pl_b g^B \na_a \hat f\\
&\quad - K^{-1} \na_b K \mathcal{C}_{ae} \bna^e \hat f - \mathcal{C}_{ae} \bna_b \bna_e \hat f - K^{-1} \na_a K \mathcal{C}_{be} \bna^e \hat f - \mathcal{C}_{be} \bna_a \bna^e \hat f\\
&\quad + \frac{1}{4} C_{DE}C^{DE} \bar\sigma_{ab} + \rho^{(0)} \mathcal{C}_{ab}  
\end{align*}

By Lemma \ref{symmetric_traceless}, the terms involving covariant derivative of $\mathcal{C}$ is a multiple of $\bar\sigma_{ab}:$
\begin{align*}
\lt( - \bna_d \mathcal{C}_{ab} + \frac{1}{2} \bna_b \mathcal{C}_{ad} + \frac{1}{2} \bna_a \mathcal{C}_{bd} + \frac{1}{2} \bna^e \mathcal{C}_{ae} \bar\sigma_{bd} + \frac{1}{2} \bna^e \mathcal{C}_{be} \bar\sigma_{ad} \rt) \bna^d \hat f = \bna^e \mathcal{C}_{de} \bna^d \hat f \bar\sigma_{ab}.
\end{align*}
Hence, we obtain
\begin{subequations}
\begin{align}
\RN{2} &= \bna^e \mathcal{C}_{de} \bna^d \hat f \bar\sigma_{ab} + \frac{1}{4}C_{DE}C^{DE} \bar\sigma_{ab}\notag\\
&\quad + \frac{1}{2} \lt( \pl_{\bar u} C_{AB} \pl_a g^A \pl_b g^B |\bna \hat f|^2 - \pl_{\bar u} C_{AD} \pl_a g^A \pl_d g^D \na_b \hat f \bna^d \hat f - \pl_{\bar u}C_{BD} \pl_b g^B \pl_d g^D \na_a \hat f \bna^d \hat f \rt) \label{g0a}\\
&\quad + \rho^{(0)} \mathcal{C}_{ab} - \frac{1}{2} \mathcal{C}_{ae} \bna_b \bna^e \hat f - \frac{1}{2} \mathcal{C}_{be} \bna_a \bna^e \hat f \label{g0b}. 
\end{align}
\end{subequations}
Applying Lemma \ref{multiple of metric} to $T_{ab} = \pl_{\bar u}C_{AB} \pl_a g^A \pl_b g^B, S_{ab} = \na_a \hat f \na_b \hat f$ and $T_{ab} = \mathcal{C}_{ab}, S_{ab} = \bna_a \bna_b \hat f,$ we infer that \eqref{g0a} and \eqref{g0b}  are both a multiple of $\bar\sigma_{ab}.$ 

We conclude that
\begin{align*}
\bar\sigma^{ab}g^{(0)}_{ab} &= 4K^{-1}\rho^{(-1)} - 2 |\bar\na \hat f|^2 +  \bna^a \bna^b \hat f \bna_a\bna_b \hat f - \frac{1}{2} \lt( \bar\Delta \hat f \rt)^2\\
&\quad + 2 \bna^e \mathcal{C}_{de} \bna^d \hat f + \frac{1}{2} C_{DE}C^{DE} - \pl_{\bar u} C_{AB} \pl_a g^A \pl_b g^B \bna^a \hat f \bna^b \hat f - \mathcal{C}_{ab} \bna^a \bna^b \hat f
\end{align*}
On the other hand, we have
\begin{align*}
\frac{1}{2} \bar\sigma^{ab} \sigma^{cd} g^{(1)}_{ac} g^{(1)}_{bd} = \frac{1}{2} \bar\sigma^{ab} \bar\sigma^{cd} \mathcal{C}_{ac}\mathcal{C}_{bd}  - 2 \mathcal{C}_{ab} \bna^a \bna^b \hat f + 2 \bna_a \bna_b \hat f \bna^a \bna^b \hat f - \lt( \bar\Delta \hat f\rt)^2.
\end{align*} From $\bar\sigma^{ab}g_{ab}^{(0)} = \frac{1}{2}\bar\sigma^{ab} \sigma^{cd} g_{ac}^{(1)}g_{bd}^{(1)}$ we get
\begin{align*}
\rho^{(-1)} &= \frac{1}{2}K |\bna \hat f|^2 + \frac{1}{4} K \lt( \bna_a \bna_b \hat f - \frac{1}{2} \bar\Delta \hat f \bar\sigma_{ab} \rt) \lt( \bna^a \bna^b \hat f - \frac{1}{2} \bar\Delta\hat f \bar\sigma^{ab} \rt) \\
&\quad -\frac{1}{2} K \bna^d \mathcal{C}_{de} \bna^d \hat f + \frac{1}{4} K \pl_{\bar u}C_{AB} \pl_a g^A \pl_b g^B \bna^a \hat f \bna^b \hat f - \frac{1}{4} K \mathcal{C}_{ab} \bna^a \bna^b \hat f\\
&= \frac{1}{2}K |\bna \hat f|^2 + \frac{1}{4} K \lt( \bna_a \bna_b \hat f - \frac{1}{2} \bar\Delta \hat f \bar\sigma_{ab} \rt) \lt( \bna^a \bna^b \hat f - \frac{1}{2} \bar\Delta\hat f \bar\sigma^{ab} \rt) \\
&\quad - \frac{1}{4}K^{-1} \pl_{\bar u}C_{AB} g^{(-1)A}g^{(-1)B} + \sigma_{AB} W^{(-2)A} g^{(-1)B} - \frac{1}{4}K \mathcal{C}_{ab}\bna^a \bna^b \hat f 
\end{align*}
where we used Lemma \ref{derivative of shear tensor} in the second equality.
\end{proof}

\begin{remark}
In the proof we showed that $g^{(0)}_{ab}$ is proportional to $\sigma_{ab}$. This is consistent with the Einstein equations and the outgoing radiation condition.
\end{remark}

We are ready to compute the transformation of the mass aspect function. We have
\begin{align*}
\pl_u \rho^{(-1)} &= K \bna^c K \na \hat f -\frac{1}{2} K \pl_u \lt( \bna^e \mathcal{C}_{de} \rt)\bna^d \hat f - \frac{1}{2} K \bna^e \mathcal{C}_{de} \bna^d K \\
&\quad + \frac{1}{4} K^2 \pl_{\bar u} \pl_{\bar u}C_{AB} \pl_a g^A \pl_b g^B \bna^a \hat f \bna^b \hat f + \frac{1}{2} K \pl_{\bar u} C_{AB} \pl_a g^A \pl_b g^B \bna^a K \bna^b \hat f \\
&\quad - \frac{1}{4} K \pl_u \mathcal{C}_{ab} \bna^a \bna^b \hat f \end{align*}
The mass aspect function by definition is the coefficient $g_{uu}^{(-1)}$ and transforms as
\begin{align*}
\mathfrak{m}&= mK^3 - K \pl_u a^{(-1)} - \pl_u a^{(-1)} \pl_u \rho^{(0)} \\
&\quad - K \pl_u \rho^{(-1)} + K^{-2} \sigma_{AB} \pl_u g^{(-1)A} \pl_u g^{(-2)B} + K\sigma_{AB} \pl_ug^{(-1)A} W^{(-2)B} \\
&\quad + K^{-1} \rho^{(0)} \sigma_{AB} \pl_u g^{(-1)A} \pl_u g^{(-1)B} \\
&\quad + \frac{1}{2} \lt( K^{-2}  \pl_D \sigma_{AB} g^{(-1)D} + K^{-1} C_{AB}\rt) \pl_u g^{(-1)A} \pl_u g^{(-1)B}\\
&= mK^3 + \frac{1}{2}K^2 \pl_u \bna^d \mathcal{C}_{de} \bna^e \hat f - \frac{1}{4} K^3 \pl_{\bar u}\pl_{\bar u} C_{AB} \pl_a g^A \pl_b g^B \bna^a \hat f \bna^b \hat f  \\
&\quad + \frac{1}{4}K^2 \pl_u \mathcal{C}_{ab} \bna^a \bna^b \hat f  -\frac{1}{2} K^2 \pl_{\bar u} C_{AB}  \pl_a g^A \pl_b g^B   \bna^a K \bna^b \hat f \\
&= mK^3 + \frac{1}{4} K^3 \pl_{\bar u}\pl_{\bar u} C_{AB} \pl_a g^A \pl_b g^B \bna^a \hat f \bna^b \hat f \\
&\quad + K^3 \pl_{\bar u} W^{(-2)}_A \pl_a g^A \bna^a \hat f + \frac{1}{4}K^3 \pl_{\bar u}C_{AB} \pl_a g^A \pl_b g^B \bna^a\bna^b \hat f \end{align*}
where we used Lemma \ref{derivative of shear tensor} in the last equality. This recovers Theorem \ref{mass aspect}. 

\subsection{Angular momentum aspect}
To get the transformation formula by regarding angular momentum aspect as a metric coefficient, one needs to further compute $a^{(-2)}$ and $g^{(-3)A}$. First of all, $g_{rr}^{(-3)}=0$ and Lemma \ref{g(-2)} give $a^{(-2)}$.
\begin{lem}\label{a-2}
\begin{align*}
a^{(-2)} = \frac{1}{2} \rho^{(0)} \sigma_{AB} g^{(-1)A} g^{(-1)B} + \frac{1}{4} C_{AB} g^{(-1)A} g^{(-1)B}
\end{align*}
\end{lem}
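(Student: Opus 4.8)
The plan is to extract the coefficient of $r^{-3}$ in the expansion of $g_{rr}$ recorded in the Setup and to impose that it vanishes. Since $(u,r,x^a)$ is itself a Bondi-Sachs coordinate system, the metric carries no $dr^2$ term, so $g_{rr}\equiv 0$ and in particular $g_{rr}^{(-3)}=0$. Reading off this coefficient gives
\[ 4K^{-1} a^{(-2)} + \lt( 2K^{-1} \rho^{(0)}\sigma_{AB} + K^{-2} \pl_D \sigma_{AB} g^{(-1)D} + K^{-1}C_{AB} \rt)g^{(-1)A} g^{(-1)B} + 4K^{-2} \sigma_{AB} g^{(-1)A}g^{(-2)B}=0, \]
which I would solve for $a^{(-2)}$ after substituting the expression for $g^{(-2)B}$ furnished by Lemma \ref{g(-2)}.

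Inserting $g^{(-2)B} = -K \rho^{(0)} g^{(-1)B}- \frac{1}{2} \Gamma_{ED}^B g^{(-1)E} g^{(-1)D} - \frac{1}{2} K \sigma^{BD}C_{DE} g^{(-1)E}$ into the last term produces three pieces. The $\rho^{(0)}$ piece contributes $-4K^{-1}\rho^{(0)}\sigma_{AB}g^{(-1)A}g^{(-1)B}$; the shear piece contributes $-2K^{-1}C_{AB}g^{(-1)A}g^{(-1)B}$ after using $\sigma_{AB}\sigma^{BD}=\delta_A^D$; and the Christoffel piece contributes $-2K^{-2}\sigma_{AB}\Gamma_{ED}^B g^{(-1)A}g^{(-1)E}g^{(-1)D}$.

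The one nontrivial step is the Christoffel piece. Writing $\sigma_{AB}\Gamma_{ED}^B = \frac{1}{2}(\pl_E\sigma_{AD}+\pl_D\sigma_{AE}-\pl_A\sigma_{ED})$ (the Christoffel symbol of the first kind) and contracting against the totally symmetric product $g^{(-1)A}g^{(-1)E}g^{(-1)D}$, I would observe that all three summands reduce to the same scalar, so the combination collapses to $\frac{1}{2} \pl_D\sigma_{AE}g^{(-1)A}g^{(-1)E}g^{(-1)D}$. Hence the Christoffel piece equals $-K^{-2}\pl_D\sigma_{AE}g^{(-1)A}g^{(-1)E}g^{(-1)D}$, which exactly cancels the explicit $K^{-2}\pl_D\sigma_{AB}g^{(-1)D}g^{(-1)A}g^{(-1)B}$ term already present in $g_{rr}^{(-3)}$. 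This cancellation of the first-derivative-of-$\sigma$ terms is the main (and essentially only) obstacle; once it is in place, the remaining contributions are elementary.

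Collecting the survivors leaves $4K^{-1}a^{(-2)} = 2K^{-1}\rho^{(0)}\sigma_{AB}g^{(-1)A}g^{(-1)B} + K^{-1}C_{AB}g^{(-1)A}g^{(-1)B}$, and dividing by $4K^{-1}$ yields the asserted formula $a^{(-2)} = \frac{1}{2} \rho^{(0)}\sigma_{AB}g^{(-1)A}g^{(-1)B} + \frac{1}{4} C_{AB}g^{(-1)A}g^{(-1)B}$.
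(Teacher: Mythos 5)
Your proposal is correct and follows exactly the route the paper takes: the paper's entire proof is the remark that $g_{rr}^{(-3)}=0$ together with Lemma \ref{g(-2)} determines $a^{(-2)}$, and you have simply carried out that substitution in full, including the one genuinely nontrivial point — that the symmetrized Christoffel contraction $\sigma_{AB}\Gamma^B_{ED}\,g^{(-1)A}g^{(-1)E}g^{(-1)D}$ collapses to $\tfrac{1}{2}\partial_D\sigma_{AE}\,g^{(-1)A}g^{(-1)E}g^{(-1)D}$ and cancels the explicit $\partial_D\sigma_{AB}$ term in $g_{rr}^{(-3)}$. All coefficients and signs check out, so the argument is complete.
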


Next we solve $g^{(-3)A}$ from $g_{ra}^{(-2)}=0$. Since we only need $\pl_u g^{(-3)B}$ later, it suffices to work in a normal coordinate. 
\begin{lem}\label{g-3} In a normal coordinate with $\Gamma^A_{BC} =0$ at $g(x_0)$, we have
\begin{align*}
g^{(-3)B}(x_0) &= \lt( -K\rho^{(-1)} - \frac{1}{12}\pl_{\bar u}C_{DE}g^{(-1)D}g^{(-1)E} + \frac{1}{16}K^2C_{DE}C^{DE} + \frac{1}{2}K\na^A C_{AD} g^{(-1)D}\rt) g^{(-1)B} \\
&\quad -\frac{1}{4}K \na^B C_{DE}g^{(-1)D}g^{(-1)E} \\ &\quad  - \frac{1}{4} K^3 \na^A C^B_A |\bna \hat f|^2  + K^2\rho^{(0)}C^B_D g^{(-1)D} + \frac{1}{6}K^2 \pl_{\bar u}C^B_D g^{(-1)D} |\bna \hat f|^2    \\
&\quad + \frac{1}{6}K^2|\bna\hat f|^2 g^{(-1)B} + K^2\rho^{(0)2}g^{(-1)B} - \frac{1}{6}\pl_D\Gamma^B_{EF}g^{(-1)D}g^{(-1)E}g^{(-1)F}.
\end{align*}
\end{lem}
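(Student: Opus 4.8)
The plan is to read off $g^{(-3)B}$ from the single scalar relation $g_{ra}^{(-2)}=0$, exactly as $g^{(-1)A}$ and $g^{(-2)A}$ were fixed by $g_{ra}^{(0)}=0$ and $g_{ra}^{(-1)}=0$ in Lemmas \ref{g(-1)} and \ref{g(-2)}. Setting $\theta^A := d\bar x^A + W^A d\bar u$, the barred metric \eqref{Bondi-Sachs2} reads $-UV d\bar u^2 - 2U d\bar u d\bar r + \bar r^2 h_{AB}\theta^A\theta^B$, and $g_{ra}$ is the $dr\,dx^a$-coefficient of its pullback. I would use the expansions of $\theta^A$ and of $\bar r^2 h_{AB}$ already tabulated at the start of this section: the $dr$-component of $\theta^A$ carries the term $-3g^{(-3)A}/r^4$, which multiplied by the leading factor $\bar r^2 h_{AB}\approx K^{-2}r^2\sigma_{AB}$ and by the leading $dx^a$-component $\theta^B_a\approx\pl_a g^B$ produces the \emph{only} contribution to $g_{ra}^{(-2)}$ that is linear in the unknown $g^{(-3)B}$, namely $-3K^{-2}\sigma_{AB}\pl_a g^A g^{(-3)B}$.

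Next I would collect the full $r^{-2}$ coefficient of $g_{ra}$. This is the heart of the bookkeeping: the factors $\bar u_r,\bar u_a,\bar r_r,\bar r_a$ and $\theta^A_r,\theta^A_a$, together with $U,V,h_{AB},W^A$ after re-expanding $\bar r$ in $r$ and the Taylor expansions of $\sigma_{AB}(\bar x)$ and $C_{AB}(\bar u,\bar x)$ about $(g(x),\hat f)$, must each be carried to the order that contributes at $r^{-2}$. Setting the result to zero gives a relation of the schematic form $\sigma_{AB}\pl_a g^A g^{(-3)B} = (\text{lower-order data})_a$, whose right-hand side is assembled from $g^{(-1)A}$, $g^{(-2)A}$, $a^{(-1)}$, $a^{(-2)}$, $\rho^{(0)}$, $\rho^{(-1)}$, the shear $C_{AB}$ and its derivatives, and the vacuum quantities $U^{(-2)}$ and $W^{(-2)A}$. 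To solve for $g^{(-3)B}$ I would invert $\pl_a g^A$ using the conformal relation $\sigma_{AB}\pl_a g^A = \bar\sigma_{ab}\pl_B g^b$ coming from \eqref{conformal}, exactly as in the proof of Lemma \ref{rho(0)}; contracting the relation with $\bar\sigma^{ac}\pl_c g^D$ then isolates $g^{(-3)D}$ cleanly, since $\bar\sigma^{ac}\pl_a g^A\pl_c g^D=\sigma^{AD}$.

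Finally I would substitute the already-established lemmas --- Lemma \ref{g(-1)} for $g^{(-1)A}$, Lemma \ref{rho(0)} for $\rho^{(0)}$, Lemma \ref{g(-2)} for $g^{(-2)B}$, Lemma \ref{a-2} for $a^{(-2)}$, and Lemma \ref{rho-1} for $\rho^{(-1)}$, together with \eqref{W-2} and \eqref{U-2} --- into the right-hand side and simplify. Because only $\pl_u g^{(-3)B}$ is needed downstream, I would fix a point $x_0$ and work in normal coordinates with $\bar\Gamma_{ab}^c(x_0)=\Gamma_{AB}^C(g(x_0))=\pl_a\pl_b g^A(x_0)=0$; this kills the undifferentiated Christoffel symbols but, crucially, not their first derivatives. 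The surviving $\pl_D\Gamma^B_{EF}$ (entering through the second derivative of $\sigma_{AB}(\bar x)$, precisely as in the $\RN{1}$ computation of Lemma \ref{rho-1}) is what produces the final curvature-type term $-\tfrac{1}{6}\pl_D\Gamma^B_{EF}g^{(-1)D}g^{(-1)E}g^{(-1)F}$ of the claimed formula.

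I expect the main obstacle to be purely the bookkeeping of the $r^{-2}$ order: several nested expansions must be multiplied and truncated consistently, and six previously computed quantities substituted without error. A secondary but genuine subtlety is the normal-coordinate accounting --- one must retain the second derivatives of $\sigma$ and the first derivatives of the Christoffel symbols even though the Christoffel symbols themselves vanish at $x_0$ --- since dropping these would silently lose the final term of the formula.
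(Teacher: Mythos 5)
Your proposal follows the paper's own proof essentially verbatim: the paper likewise extracts $g^{(-3)B}$ from $g_{ra}^{(-2)}=0$, where the only term linear in the unknown is $-3K^{-2}\sigma_{AB}\pl_a g^A g^{(-3)B}$, substitutes the previously established Lemmas \ref{g(-1)}, \ref{rho(0)}, \ref{g(-2)}, \ref{a-2}, \ref{rho-1} together with \eqref{U-2}, \eqref{W-2}, and works in normal coordinates at $x_0$ while retaining $\pl_D\pl_E\sigma_{AB}$ and $\pl_D\Gamma^F_{EB}$, which is exactly how the non-tensorial term $-\tfrac{1}{6}\pl_D\Gamma^B_{EF}g^{(-1)D}g^{(-1)E}g^{(-1)F}$ survives. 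The only content not spelled out in your plan is the bookkeeping itself (the paper organizes it by splitting into shear terms $\RN{1}$ and the rest $\RN{2}$, using Lemma \ref{symmetric_traceless} to rearrange $\na_D C_{AB}g^{(-1)B}g^{(-1)D}$), but the strategy and all its subtle points are the same.
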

\begin{proof}
 We have
\begin{align*}
&g_{ra}^{(-2)} \\
&= a^{(-1)} \na_a \hat f + \rho^{(-1)} \na_a \hat f - K^{-1} \na_a a^{(-2)} + a^{(-1)} \pl_a \rho^{(0)} - 2K^{-2} a^{(-2)} \na_a K - K U^{(-2)} \na_a \hat f\\
&\quad - K^{-2} \sigma_{AB} \lt( 3\pl_a g^A g^{(-3)B} + K^2 a^{(-1)} \pl_a g^A W^{(-2)B} + 2 \pl_a g^{(-1)A} g^{(-2)B} \rt)\\
&\quad -K^{-2}\sigma_{AB} \lt( \pl_a g^{(-2)A} g^{(-1)B} + K^2 W^{(-2)A} g^{(-1)B} \pl_a \hat f \rt)\\
&\quad -\lt( 2K^{-1}\rho^{(0)} \sigma_{AB} + K^{-1} C_{AB} + K^{-2} \pl_D \sigma_{AB} g^{(-1)D} \rt) \lt( 2\pl_a g^A g^{(-2)B} + \pl_a g^{(-1)A} g^{(-1)B} \rt)\\
&\quad -  \lt( 2K^{-1} \rho^{(-1)} + \rho^{(0)2} \rt) \sigma_{AB} \pl_a g^A g^{(-1)B}\\
&\quad -\pl_a g^A g^{(-1)B} \lt( 2 K^{-1} \rho^{(0)}\pl_D \sigma_{AB} g^{(-1)D} + K^{-2} ( \pl_D \sigma_{AB} g^{(-2)D} + \frac{1}{2} \pl_D\pl_E\sigma_{AB} g^{(-1)D} g^{(-1)E} )\rt)\\
&\quad - \pl_a g^A g^{(-1)B} \lt( \rho^{(0)}C_{AB} + K^{-1} \lt( \pl_D C_{AB} g^{(-1)D} + \pl_{\bar u} C_{AB} a^{(-1)} \rt) + \frac{1}{4} C_{DE}C^{DE}\sigma_{AB} \rt) \\
&= -3K^{-2} \sigma_{AB} \pl_a g^A g^{(-3)B} + 3 \rho^{(-1)}\na_a \hat f + \RN{1} + \RN{2},
\end{align*}
where we group terms containing $C_{AB}$ in $\RN{1}.$ We have
\begin{align*}
\RN{1} &= - \frac{1}{4} K^{-1} \lt( \pl_{\bar u} C_{AB} \na_a \hat f + \na_D C_{AB} \pl_a g^D \rt) g^{(-1)A} g^{(-1)B} + \frac{1}{16} K C_{DE} C^{DE} \na_a \hat f\\
&\quad + \frac{1}{4} K \na^B C_{AB} |\bna \hat f|^2 \pl_a g^A \\
&\quad + \frac{1}{2} K^{-1} \lt( \pl_{\bar u} C_{BE} \na_a \hat f + \na_D C_{BE} \pl_a g^D  \rt) g^{(-1)B} g^{(-1)E} - \frac{1}{2} \na^A C_{AB} \na_a \hat f g^{(-1)B}\\
&\quad + 4\rho^{(0)} C_{AB}  \pl_a g^A g^{(-1)B}  + \frac{1}{2} C_{DE} C^{DE} \sigma_{AB} \pl_a g^A g^{(-1)B} \\
&\quad -\rho^{(0)}C_{AB} \pl_a g^A g^{(-1)B} - K^{-1} \lt( \na_D C_{AB} g^{(-1)D} + \pl_{\bar u} C_{AB} a^{(-1)} \rt)\pl_a g^A g^{(-1)B} \\
&\quad - \frac{1}{4} C_{DE} C^{DE} \sigma_{AB} \pl_a g^A g^{(-1)B}\\
&= \frac{1}{4}K^{-1} \pl_{\bar u} C_{AB} g^{(-1)A} g^{(-1)B} \na_a \hat f - \frac{3}{16} K C_{DE} C^{DE} \na_a \hat f \\
&\quad + \frac{1}{4} K \pl_a g^A \na^B C_{AB} |\bna \hat f|^2 - \frac{1}{2} \na^B C_{AB} g^{(-1)A} \na_a \hat f + 3 \rho^{(0)} C_{AB} \pl_a g^A g^{(-1)B}\\
&\quad + \frac{1}{2} \pl_{\bar u}C_{AB} \pl_a g^A g^{(-1)B} |\bna\hat f|^2 \\
&\quad + K^{-1} g^{(-1)B} g^{(-1)D} \pl_a g^A \lt( \frac{1}{4} \na_A C_{DB} - \na_D C_{AB} \rt),
\end{align*}
We rearrange the last term by Lemma \ref{symmetric_traceless} \[ -\na_D C_{AB} g^{(-1)B} g^{(-1)D} = \lt( -\na_A C_{BD} - \na^E C_{AE} \sigma_{BD} + \na^E C_{DE} \sigma_{AB}\rt) g^{(-1)B}g^{(-1)D} \]
to obtain
\begin{align*}
\RN{1} &=  \frac{1}{4}K^{-1} \pl_{\bar u} C_{AB} g^{(-1)A} g^{(-1)B} \na_a \hat f - \frac{3}{16} K C_{DE} C^{DE} \na_a \hat f \\
&\quad - \frac{3}{4} K \pl_a g^A \na^B C_{AB} |\bna \hat f|^2 - \frac{3}{2} \na^B C_{AB} g^{(-1)A} \na_a \hat f + 3 \rho^{(0)} C_{AB} \pl_a g^A g^{(-1)B}\\
&\quad + \frac{1}{2} \pl_{\bar u}C_{AB} \pl_a g^A g^{(-1)B} |\bna\hat f|^2\\
&\quad - \frac{3}{4}K^{-1}\na_A C_{BD}\pl_a g^A g^{(-1)B}g^{(-1)D}.
\end{align*}
Next, we have
\begin{align*}
\RN{2} &= a^{(-1)} \na_a \hat f - K^{-1} \na_a \lt( \frac{1}{2}\rho^{(0)} \sigma_{AB} g^{(-1)A} g^{(-1)B}\rt) + a^{(-1)} \na_a \rho^{(0)} - K^{-2} \rho^{(0)}\sigma_{AB} g^{(-1)A} g^{(-1)B} \na_a K\\
&\quad + 2K^{-1} \rho^{(0)} \sigma_{AB} \pl_a g^{(-1)A} g^{(-1)B} + K^{-2}\sigma_{AB} \pl_a \lt( K\rho^{(0)} g^{(-1)A} + \frac{1}{2}\Gamma^A_{ED} g^{(-1)E} g^{(-1)D} \rt)g^{(-1)B} \\
&\quad + 4\rho^{(0)2}\sigma_{AB}\pl_a g^A g^{(-1)B} -2K^{-1} \rho^{(0)}\sigma_{AB} \pl_a g^{(-1)A} g^{(-1)B} - \rho^{(0)2}\sigma_{AB} \pl_a g^A g^{(-1)B}\\
&\quad - \frac{1}{2}K^{-2} \pl_D\pl_E \sigma_{AB} \pl_a g^A g^{(-1)B} g^{(-1)D} g^{(-1)E}.
\end{align*} 
We simplify the terms involving second derivatives of metric 
\begin{align*}
&\frac{1}{2}K^{-2} \lt( \sigma_{FB} \pl_A \Gamma^F_{ED} - \pl_D\pl_E\sigma_{AB}\rt)\pl_a g^A g^{(-1)B}g^{(-1)D}g^{(-1)E} \\
&= \frac{1}{2} K^{-1} \lt( R_{ADBE} -\pl_D \Gamma^F_{EB}\sigma_{AF}\rt) \pl_a g^A g^{(-1)B}g^{(-1)D}g^{(-1)E} 
\end{align*}
to obtain
\begin{align*}
\RN{2} &= -\frac{1}{2}K|\bna\hat f|^2 \na_a \hat f + 3 \rho^{(0)2}\sigma_{AB} \pl_a g^A g^{(-1)B} - \frac{1}{2} K^{-2} \pl_D \Gamma^F_{EB} \sigma_{AF} g^{(-1)B} g^{(-1)D} g^{(-1)E} \pl_a g^A.
\end{align*}
Hence, $g_{ra}^{(-2)}=0$ implies
\begin{align*}
&K^{-2}\sigma_{AB}\pl_a g^A g^{(-3)B} \\
&=  \rho^{(-1)} \na_a \hat f + \frac{1}{12}K^{-1} \pl_{\bar u} C_{AB} g^{(-1)A} g^{(-1)B} \na_a \hat f - \frac{1}{16} K C_{DE} C^{DE} \na_a \hat f \\
&\quad -\frac{1}{4}K^{-1} \na_A C_{BD} \pl_a g^A g^{(-1)B}g^{(-1)D} - \frac{1}{4} K \pl_a g^A \na^B C_{AB} |\bna \hat f|^2\\
&\quad - \frac{1}{2}\na^B C_{AB} g^{(-1)A}\na_a \hat f + \rho^{(0)} C_{AB}\pl_a g^A g^{(-1)B}\\
&\quad + \frac{1}{6} \pl_{\bar u}C_{AB} \pl_a g^A g^{(-1)B} |\bna \hat f|^2 + \frac{1}{2}K^{-1} \Gamma_{DE}^F \sigma_{AF} C_B^E \pl_a g^A g^{(-1)B}g^{(-1)D}\\
&\quad -\frac{1}{6}K|\bna\hat f|^2 \na_a \hat f +  \rho^{(0)2}\sigma_{AB} \pl_a g^A g^{(-1)B} - \frac{1}{6} K^{-2} \pl_D \Gamma^F_{EB} \sigma_{AF} g^{(-1)B} g^{(-1)D} g^{(-1)E} \pl_a g^A
\end{align*}
and we obtain
\begin{align*}
g^{(-3)B} &= \lt( -K\rho^{(-1)} - \frac{1}{12}\pl_{\bar u}C_{DE}g^{(-1)D}g^{(-1)E} + \frac{1}{16}K^2C_{DE}C^{DE} + \frac{1}{2}K\na^A C_{AD} g^{(-1)D}\rt) g^{(-1)B} \\
&\quad -\frac{1}{4}K \na^B C_{DE}g^{(-1)D}g^{(-1)E} \\ &\quad  - \frac{1}{4} K^3 \na^A C^B_A |\bna \hat f|^2  + K^2\rho^{(0)}C^B_D g^{(-1)D} + \frac{1}{6}K^2 \pl_{\bar u}C^B_D g^{(-1)D} |\bna \hat f|^2    \\
&\quad + \frac{1}{6}K^2|\bna\hat f|^2 g^{(-1)B} + K^2\rho^{(0)2}g^{(-1)B} - \frac{1}{6}\pl_D\Gamma^B_{EF}g^{(-1)D}g^{(-1)E}g^{(-1)F}.
\end{align*}
This completes the proof.
\end{proof}

The next proposition encodes the transformation of angular momentum aspect and recovers Theorem \ref{angular momentum aspect} by
\begin{align*}
N_A = \frac{3}{2} \lt( \bar g_{\bar u A}^{(-1)} + \frac{1}{16} \pl_A(C_{DE}C^{DE}) \rt).
\end{align*}
\begin{prop}
\begin{align*}
&g_{ua}^{(-1)} + \frac{1}{16} \na_a \lt( \sigma^{bd}\sigma^{ce} g^{(1)}_{bc} g^{(1)}_{de} \rt)\\
&= K^2 \pl_a g^A \lt( \bar g^{(-1)}_{\bar u A} + \frac{1}{16}\na_A \lt( C_{DE}C^{DE} \rt) \rt) + 2mK^2 \na_a \hat f\\
&\quad + K \pl_a g^A \lt( \na_B W^{(-2)}_A - \na_A W^{(-2)}_B \rt)g^{(-1)B} - K^2 \pl_a g^A \pl_{\bar u} W^{(-2)}_A |\bna\hat f|^2\\
&\quad - \frac{1}{2}K \pl_{\bar u}C^B_D C_{AB} \pl_a g^A g^{(-1)D} + \frac{1}{3} \pl_{\bar u} \pl_{\bar u} C_{AB} g^{(-1)A}g^{(-1)B} \na_a \hat f \\
&\quad  - 2K \pl_{\bar u} W^{(-2)}_A g^{(-1)A} \na_a \hat f + \frac{1}{6} K \pl_{\bar u} \pl_{\bar u} C_{AB} \pl_a g^A g^{(-1)B} |\bna \hat f|^2
\end{align*}
\end{prop}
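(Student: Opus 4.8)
The plan is to read $g_{ua}^{(-1)}$ off the metric in the new coordinates, exactly as the mass aspect $g_{uu}^{(-1)}$ was read off at the end of Section 4.2, and then to add the shear-correction term $\frac{1}{16}\na_a(\sigma^{bd}\sigma^{ce}g^{(1)}_{bc}g^{(1)}_{de})$. First I would expand $g_{ua}=\bar g(\vphi_*\pl_u,\vphi_*\pl_a)$ by substituting the BMS transformation into the Bondi--Sachs form \eqref{Bondi-Sachs2}, reusing the expansions of $\bar r^2 h_{AB}$ and $d\bar x^A+W^Ad\bar u$ already recorded in Section 4.1 together with the expansions of $U$, $V$, $d\bar u$, and $d\bar r$. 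Because one slot carries $\pl_a$ (leading order $r^0$) while the other carries $\pl_u$ (leading order $r^{-1}$), the coefficient of $r^{-1}$ reaches one order deeper than in the mass-aspect case: pairing $\bar r^2$ (order $r^2$) with the $r^{-3}$ coefficient of $\pl_u\bar x^A$ forces $\pl_u g^{(-3)A}$ to appear, alongside $g^{(-2)A}$, $\pl_u g^{(-2)A}$, $\rho^{(-1)}$, $a^{(-2)}$, and the lower coefficients.

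Next I would substitute the closed forms of these coefficients: $g^{(-1)A}$ from Lemma \ref{g(-1)}, $a^{(-1)}=-\tfrac12 K|\bna\hat f|^2$ and $\rho^{(0)}=\tfrac12\bar\Delta\hat f$ (Lemma \ref{rho(0)}), $g^{(-2)A}$ from Lemma \ref{g(-2)}, $\rho^{(-1)}$ from Lemma \ref{rho-1}, $a^{(-2)}$ from Lemma \ref{a-2}, and $g^{(-3)A}$ from Lemma \ref{g-3}. Only $\pl_u g^{(-3)A}$ enters, and since the purely geometric pieces of Lemma \ref{g-3} (those built from $\Gamma$, $\pl\Gamma$, and $u$-independent data) drop out under $\pl_u$, this derivative is manageable. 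I would work in normal coordinates at $x_0$ and $g(x_0)$, as in Lemmas \ref{rho-1} and \ref{g-3}, so that $\Gamma^A_{BC}(g(x_0))=\pl_a\pl_b g^A(x_0)=\bar\Gamma^c_{ab}(x_0)=0$ kills most connection terms, use Lemma \ref{derivative of shear tensor} to trade $\pl_u$ on pulled-back shear data for $\pl_{\bar u}$ and $\na$ derivatives, and apply the vacuum relations \eqref{U-2}, \eqref{W-2}, \eqref{dAB} to express everything through $C_{AB}$ and its derivatives.

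The collapse to the stated right-hand side is driven by the symmetric-traceless identities of Appendix A. Contractions of $C_{AB}$, $\pl_{\bar u}C_{AB}$, and their divergences against $g^{(-1)A}$ and $\na_a\hat f$ generate index patterns that Lemma \ref{symmetric_traceless} and Lemma \ref{multiple of metric} reduce to multiples of $\bar\sigma_{ab}$ or to the expected scalars; the antisymmetric combinations must assemble precisely into the curl $\na_B W^{(-2)}_A-\na_A W^{(-2)}_B$, and the symmetric remainder into the $\pl_{\bar u}W^{(-2)}_A$, $\pl_{\bar u}\pl_{\bar u}C_{AB}$, and $\pl_{\bar u}C^B_D\,C_{AB}$ structures. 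Finally I would compute $\frac{1}{16}\na_a(\sigma^{bd}\sigma^{ce}g^{(1)}_{bc}g^{(1)}_{de})$ from the shear transformation \eqref{shear}: its gradient splits into a pulled-back $\frac{1}{16}\na_A(C_{DE}C^{DE})$ plus cross terms coming from the $\bna\bna\hat f$ part of $g^{(1)}_{ab}$, and adding this to $g_{ua}^{(-1)}$ converts the bare expression into the combination whose leading piece is $K^2\pl_a g^A(\bar g^{(-1)}_{\bar u A}+\frac{1}{16}\na_A(C_{DE}C^{DE}))=\tfrac23 K^2\pl_a g^A N_A$.

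The main obstacle is twofold: the sheer number of order-$r^{-1}$ contributions that must be tracked without sign or coefficient error, and, more delicately, the $S^2$-curvature terms produced by $\pl_D\Gamma^B_{EF}$ in $g^{(-3)A}$ and by the second derivatives of $\sigma$. These must recombine — using the Gauss curvature of the round sphere together with the Appendix A identities — into the clean tensorial form claimed; it is exactly the mishandling of such terms that created the discrepancy with \cite{CJK}, so the decisive check is that the curvature and antisymmetric pieces land with the precise coefficients $\tfrac32 K$, $-K^2$, $-\tfrac12 K$, $\tfrac13$, $-2K$, and $\tfrac16 K$.
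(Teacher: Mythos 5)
Your plan reproduces the paper's own proof essentially step for step: the paper likewise reads $g_{ua}^{(-1)}$ off the Section 4.1 expansions (where pairing with $\pl_u$ forces $\pl_u g^{(-3)A}$, $\rho^{(-1)}$, and $a^{(-2)}$ into the order-$r^{-1}$ coefficient), substitutes Lemmas \ref{g(-1)}, \ref{rho(0)}, \ref{g(-2)}, \ref{rho-1}, \ref{a-2}, \ref{g-3} in normal coordinates at $x_0$ and $g(x_0)$, splits the result into shear and non-shear groups, collapses the $\pl_D\Gamma^B_{EF}$ and $\pl_D\pl_E\sigma_{AB}$ contributions via the Gauss curvature of the round sphere together with Lemma \ref{symmetric_traceless} and the identity $\na_D C_{AB}-\na_A C_{DB}=\na^E C_{AE}\sigma_{DB}-\na^E C_{DE}\sigma_{AB}$ of \cite{KWY}, and finally absorbs $\frac{1}{16}\na_a\lt(\sigma^{bd}\sigma^{ce}g^{(1)}_{bc}g^{(1)}_{de}\rt)$ to produce the combination $K^2\pl_a g^A\lt(\bar g^{(-1)}_{\bar u A}+\frac{1}{16}\na_A(C_{DE}C^{DE})\rt)$. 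One trivial slip in your final checklist: the curl term's target coefficient in this Proposition is $K$, not $\tfrac32 K$; the latter is the coefficient in Theorem \ref{angular momentum aspect}, which differs from the Proposition by the overall factor $\tfrac32$ relating $\mathfrak{N}_a$ to $g_{ua}^{(-1)}+\frac{1}{16}\na_a\lt(\sigma^{bd}\sigma^{ce}g^{(1)}_{bc}g^{(1)}_{de}\rt)$.
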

\begin{proof}
The metric component $g_{ua}^{(-1)}$ is equal to 
\begin{align*}
& 2m K^2 \na_a \hat f - \pl_u a^{(-1)} \na_a \hat f -K \na_a a^{(-1)}\\
& -\na_a a^{(-1)} \pl_u \rho^{(0)} - \na_a \hat f \pl_u \rho^{(-1)} - K \na_a \rho^{(-1)} - \pl_u a^{(-1)} \na_a \rho^{(0)} + \underline{ K^{-2} \na_a K \pl_u a^{(-2)} } + K U^{(-2)} \na_a K\\
& + K^{-2} \sigma_{AB} \pl_a g^A \Big(  \pl_u g^{(-3)B} + K^4  W^{(-3)B} + K^2 \pl_u a^{(-1)} W^{(-2)B}\\
&\hspace{4cm}  - 2K^4 \rho^{(0)} W^{(-2)B} +K^3 \pl_D W^{(-2)B}g^{(-1)D} + K^3 \pl_{\bar u}W^{(-2)B} a^{(-1)}\Big) \\
& + K^{-2} \sigma_{AB} \pl_a g^{(-1)A} \lt( \pl_u g^{(-2)B} + K^3  W^{(-2)B}  \rt)\\
& + K^{-2} \sigma_{AB} \lt( \underline{ \pl_a g^{(-2)A} } + K^2 W^{(-2)A} \na_a \hat f \rt)\pl_u g^{(-1)B}\\
& + \lt( 2K^{-1} \rho^{(0)}\sigma_{AB} + K^{-1}C_{AB} + K^{-2} \pl_D \sigma_{AB} g^{(-1)D} \rt) \cdot \\
& \hspace{4cm}\lt( \underline{ \pl_a g^A \pl_u g^{(-2)B} }+ K^3 \pl_a g^A W^{(-2)B} + \pl_a g^{(-1)A} \pl_u g^{(-1)B} \rt)\\
& + \pl_a g^A \pl_u g^{(-1)B} \lt[ \lt( 2K^{-1}\rho^{(-1)} + \rho^{(0)2} \rt)\sigma_{AB} + 2K^{-1} \rho^{(0)} \pl_D \sigma_{AB} g^{(-1)D}  \rt]\\
& + \pl_a g^A \pl_u g^{(-1)B} \lt[ K^{-2} \lt( \pl_D \sigma_{AB} g^{(-2)D} + \frac{1}{2} \pl_D \pl_E \sigma_{AB} g^{(-1)D}g^{(-1)E} \rt) + \underline{ \rho^{(0)}C_{AB} } \rt]\\
& + \pl_a g^A \pl_u g^{(-1)B} \lt[ K^{-1} \lt( \pl_D C_{AB} g^{(-1)D} + \pl_{\bar u}C_{AB} a^{(-1)}\rt) + \underline{ \frac{1}{4} C_{DE}C^{DE} \sigma_{AB} } \rt].
\end{align*}
We work in a normal coordinate $\bar x^A$ centered at $g(x_0)$. By Lemma \ref{g(-2)} and Lemma \ref{a-2} we have cancellation among underlined terms in the previous expression and get
\begin{align*}
g_{ua}^{(-1)} &= K^2 \pl_a g^A \bar g_{\bar u A}^{(-1)} + 2mK^2 \na_a \hat f - \na_a \hat f \pl_u \rho^{(-1)} - \na_a (K\rho^{(-1)})\\
&\quad  + \rho^{(-1)} \na_a K + \pl_a g^A \pl_u g^{(-1)B} \cdot 2K^{-1}\rho^{(-1)} \sigma_{AB}+ K^{-2}\sigma_{AB}\pl_a g^A \pl_u g^{(-3)B} \\
&\quad - \pl_u a^{(-1)} \na_a \hat f -K \na_a a^{(-1)} \\
&\quad -\na_a a^{(-1)} \pl_u \rho^{(0)} - \pl_u a^{(-1)} \na_a \rho^{(0)} + \frac{1}{4}K^{-1}\na_a K \pl_{\bar u}C_{AB} g^{(-1)A}g^{(-1)B} + K U^{(-2)} \na_a K\\
&\quad + \sigma_{AB} \pl_a g^A \Big( \pl_u a^{(-1)} W^{(-2)B} +K \na_D W^{(-2)B}g^{(-1)D} + K \pl_{\bar u}W^{(-2)B} a^{(-1)}\Big) \\
&\quad + K^{-2} \sigma_{AB} \pl_a g^{(-1)A} \lt( -\frac{1}{2}K^2 \pl_{\bar u}C^B_D g^{(-1)D} \rt) + K\pl_a g^{(-1)A} W_A^{(-2)}\\
&\quad + K^{-2} \sigma_{AB} \lt( -\pl_a \lt( K\rho^{(0)} g^{(-1)B} \rt) -\frac{1}{2} \pl_a \Gamma^B_{ED}g^{(-1)E}g^{(-1)E} - \frac{1}{2}K \pl_a C^A_D g^{(-1)D}  \rt)\pl_u g^{(-1)B} \\
&\quad + W_B^{(-2)} \pl_u g^{(-1)B} \na_a \hat f\\
&\quad + 2\rho^{(0)}\sigma_{AB} \pl_a g^A \lt( -K \pl_u(\rho^{(0)}g^{(-1)B}) - \frac{1}{2} \pl_u C^B_D g^{(-1)D} + K^{-1}\pl_a g^{(-1)A} \pl_u g^{(-1)B}\rt)\\
&\quad + C_{AB} \pl_a g^A \lt( -\pl_u(\rho^{(0)}g^{(-1)B} ) -\frac{1}{2} \pl_u C^B_D g^{(-1)D} \rt) \\
&\quad + \pl_a g^A \pl_u g^{(-1)B} \sigma_{AB}\rho^{(0)2} \\
&\quad + \pl_a g^A \pl_u g^{(-1)B} \cdot \frac{1}{2}K^{-2} \pl_D \pl_E \sigma_{AB} g^{(-1)D}g^{(-1)E} \\
&\quad + \pl_a g^A \pl_u g^{(-1)B} \cdot K^{-1} \lt( \na_D C_{AB} g^{(-1)D} + \pl_{\bar u}C_{AB} a^{(-1)}\rt) ,
\end{align*}
where we collect $\rho^{(-1)}$ and $g^{(-3)B}$ terms in the first two lines. 

We write $g_{ua}^{(-1)} = K^2 \pl_a g^A \bar g_{\bar uA}^{(-1)} + 2mK^2 \na_a \hat f + \RN{1}+\RN{2}$ where $\RN{1}$ collects terms involving the shear tensor $C_{AB}$. By Lemma \ref{rho-1} and Lemma \ref{g-3}, $\RN{1}$  contains
\begin{align*}
&\frac{1}{4}K^{-1}\na_a K \pl_{\bar u}C_{AB} g^{(-1)A} g^{(-1)B} + K U^{(-2)}\na_a K\\
&-K \na_b K \bna^b \hat f pl_a g^A W_A^{(-2)} + K\pl_a g^A \na_D W_A^{(-2)} g^{(-1)D} - \frac{1}{2}K^2 \pl_a g^A \pl_{\bar u} W_A^{(-2)} |\bna \hat f|^2\\
&\uwave{-\frac{1}{2} \pl_{\bar u}C_{AB} \pl_a g^{(-1)A} g^{(-1)B}} + \uwave{K \pl_a g^{(-1)A} W_A^{(-2)}}\\
&-\frac{1}{2}K^{-1} \pl_a C_{BD} g^{(-1)D} \pl_u g^{(-1)B} + \uwave{W_B^{(-2)} \pl_u g^{(-1)B} \na_a \hat f}\\
& \underline{- \rho^{(0)} \pl_a g^A \pl_u C_{AD} g^{(-1)D} - C_{AB} \pl_a g^A \pl_u (\rho^{(0)}g^{(-1)B}) } -\frac{1}{2} K \pl_{\bar u} C^B_D C_{AB} g^{(-1)D}\pl_a g^A\\
& + K^{-1} \na_D C_{AB} g^{(-1)D} \pl_a g^A \pl_u g^{(-1)B} - \frac{1}{2} \pl_{\bar u}C_{AB} \pl_a g^A \pl_u g^{(-1)B} |\bna \hat f|^2 
\end{align*}
plus terms in $- \na_a \hat f \pl_u \rho^{(-1)} - \na_a (K\rho^{(-1)}) + \rho^{(-1)} \na_a K + K^{-2}\sigma_{AB}\pl_a g^A \pl_u g^{(-3)B}$:
\begin{align*}
&\dashuline{ \frac{1}{4}\pl_a g^A \pl_{\bar u} \na_A C_{BD}g^{(-1)B}g^{(-1)D} } + \frac{1}{4}\pl_{\bar u}\pl_{\bar u}C_{BD}\na_a \hat f g^{(-1)B}g^{(-1)D} + \uwave{ \frac{1}{2}\pl_{\bar u}C_{AB}\pl_a g^{(-1)A}g^{(-1)B} }\\
&-\na_a K W_A^{(-2)}g^{(-1)A} - K\pl_a g^A\na_A W^{(-2)}_B g^{(-1)B} - K\pl_{\bar u}W^{(-2)}_B g^{(-1)B}\na_a \hat f\\
& \uwave{ -K W^{(-2)}_A \pl_a g^{(-1)A} } + \pl_a \lt( \frac{1}{4}K^2 \mathcal{C}_{de} \bna^d\bna^e \hat f\rt)\\
&+\frac{1}{12}\pl_{\bar u}\pl_{\bar u}C_{AB}g^{(-1)A}g^{(-1)B}\na_a \hat f + \frac{1}{6}\pl_{\bar u}C_{AB} g^{(-1)A} \pl_ug^{(-1)B}\na_a \hat f + \frac{1}{12}K^{-1}\na_a K \pl_{\bar u}C_{AB}g^{(-1)A}g^{(-1)B}\\
&-\frac{1}{16}K^2\pl_{\bar u}(C_{DE}C^{DE})\na_a \hat f - \frac{1}{16}KC_{DE}C^{DE}\na_a K\\
& \dashuline{ -\frac{1}{4}\pl_{\bar u}\na_A C_{BD}\pl_a g^A g^{(-1)B}g^{(-1)D} } - \frac{1}{2}K^{-1}\na_A C_{BD}\pl_a g^A \pl_u g^{(-1)B}g^{(-1)D}\\
&-\frac{1}{2}K^2 \pl_a g^A \pl_{\bar u}W^{(-2)}_A |\bna\hat f|^2 - K \pl_b K \bna^b \hat f \pl_a g^A W^{(-2)}_A\\
&-K\pl_{\bar u}W^{(-2)}_A g^{(-1)A}\pl_a \hat f - \uwave{ W^{(-2)}_A \pl_u g^{(-1)A} \na_a \hat f } - W^{(-2)}_A g^{(-1)A}\na_a K + \underline{ \pl_u \lt( \rho^{(0)}C_{AB}\pl_a g^A g^{(-1)B} \rt) }\\
&+ \frac{1}{6}K \pl_{\bar u}\pl_{\bar u}C_{AB} \pl_a g^A g^{(-1)B} |\bna\hat f|^2 + \frac{1}{6} \pl_a g^A \pl_u g^{(-1)B} \pl_{\bar u}C_{AB} |\bna \hat f|^2 + \frac{1}{3} \na_b K \bna^b \hat f \pl_a g^A \pl_{\bar u}C_{AB} g^{(-1)B}
\end{align*}
where we underlined cancelled terms.

Next, $\RN{2}$ contains, by Lemma \ref{g(-2)},\begin{align*}
&-\pl_u a^{(-1)}\na_a \hat f - K \na_a a^{(-1)} - \na_a a^{(-1)} \pl_u \rho^{(0)} - \pl_u a^{(-1)} \na_a \rho^{(0)}\\
&+K^{-2}\na_a K \lt( \frac{1}{2} \pl_u \rho^{(0)}\sigma_{AB}g^{(-1)A}g^{(-1)B} + \dashuline{ \rho^{(0)}\sigma_{AB}g^{(-1)A}\pl_u g^{(-1)B} } \rt)\\
&+K^{-2}\sigma_{AB}\pl_a g^{(-1)A}  \lt( -K\pl_u\rho^{(0)} g^{(-1)B} - \dashuline{ K\rho^{(0)}\pl_ug^{(-1)B} } \rt)\\
&+K^{-2}\sigma_{AB} \lt( \dashuline{ -\na_a K \rho^{(0)} g^{(-1)A}} - K\na_a \rho^{(0)}g^{(-1)A}\rt) \pl_u g^{(-1)B}\\
&+K^{-2}\sigma_{AB}\lt( \dashuline{ -K\rho^{(0)}\pl_a g^{(-1)A} } - \frac{1}{2}\pl_a\Gamma^A_{ED}g^{(-1)E}g^{(-1)D} \rt)\pl_ug^{(-1)B}\\
&+2K^{-1}\rho^{(0)}\sigma_{AB}\pl_a g^A \lt( \underline{ -K\pl_u \rho^{(0)}g^{(-1)B} } - \uwave{ K\rho^{(0)}\pl_u g^{(-1)B} } + \dashuline{ \pl_a g^{(-1)A} \pl_u g^{(-1)B} } \rt)\\
&+\pl_a g^A \pl_u g^{(-1)B} \lt( \uwave{ \rho^{(2)2}\sigma_{AB} } + \frac{1}{2}K^{-2}\pl_D\pl_E \sigma_{AB} g^{(-1)D} g^{(-1)E} \rt)
\end{align*}
plus terms from $- \na_a \hat f \pl_u \rho^{(-1)} - \na_a (K\rho^{(-1)}) + \rho^{(-1)} \na_a K + K^{-2}\sigma_{AB}\pl_a g^A \pl_u g^{(-3)B}$:
\begin{align*}
&\na_a \lt( Ka^{(-1)} \rt) - \na_a \lt( \frac{1}{4}K^2 \lt( \bna_b\bna_c \hat f - \frac{1}{2}\bar\Delta \hat f \sigma_{bc} \rt)\lt( \bna^b\bna^c \hat f - \frac{1}{2}\bar\Delta \hat f \bar\sigma^{bc}\rt)\rt)\\
&+\pl_u \lt( \frac{1}{3}a^{(-1)}\na_a \hat f \rt) + \underline{ 2\pl_u \rho^{(0)} \rho^{(0)}\sigma_{AB}\pl_a g^A g^{(-1)B} } + \uwave{ \rho^{(0)2}\sigma_{AB}\pl_a g^A \pl_u g^{(-1)B} } \\
&-K^{-2} \lt( \frac{1}{6}\pl_D \Gamma^F_{EB}\sigma_{AF}  g^{(-1)B} \pl_u g^{(-1)D} \pl_a g^A + \frac{1}{3} \pl_D \Gamma^F_{EB}\sigma_{AF} \pl_u g^{(-1)B} g^{(-1)D} \rt) g^{(-1)E}  \pl_a g^A
\end{align*}
where we underlined cancelled terms. Hence,
\begin{align*}
\RN{2} &= -\frac{1}{4} \na_a \lt( K^2 \lt| \bna\bna \hat f - \frac{1}{2} \bar\Delta \hat f \bar\sigma\rt|^2 \rt) \\
&\quad + \pl_u \rho^{(0)} \lt( -\na_a a^{(-1)} + \frac{1}{2} |\bna\hat f|^2 \na_a K - \frac{1}{2}K^{-1} \na_a \lt( K^2 |\bna\hat f|^2 \rt) \rt) \\
&\quad + \na_a \rho^{(0)} \lt( -\pl_u a^{(-1)} - K^{-1} \sigma_{AB} g^{(-1)A} \pl_u g^{(-1)B} \rt)\\
&\quad - \frac{2}{3} \pl_u a^{(-1)} \na_a \hat f + \frac{4}{3} a^{(-1)} \na_a K \\
&\quad + K^{-2} \lt[ -\frac{1}{2}\sigma_{FB}\pl_A \Gamma^F_{ED} + \frac{1}{2} \pl_D\pl_E \sigma_{AB} - \frac{1}{6} \pl_B \Gamma^F_{ED} \sigma_{AF} - \frac{1}{3} \pl_D\Gamma^F_{EB}\sigma_{AF} \rt] \cdot \\
&\hspace{10cm} g^{(-1)D} g^{(-1)E} \pl_a g^A \pl_u g^{(-1)B}.
\end{align*}
It is easy to see both the second and the third lines vanish. We claim that the last two lines cancel. Indeed, we have
$\pl_D \pl_E \sigma_{AB} = \pl_D \lt( \Gamma^F_{EA} \sigma_{FB} + \Gamma^F_{EB} \sigma_{AF} \rt) $ and the bracket in the last line becomes
\begin{align*}
\frac{1}{2} \pl_D \Gamma^F_{EA}\sigma_{FB} - \frac{1}{2} \pl_A \Gamma^F_{ED}\sigma_{FB} + \frac{1}{6} \pl_D \Gamma^F_{EB}\sigma_{AF} - \frac{1}{6}\pl_B \Gamma^F_{ED}\sigma_{AF} = \frac{1}{2} R_{DABE} + \frac{1}{6}R_{DBAE}\\
= \frac{1}{2} \lt( \sigma_{DB}\sigma_{AE} - \sigma_{DE}\sigma_{AB}\rt) + \frac{1}{6}\lt( \sigma_{DA}\sigma_{BE} - \sigma_{DE} \sigma_{BA} \rt).
\end{align*}

Moreover, we observe that the terms
\begin{align*}
K U^{(-2)}\na_a K + \frac{1}{4} \na_a \lt( K^2 \mathcal{C}_{de} \bna^d \bna^e \hat f \rt) - \frac{1}{16} K^2 \pl_{\bar u}(C_{DE}C^{DE}) \na_a \hat f - \frac{1}{16} K C_{DE}C^{DE} \na_a K\\
- \frac{1}{4} \na_a \lt( K^2 \lt| \bna\bna \hat f - \frac{1}{2} \bar\Delta \hat f \bar\sigma \rt|^2 \rt)
\end{align*}
can be rearranged as
\[ - \frac{1}{16} \na_a \lt( \sigma^{bd}\sigma^{ce} g^{(1)}_{bc} g^{(1)}_{de} \rt) + \frac{1}{16}K^2 \pl_a \na_A (C_{DE}C^{DE}). \]
 
Putting these together, we obtain
\begin{align*}
&g_{ua}^{(-1)} + \frac{1}{16} \na_a \lt( \sigma^{bd}\sigma^{ce} g^{(1)}_{bc} g^{(1)}_{de} \rt)\\
&= K^2 \pl_a g^A \lt( \bar g^{(-1)}_{\bar u A} + \frac{1}{16}\na_A \lt( C_{DE}C^{DE} \rt) \rt) + 2mK^2 \na_a \hat f\\
&\quad + K \pl_a g^A \lt( \na_B W^{(-2)}_A - \na_A W^{(-2)}_B \rt)g^{(-1)B} - K^2 \pl_a g^A \pl_{\bar u} W^{(-2)}_A |\bna\hat f|^2\\
&\quad - \frac{1}{2}K \pl_{\bar u}C^B_D C_{AB} \pl_a g^A g^{(-1)D} + \frac{1}{3} \pl_{\bar u} \pl_{\bar u} C_{AB} g^{(-1)A}g^{(-1)B} \na_a \hat f \\
&\quad  - 2K \pl_{\bar u} W^{(-2)}_A g^{(-1)A} \na_a \hat f + \frac{1}{6} K \pl_{\bar u} \pl_{\bar u} C_{AB} \pl_a g^A g^{(-1)B} |\bna \hat f|^2\\
&\quad + \frac{1}{3}K^{-1}\na_a K \pl_{\bar u}C_{AB}g^{(-1)A}g^{(-1)B} - 2K \na_b K \bna^b \hat f \pl_a g^A W_A^{(-2)}\\
&\quad + \pl_a g^A \pl_u g^{(-1)B} \lt( K^{-1}\na_D C_{AB}g^{(-1)D} - K^{-1}\na_A C_{BD} g^{(-1)D} - \frac{1}{3}\pl_{\bar u}C_{AB} |\bna \hat f|^2 \rt)\\
&\quad - \frac{1}{3}K^{-1} \pl_{\bar u}C_{AB} g^{(-1)A}\pl_u g^{(-1)B} \na_a \hat f - 2 \na_a K W_A^{(-2)} g^{(-1)A} + \frac{1}{3}\na_b K \bna^b \hat f \pl_a g^A \pl_{\bar u}C_{AB} g^{(-1)B}.
\end{align*}
We claim that the last three lines vanish. Indeed, it is straightforward to check the terms involving 
$\pl_{\bar u}C_{AB}$ vanish by Lemma \ref{g(-1)}. We are left with
\begin{align*}
-K \na_b K \bna^b \hat f \pl_a g^A \na^B C_{AB} -  \pl_a g^A \pl_b g^B \bna^b K (\na_D C_{AB} - \na_A C_{BD}) g^{(-1)D} - \na_a K \na^B C_{AB} g^{(-1)A},
\end{align*}
which vanishes by Proposition A.1 of \cite{KWY}, $\na_D C_{AB} - \na_A C_{DB} = \na^E C_{AE} \sigma_{DB} - \na^E C_{DE} \sigma_{AB}$. This completes the proof.    
\end{proof}

\section{Dray-Streubel angular momentum of a general section}
As an application of the transformation formula, we study the Dray-Streubel angular momentum \cite{DS} of a general section. Consider a $u=$constant section, say $u=0$, in $\mathscr{I}^+$ that corresponds to $r = \infty$ in a Bondi-Sachs coordinate system $(u,r,x^A)$. Its Dray-Streubel angular momentum  is defined as
\begin{align*}
J(u=0) = \int_{S^2} Y^A \lt( N_A(0,x) - \frac{1}{4}\lt( C_{AB}\na_D C^{BD}\rt)(0,x) \rt).
\end{align*}
Here $Y^A$  is a rotation Killing vector field on $S^2$ and the integral $\int_{S^2}$ is taken with respect to the standard metric $\sigma_{AB}$ (we omit the area form).

Let $f(x)$ be a positive function on $S^2$. The Dray-Streubel angular momentum of the section $u=f(x)$ is defined as the angular momentum of the section $u=0$ plus the Ashtekar-Streubel flux \cite{AS}
\begin{align} \label{DS1}
J(u=f) = J(u=0) - \frac{1}{4} \int_{\Omega} N^{AB} \mathcal{L}_Y C_{AB}
\end{align}
where $\mathcal{L}$ denotes the Lie derivative. Here $\Omega$ is the region in $\mathcal{I}^+$ bounded by the $u=0$ and $u=f(x)$ sections. The integral of a function $F(u,x)$ on $\mathcal{I}^+$ is defined as $\int_\Omega F(u,x) = \int_{S^2} \lt( \int_{0}^{f(x)} F(u',x) du' \rt)$.

There is another definition of $J(u=f)$, see \cite{Geroch, CPWWWY}.  Consider the supertranslation $u = u' + f(x)$ under which the $u=f(x)$ section corresponds to the $u'=0$ section. The Dray-Streubel angular momentum of $u=f(x)$ can be defined alternatively as
\begin{align}\label{DS2}
J(u=f) = \int_{S^2} Y^A \lt( N_A' - \frac{1}{4} C'_{AB} \na_D C^{'BD} \rt) - 2 Y^A \na_A f m'
\end{align}
where $N_A', C_{AB}', m'$ are the Bondi data of $u'=0$ hypersurface in the Bondi-Sachs coordinate system $(u',r',x^A)$. The right-hand side is physically the charge of the $u'=0$ section associated with the BMS vector field $Y^A \frac{\pl}{\pl X^A} - Y^A \na_A f \frac{\pl}{\pl u'}$.

It is known that the two definitions are equivalent. In the first subsection, we give a direct proof of this fact using the transformation formulae for the Bondi data. In the second subsection, we reformulate our calculations in terms of  differential forms and obtain an explicit formula for the Drey-Streubel angular momentum of a general section. In the third subsection, the results are generalize to the BMS charge of a general section.

\subsection{The equivalence of the two definitions}

The goal of this subsection is to prove the following theorem using the transformation formulae for the Bondi data, asserting the equivalence of \eqref{DS1} and \eqref{DS2}:
\begin{theorem}\label{DS}
\begin{align*}
&\int_{S^2} Y^A \lt( N_A' - \frac{1}{4} C'_{AB} \na_D C^{'BD} - 2 Y^A \na_A f m' \rt)\\
&= \int_{S^2} Y^A \lt( N_A (0,x) - \frac{1}{4} \lt( C_{AB}\na^D C_{BD} \rt)(0,x) \rt) - \frac{1}{4}\int_\Omega N^{AB} \mathcal{L}_Y C_{AB}
\end{align*}
\end{theorem}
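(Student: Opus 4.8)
The plan is to feed the pure-supertranslation case of the Bondi-data transformation formulae into the charge expression \eqref{DS2} and match the outcome term by term with \eqref{DS1}. First I would specialize the three transformation formulae to the supertranslation $u = u' + f(x)$: this means setting the conformal factor $K=1$ and $g^A = \mathrm{id}$, so that $\bna = \na$, $\hat f = u' + f(x)$, and $\na_a \hat f = \na_a f$. Under this specialization the angular-momentum formula \eqref{final_angular_formula}, the shear formula \eqref{shear}, and the mass formula \eqref{final_mass_formula} express the primed data $N_A'$, $C_{AB}'$, and $m'$ on the $u'=0$ section purely in terms of the unprimed data $N_A$, $m$, $C_{AB}$ and their $u$-derivatives, all evaluated at the shifted argument $u = f(x)$. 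In particular $C'_{AB}(0,x) = -\,(2\na_A\na_B f - \na^D\na_D f\,\sigma_{AB}) + C_{AB}(f,x)$.

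Next I would substitute these expressions into the left-hand side of Theorem \ref{DS} and integrate over $S^2$. Here the rotation Killing field $Y$ is used repeatedly: since $\na_{(a}Y_{b)}=0$ and $\na_a Y^a = 0$, integration by parts moves derivatives off $Y$, and the symmetric-traceless identities collected in Appendix A (e.g.\ Lemma \ref{symmetric_traceless}) let me rewrite the various contractions of $C_{AB}$, $\na^D C_{DA}$, $\pl_{\bar u}C_{AB}$, and $\pl_{\bar u}\pl_{\bar u}C_{AB}$ against $\na f$ into a short list of terms. After this reduction the integrand separates into a piece reproducing the naive angular-momentum density $Y^A\big(N_A - \tfrac14 C_{AB}\na_D C^{BD}\big)$ evaluated at $u=f(x)$, together with a collection of correction terms each carrying an explicit $\na_a f$ inherited from the supertranslation.

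The decisive step is to bring in the $u=0$ data and the flux. Applying the fundamental theorem of calculus pointwise in $x$, I would write $\big(N_A - \tfrac14 C_{AB}\na_D C^{BD}\big)\big|_{u=f(x)} = \big(\cdots\big)\big|_{u=0} + \int_0^{f(x)}\pl_u\big(\cdots\big)\,du$, which converts the $u=f$ integral into the $u=0$ charge $J(u=0)$ plus an integral over the region $\Omega$. Because the upper limit $f(x)$ depends on $x$, integrating the total $S^2$-divergences produced in the previous step leaves Leibniz boundary contributions proportional to $\na_a f$; the goal is to check that these match exactly the supertranslation correction terms isolated above, so that they cancel. It then remains to invoke the vacuum Bondi evolution equations implied by $Ric=0$ (Section 2.3) — the mass-loss formula for $\pl_u m$ and the evolution of the angular-momentum aspect — to rewrite the surviving bulk integral $\int_\Omega \pl_u\big(N_A - \tfrac14 C_{AB}\na_D C^{BD}\big)$ as the Ashtekar-Streubel flux $-\tfrac14\int_\Omega N^{AB}\mathcal{L}_Y C_{AB}$, using $N^{AB} = \pl_u C^{AB}$ and $\mathcal{L}_Y C_{AB} = Y^D\na_D C_{AB} + C_{DB}\na_A Y^D + C_{AD}\na_B Y^D$; the $m$-terms in the evolution drop after integration against the divergence-free $Y$.

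The main obstacle I anticipate is precisely this bookkeeping at the interface of the spatial integration by parts and the $x$-dependent $u$-integration: one must verify that every supertranslation-induced $\na f$ term matches a Leibniz boundary term from the moving upper limit, and that the remaining bulk integral collapses to the Ashtekar-Streubel flux with no residual $m$- or $N_A$-dependence. Getting the numerical coefficients to line up — in particular the interplay between the $3m\,\na_a f$ term appearing in $N_A'$ and the $-2Y^A\na_A f\,m'$ term in \eqref{DS2}, together with the $\pl_{\bar u}\pl_{\bar u}C_{AB}$ and $\pl_{\bar u}C^B_D C_{AB}$ contributions — is where the calculation is most delicate. A cleaner variant, should the direct route become unwieldy, is to differentiate the claimed identity along the family of supertranslations $f_s = s f$ and reduce to the infinitesimal (linearized) statement, which isolates the evolution identity at $u = s f(x)$ and can be integrated back using equality at $s=0$.
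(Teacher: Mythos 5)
Your plan is correct and follows essentially the same route as the paper: the paper's Lemma 5.3 is exactly your specialization of the transformation formulae to the supertranslation (producing the density at $u=f$ plus $\nabla f$-correction terms, packaged there as $\int_{S^2} V^D\nabla_D f$), and the paper's evolution identity \eqref{evolution} (derived from the vacuum evolution equation for $N_A$) plus integration over $\Omega$ is exactly your fundamental-theorem-of-calculus step, with the Leibniz boundary terms from the $x$-dependent upper limit canceling those corrections and leaving the Ashtekar--Streubel flux. The one point to tighten is your remark that the $m$-terms ``drop'' against the divergence-free $Y$: the term $Y^A\nabla_A m = \nabla_A(Y^A m)$ does not vanish when integrated over $\Omega$ precisely because of the moving upper limit, and its boundary contribution is what cancels the $m\,\nabla_A f$ correction you flagged --- which is the cancellation mechanism you already anticipated elsewhere in your plan.
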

The proof relies on two lemmas. To simply the presentation, in their proofs we suppress the dependence of $x\in S^2$ in the second slot if there is no confusion. For example, $C_{AB}(f)$ stands for $C_{AB}(f(x),x)$. Moreover, we will make use of the following identities for the rotation Killing vector $Y^A:$
\[ \na_A Y^A =0, \na^B Y^A = -\na^A Y^B, \Delta Y^A = - Y^A.\]

\begin{lem}
\begin{align}\label{evolution}
\pl_u \lt( Y^A \lt( N_A - \frac{1}{4}C_{AB}\na_D C^{BD} \rt) \rt) - \na_D V^D = -\frac{1}{4}N^{AB} \mathcal{L}_Y C_{AB} 
\end{align}
where the divergence of the vector field $V$ on $\mathscr{I}^+$ takes the form
\begin{align*}
\na_D V^D = \na_A \lt( Y^A m\rt) + \frac{1}{2}\na_D \lt(Y^A C_{AB}N^{BD} \rt)  - \frac{1}{4} Y^A \na^B P_{BA} 
\end{align*}
with $P_{BA} = \na_B\na^D C_{DA} - \na_A\na^D C_{DB}$.
\end{lem}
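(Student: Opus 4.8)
The plan is to prove \eqref{evolution} as a pointwise balance law on $\mathscr{I}^+$. The analytic input is the vacuum $u$-evolution of the Bondi data following from $Ric=0$ in the Bondi--Sachs gauge (cf. \cite{MW}); concretely I need the definition $N_{AB}=\pl_u C_{AB}$ of the news together with the evolution equation for the angular momentum aspect, whose non-radiative part reads $\pl_u N_A = \na_A m - \frac14\na^B P_{BA} + (\text{terms quadratic in } C, N)$, with $P_{BA}=\na_B\na^D C_{DA}-\na_A\na^D C_{DB}$. Everything else is algebra on $(S^2,\sigma)$ carried out with the Killing identities $\na_A Y^A=0$, $\na^B Y^A=-\na^A Y^B$, $\Delta Y^A=-Y^A$ and the symmetric-traceless identities of Appendix A.

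Next I would expand the left-hand side by the Leibniz rule,
\[ \pl_u\lt( Y^A N_A \rt) = Y^A\pl_u N_A, \qquad \pl_u\lt( -\tfrac14 Y^A C_{AB}\na_D C^{BD}\rt) = -\tfrac14 Y^A\lt( N_{AB}\na_D C^{BD} + C_{AB}\na_D N^{BD}\rt), \]
and substitute the evolution equation for $\pl_u N_A$. The non-radiative content is then immediate: since $Y$ is Killing, $Y^A\na_A m = \na_A(Y^A m)$, the first divergence term, while $-\tfrac14 Y^A\na^B P_{BA}$ is precisely the third divergence term of $\na_D V^D$. Thus these two contributions already account for all the Coulombic part of $\na_D V^D$, and the claim is reduced to a statement about the radiative (news) terms.

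It then remains to show that the radiative remainder — the quadratic news terms from $\pl_u N_A$ together with the two pieces $-\tfrac14 Y^A(N_{AB}\na_D C^{BD}+C_{AB}\na_D N^{BD})$ produced above — assembles into $\frac12\na_D(Y^A C_{AB}N^{BD}) - \frac14 N^{AB}\mathcal{L}_Y C_{AB}$. I would expand the two target objects,
\[ \tfrac12\na_D\lt( Y^A C_{AB}N^{BD}\rt) = \tfrac12(\na_D Y^A)C_{AB}N^{BD} + \tfrac12 Y^A N^{BD}\na_D C_{AB} + \tfrac12 Y^A C_{AB}\na_D N^{BD}, \]
\[ \mathcal{L}_Y C_{AB} = Y^D\na_D C_{AB} + C_{DB}\na_A Y^D + C_{AD}\na_B Y^D, \]
and match term by term, using the antisymmetry of $\na^A Y^B$ to pair the connection-in-$Y$ pieces and the identity $\na_D C_{AB}-\na_A C_{DB}=\na^E C_{AE}\sigma_{DB}-\na^E C_{DE}\sigma_{AB}$ (Proposition A.1 of \cite{KWY}, cf. Lemma \ref{symmetric_traceless}) to move the derivative landing on $C$.

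The main obstacle I anticipate is exactly this last matching: the objects $N_{AB}\na_D C^{BD}$, $C_{AB}\na_D N^{BD}$, $N^{AB}Y^D\na_D C_{AB}$, and $N^{AB}C_{DB}\na_A Y^D$ are structurally dissimilar, and reconciling them forces one to trade $\na^D N_{DB}$ against $\na^D C_{DB}$ in the quadratic part of the $\pl_u N_A$ equation and to commute covariant derivatives on $S^2$, where the curvature produces extra $\sigma$-terms. The rotation Killing property enters decisively here: only when $Y$ solves the Killing equation do the $\na Y$-weighted contributions collapse into the single divergence $\frac12\na_D(Y^A C_{AB}N^{BD})$ instead of leaving an uncontrolled surplus, so I would keep the Killing identities explicit throughout rather than passing to a special coordinate frame.
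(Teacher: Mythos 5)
Your proposal follows essentially the same route as the paper's proof: substitute the evolution equation for $\pl_u N_A$ (the paper cites \cite[(5.103)]{CJK}), expand by the Leibniz rule, identify the non-radiative pieces $\na_A\lt(Y^A m\rt)$ and $-\frac{1}{4}Y^A\na^B P_{BA}$ using $\na_A Y^A=0$, and assemble the quadratic news remainder into $\frac{1}{2}\na_D\lt(Y^A C_{AB}N^{BD}\rt)-\frac{1}{4}N^{AB}\mathcal{L}_Y C_{AB}$ via the Killing identities and the symmetric-traceless identities of Appendix A. The only difference is one of completeness rather than of method: the paper carries out the term-matching explicitly, and in doing so also invokes the product identity \eqref{product of symmetric traceless 2-tensors} to merge the three divergence terms $\frac{1}{4}\na_A\lt(Y^A C_{BD}N^{BD}\rt)-\frac{1}{4}\na_A\lt(Y^B C^{AD}N_{DB}\rt)+\frac{1}{4}\na_A\lt(Y^B C_{BD}N^{DA}\rt)$ into $\frac{1}{2}\na_A\lt(Y^B C_{BD}N^{DA}\rt)$, a step your outline would need in addition to the tools you list.
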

\begin{proof}
By the evolution formula of $N_A$ (see \cite[(5.103)]{CJK} for example) 
\begin{align*}
 \partial_u N_A &= \na_A m -\frac{1}{4}\nabla^B P_{BA} +\frac{1}{4}\nabla_A(C_{BE} N^{BE})-\frac{1}{4}\nabla_B (C^{BD} N_{DA})+\frac{1}{2} C_{AB}\nabla_D N^{DB},
\end{align*}
we have
\begin{align*}
&\pl_u \lt( Y^A \lt( N_A - \frac{1}{4}C_{AB}\na_D C^{BD} \rt) \rt) \\
&= Y^A \lt[ \na_A m -\frac{1}{4}\nabla^B P_{BA} +\frac{1}{4}\nabla_A(C_{BE} N^{BE})\rt]\\
&+Y^A\lt[-\frac{1}{4}\nabla_B (C^{BD} N_{DA})+\frac{1}{4} C_{AB}\nabla_D N^{DB} - \frac{1}{4}N_{AB} \na_D C^{BD}\rt].
\end{align*}

The last line can be rewritten as 
\[\begin{split} &-\frac{1}{4} \na_B \lt( Y^A C^{BD}N_{DA} \rt) + \frac{1}{4}\na_B Y^A C^{BD}N_{DA} + \frac{1}{4} \na_D \lt( Y^A C_{AB}N^{DB}\rt) - \frac{1}{4}\na_D Y^A C_{AB} N^{DB} \\
&\quad -\frac{1}{4} Y^A \na_D C_{AB}  N^{DB} - \frac{1}{4}N_{AB} Y^A \na_D C^{BD}\end{split}\]

Rearranging indexes and applying Lemma \ref{symmetric_traceless}, we get 
\begin{equation}
\begin{split}
& \frac{1}{4}\na_B Y^A C^{BD}N_{DA}  - \frac{1}{4}\na_D Y^A C_{AB} N^{BD} -\frac{1}{4} Y^A \na_D C_{AB}  N^{BD} - \frac{1}{4}N_{AB} Y^A \na_D C^{BD}\label{Lie-d}\\
&= \frac{1}{4}N_{AB} \na_D Y^A C^{BD} - \frac{1}{4} N_{AB} \na^A Y^D C_{DB} - \frac{1}{4} Y^A \na_A C_{BD} N^{BD}. 
\end{split}\end{equation}

Recalling the definition of Lie derivative
\[
 \mathcal{L}_Y C_{AB} = Y^D \na_D C_{AB} + \na_A Y^D C_{DB} + \na_B Y^D C_{AD},
\] we recognize that the last line of \eqref{Lie-d} is simply $-\frac{1}{4} N^{AB}\mathcal{L}_Y C_{AB}$. 

Hence
\begin{align*}
&\pl_u \lt( Y^A \lt( N_A - \frac{1}{4}C_{AB}\na_D C^{BD} \rt) \rt) \\
&= \na_A (Y^A m) - \frac{1}{4} Y^A \na^B P_{BA}  -\frac{1}{4}N^{AB}\mathcal{L}_Y C_{AB}\\
&\quad + \frac{1}{4} \na_A \lt( Y^A C_{BD}N^{BD}\rt) - \frac{1}{4} \na_A \lt( Y^B C^{AD}N_{DB} \rt) + \frac{1}{4} \na_A \lt( Y^B C_{BD} N^{DA} \rt)
\end{align*}
and by \eqref{product of symmetric traceless 2-tensors} the last line is equal to $\frac{1}{2}\na_A \lt( Y^B C_{BD} N^{DA}\rt)$. \end{proof}

$V^D$ can be found explicitly as 
\begin{equation} \label{V-formula} V^D=Y^D m+\frac{1}{2} Y^AC_{AB} N^{BD}-\frac{1}{4} Y^AP^{D}_{\,\,\,\,A}+\frac{1}{2} \na^D Y^A \na^B C_{BA}+\frac{1}{2}Y^AC^D_{\,\,\,\,A}\end{equation} where we use
\begin{align*}
\frac{1}{4}\na^B Y^A P_{BA} = \frac{1}{2}\na^B Y^A \na_B \na^D C_{DA} = \frac{1}{2} \na_B \lt( \na^B Y^A \na^D C_{DA}\rt) + \frac{1}{2} Y^A \na^D C_{DA}
\end{align*}

\begin{lem}\label{using transformation formula}
\begin{align*}
&\int_{S^2} Y^A \lt( N_A' - 2m' \na_A f - \frac{1}{4}C'_{AB}\na_D C^{'DB} \rt) = \int_{S^2} Y^A \lt( N_A - \frac{1}{4}C_{AB}\na_D C^{DB}\rt)+ \int_{S^2}  V^D\na_D f,\\
\end{align*} where $V^D$ is given in \eqref{V-formula}. 
On the right-hand side the Bondi data $N_A, m, C_{AB}, N_{AB}, P_{BA}$ are evaluated at $(f(x),x)$ and then integrated over $S^2$.
\end{lem}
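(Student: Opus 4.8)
The plan is to insert the transformation formulae for the Bondi data, specialized to a \emph{pure supertranslation}, into the left-hand integrand and then integrate by parts over $S^2$. A supertranslation $u=u'+f(x)$ is the case $K\equiv 1$, $g^A(x)=x^A$ (hence $\pl_a g^A=\delta_a^A$, $\bna=\na$, $\bar\sigma=\sigma$) and $\hat f=f$ of Theorem \ref{main}, under which the slice $u'=0$ is the slice $u=f(x)$. Specializing \eqref{shear}, \eqref{final_mass_formula} and \eqref{final_angular_formula}, and writing $N_{AB}=\pl_u C_{AB}$ for the news, I record (all unprimed data evaluated at $(f(x),x)$)
\[
C'_{AB}=C_{AB}-\Phi_{AB},\qquad \Phi_{AB}:=2\na_A\na_B f-\Delta f\,\sigma_{AB},
\]
\[
m'=m+\tfrac12\na^A f\,\pl_u\na^D C_{AD}+\tfrac14\na^A f\na^B f\,\pl_u N_{AB}+\tfrac14\na^A\na^B f\,N_{AB},
\]
together with the (longer) expression for $N'_a$. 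The mass contribution is immediate: $N'_a$ carries $3m\na_a f$ while the integrand subtracts $2m'\na_a f$, so the $m$-terms combine to $(3-2)\,mY^A\na_A f=mY^A\na_A f$, which is the leading term $Y^D m$ of $V^D$ in \eqref{V-formula}.

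The decisive device is the chain rule for tensors frozen at $u=f(x)$, namely $\na_a\big(T(f(x),x)\big)=(\na_aT)|_{u=f}+(\pl_uT)|_{u=f}\na_a f$, which for the shear is Lemma \ref{derivative of shear tensor}. Applied to the divergence in the primed integrand it gives
\[
\na_D C'^{DB}=\na_D C^{DB}+N^{DB}\na_D f-\na_D\Phi^{DB},
\]
so that $-\tfrac14 Y^A C'_{AB}\na_D C'^{DB}$ yields, besides the wanted $-\tfrac14 Y^A C_{AB}\na_D C^{DB}$, the news term $-\tfrac14 Y^A C_{AB}N^{DB}\na_D f$. Together with the $+\tfrac34 Y^A C_{AB}N^{BD}\na_D f$ produced by the $\pl_u C^B_D\,C_{AB}$ term of $N'_a$, this reconstructs exactly $+\tfrac12 Y^A C_{AB}N^{BD}\na_D f$, the second term of $V^D$.

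I would next dispose of the terms that do not involve news. The purely gauge piece $-\tfrac14 Y^A\Phi_{AB}\na_D\Phi^{DB}$ integrates to zero: on the unit sphere $\na_D\Phi^{DB}=\na^B(\Delta+2)f$ (the $+2$ coming from $\Delta\na^B=\na^B\Delta+\na^B$), and since $\na^B\Phi_{AB}=\na_A(\Delta+2)f$ as well, integration by parts turns the integrand into $-\na^B Y^A\Phi_{AB}h-Y^A h\na_A h$ with $h=(\Delta+2)f$, both of which vanish because $\na^B Y^A$ is antisymmetric while $\Phi_{AB}$ is symmetric and $\na_A Y^A=0$. The double-divergence terms $-\tfrac34 Y^a(\na_B\na^D C_{Da}-\na_a\na^D C_{DB})\na^B f=\tfrac34 Y^A P_A{}^D\na_D f$ must be brought to the coefficient $-\tfrac14$ on $P^D_A$ demanded by $V^D$; I integrate $P$ by parts against $Y$ exactly as in the identity $\tfrac14\na^B Y^A P_{BA}=\tfrac12\na_B(\na^B Y^A\na^D C_{DA})+\tfrac12 Y^A\na^D C_{DA}$ used to derive \eqref{V-formula}. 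This simultaneously produces the term $\tfrac12\na^D Y^A\na^B C_{BA}$ and, through the $S^2$-curvature commutator $\na^D\na^B C_{BA}-\na^B\na^D C_{BA}\sim C^D{}_A$, the final term $\tfrac12 Y^A C^D{}_A$. Throughout I use $\na_A Y^A=0$, $\na^A Y^B=-\na^B Y^A$, $\Delta Y^A=-Y^A$ and the symmetric--traceless identities of Lemma \ref{symmetric_traceless}.

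The main obstacle will be the cancellation of the \emph{second} $u$-derivative (news-rate) contributions. Terms in $\pl_u N_{AB}$ enter through both $m'$ and $N'_a$; the two cubic-in-$\na f$ copies cancel at once, but a residual $-\tfrac14 Y^a(\pl_u N_{aB})\na^B f\,|\na f|^2$ survives, while $V^D$ contains no $\pl_u N$. Since a $u$-derivative is inert under spatial integration by parts, the only way to remove it is to recognise it as a chain-rule byproduct: when the frozen first-derivative (news) terms of $N'_a$ are integrated by parts over $S^2$, the spatial derivatives that fall on quantities evaluated at $u=f(x)$ release compensating $\pl_u N$ factors. Verifying that every such byproduct exactly cancels the residual, while keeping track of the many news-$\times$-shear and Hessian-$\times$-shear cross terms generated along the way, is the delicate bookkeeping on which the whole computation turns.
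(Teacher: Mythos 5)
Your overall strategy is the same as the paper's: specialize the transformation formulae \eqref{shear}, \eqref{final_mass_formula}, \eqref{final_angular_formula} to a pure supertranslation, exploit the chain rule for Bondi data frozen at $u=f(x)$ (Lemma \ref{derivative of shear tensor}), and integrate by parts against the rotation Killing field $Y^A$. The pieces you actually verify are correct and agree with the paper: the mass terms combine as $3m-2m=m$; the coefficients $-\frac14$ and $+\frac34$ in front of $Y^AC_{AB}N^{BD}\na_Df$ combine to the $\frac12 Y^AC_{AB}N^{BD}$ of \eqref{V-formula}; and the pure-gauge term $Y^A\Phi_{AB}\na_D\Phi^{DB}$ integrates to zero.

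However, this is not a proof of the lemma, and you say so yourself: the cancellation of the $\pl_u N_{AB}$ contributions and of the news--shear and news--Hessian cross terms \emph{is} the content of the lemma, and you leave it as unverified ``delicate bookkeeping.'' Moreover, the mechanism you propose for the residual $-\frac14 Y^A\pl_u N_{AB}\na^Bf\,|\na f|^2$ --- that integrating frozen news terms by parts over $S^2$ ``releases compensating $\pl_u N$ factors'' --- is not how it resolves, and you never track the term $-\frac34\na^BN_{AB}|\na f|^2$ of $N_A'$, which is essential. In the paper the residual is removed pointwise, before any integration by parts: by the chain rule,
\begin{align*}
-\frac{3}{4}\na^B N_{AB}\,|\na f|^2-\frac{1}{4}\pl_u N_{AB}\na^B f\,|\na f|^2=-\frac{1}{2}\na^B N_{AB}\,|\na f|^2-\frac{1}{4}\na^B\lt(N_{AB}(f)\rt)|\na f|^2,
\end{align*}
and the unfrozen piece $-\frac12\na^BN_{AB}|\na f|^2$ is then cancelled by the term $+\frac12\na^DN_{DA}|\na f|^2$ produced by the rearrangement \eqref{-3/4P} of $-\frac34P_{BA}\na^Bf$; thus the reduction of the $P$-coefficient from $-\frac34$ to $-\frac14$ and the disposal of the $\pl_uN$ residual are one coupled step, not the two independent steps of your outline. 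The resulting total derivative dies against $Y^A$ by antisymmetry of $\na^BY^A$, but only up to the news--Hessian term $\frac12 N_{AB}\na^B\na_Df\,\na^Df$; that term, together with $-\frac12N_{BD}\na^B\na^Df\,\na_Af$ from $-2m'\na_Af$, $\frac12N^{BD}F_{AB}\na_Df$ from $C'_{AB}\na_DC'^{DB}$, and the piece from refreezing $\na^DC_{AD}\Delta f$, cancels only through the pointwise identity \eqref{product of symmetric traceless 2-tensors},
\begin{align*}
\lt(N_{AB}F^B_{\;\;D}+N^B_{\;\;D}F_{BA}-N_{BE}F^{BE}\sigma_{AD}\rt)\na^Df=0,\qquad F_{AB}=\na_A\na_Bf-\tfrac12\Delta f\,\sigma_{AB},
\end{align*}
which your proposal never invokes. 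Without these steps --- the analogues of \eqref{N-A1}, \eqref{-3/4P}, and the final application of \eqref{product of symmetric traceless 2-tensors} --- the lemma remains unproved.
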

\begin{proof}

By the transformation formula of angular momentum aspect \eqref{transformation_angular_general}, we have
\begin{align*}
N_A'(x) &= N_A + 3m\na_A f - \frac{3}{4}P_{BA}\na^B f - \frac{3}{4}\na^B N_{AB}|\na f|^2 + \frac{3}{4}N^B_D C_{AB}\na^D f \\
&\quad + \frac{1}{2} \pl_u N_{DB} \na^D f \na^B f \na_A f + \frac{3}{2} \na^B N_{DB} \na^D f\na_A f - \frac{1}{4} \pl_u N_{AB}\na^B f |\na f|^2
\end{align*}
where the Bondi data on the right-hand side are evaluated at $(f(x),x)$.
We rewrite three terms on the right-hand side of the above equation using the following:
\begin{align*}
-\frac{3}{4}\na^B N_{AB} - \frac{1}{4}\pl_u N_{AB}\na^B f = -\frac{1}{2}\na^B N_{AB} - \frac{1}{4} \na^B \lt( N_{AB}(f) \rt)
\end{align*}
and 
\begin{align*}
\begin{split}\frac{1}{2}\pl_u N_{DB}\na^D f \na^B f
&= \frac{1}{2}\na^B\na^D \lt(C_{BD}(f) \rt) - \frac{1}{2}(\na^B\na^D C_{BD})(f) - \na^D N_{BD}(f)\na^B f \\
&\quad - \frac{1}{2}N_{BD}(f)\na^B\na^D f.\end{split}
\end{align*}

By the transformation formula of mass aspect \eqref{transformation_mass_general}, we have
\begin{align*}
m'(x) = m(f) + \frac{1}{4}\na^B \na^D \lt( C_{BD}(f)\rt) - \frac{1}{4} (\na^B\na^D C_{BD})(f).
\end{align*}
Consequently, we get
\begin{equation}\label{N-A1}
\begin{split}N_A'(x) - 2 m'(x)\na_A f &=N_A + m\na_A f - \frac{3}{4}P_{BA}\na^B f - \frac{1}{2}\na^B N_{AB}|\na f|^2 - \frac{1}{4}\na^B ( N_{AB}(f)) |\na f|^2\\
&\quad + \frac{3}{4}N^B_D C_{AB}\na^D f + \frac{1}{2}\na^D N_{BD}\na^B f\na_A f - \frac{1}{2}N_{BD}\na^B\na^D f \na_A f\end{split}
\end{equation}
where the Bondi data on the right-hand side are evaluated at $(f(x),x)$.

Recalling $P_{BA} = \na_B\na^D C_{DA} - \na_A\na^D C_{DB}$, at $(f(x), x)$ we rearrange
\begin{align}\begin{split}\label{-3/4P}
&-\frac{3}{4}P_{BA}\\ &= -\frac{1}{4} P_{BA}  -\frac{1}{2} \na_B\lt ( \na^D C_{DA}(f) \rt) + \frac{1}{2}\na^D N_{DA} \na_B f + \frac{1}{2}\na_A \lt( \na^D C_{DB}(f) \rt) - \frac{1}{2}\na^D N_{DB}\na_A f. \end{split}\end{align}

Plugging this into \eqref{N-A1}, we obtain
\begin{align*}
N_A' - 2m' \na_A f &= N_A + m \na_A f - \frac{1}{4} P_{BA}\na^B f  \\
&\quad - \frac{1}{2}\na_B \lt( \na^D C_{DA}(f)\na^B f \rt)    + \frac{1}{2}\na^D C_{DA}\Delta f \\
&\quad + \frac{1}{2} \na_A \lt( \na^D C_{DB}(f) \na^B f \rt) -\frac{1}{2}\na^D C_{DB}\na_A\na^B f\\
&\quad -\frac{1}{4}\na^B \lt( N_{AB}(f)|\na f|^2 \rt) + \frac{1}{2}N_{AB}\na^B\na_D f \na^D f\\
&\quad + \frac{3}{4} N^B_D C_{AB}\na^D f - \frac{1}{2}N_{BD}\na^B\na^D f \na_A f,
\end{align*}
where the second line on the right hand side comes from $-\frac{1}{2} \na_B\lt ( \na^D C_{DA}(f) \rt)$ on the right hand side of \eqref{-3/4P}
, the third line on the right hand side comes from $ \frac{1}{2}\na_A \lt( \na^D C_{DB}(f) \rt) $ on the right hand side of \eqref{-3/4P}, and the fourth line comes from $- \frac{1}{4}\na^B \lt[ N_{AB}(f(x),x)\rt] |\na f|^2$ on the right hand side of \eqref{N-A1}.
Note that both terms $\frac{1}{2} \na_A \lt( \na^D C_{DB}(f) \na^B f \rt)$ and  $-\frac{1}{4}\na^B \lt( N_{AB}(f)|\na f|^2 \rt)$ vanish when integrating against $Y^A$. 
On the other hand,
\begin{align*}
-\frac{1}{4}C'_{AB}\na_D C^{'DB} &= -\frac{1}{4}C_{AB}\na_D C^{BD} - \frac{1}{4}C_{AB}N^{BD}\na_D f + \frac{1}{4}C_{AB}\na^B (\Delta + 2 )f\\
&\quad + \frac{1}{2}\na_D C^{BD}F_{AB} + \frac{1}{2}N^{BD}F_{AB}\na_D f\\
&\quad -\frac{1}{2} F_{AB} \na^B(\Delta + 2)f,
\end{align*} where $F_{AB} = \na_A \na_B f - \frac{1}{2}\Delta f \sigma_{AB}$.
Note that the last line vanishes when integrating against $Y^A$.

Putting these together, we get
\begin{align*}
&\int_{S^2} Y^A \lt( N_A' - 2m' \na_A f - \frac{1}{4}C'_{AB}\na_D C^{'DB} \rt) \\
&= \int_{S^2} Y^A \lt( N_A + m \na_A f - \frac{1}{4}C_{AB}\na_D C^{DB} -\frac{1}{4}P_{BA}\na^B f +\frac{1}{2}C_{AB}N^{BD}\na_D f- \frac{1}{2}\na_B \lt( \na^D C_{DA}(f)\na^B f \rt)  \rt)\\
&+\int_{S^2} Y^A \Xi_A
\end{align*}  where 
\[\begin{split}\Xi_A=
 &\frac{1}{2}\na^D C_{DA}\Delta f 
-\frac{1}{2}\na^D C_{DB}\na_A\na^B f
 + \frac{1}{4}C_{AB}\na^B (\Delta + 2 )f
 + \frac{1}{2}\na_D C^{BD}F_{AB} \\
 +&\frac{1}{2}N_{AB}\na^B\na_D f \na^D f
 - \frac{1}{2}N_{BD}\na^B\na^D f \na_A f+ \frac{1}{2}N^{BD}F_{AB}\na_D f\end{split}\]

We simplify
\[\frac{1}{2}\na^D C_{DA}\Delta f -\frac{1}{2}\na^D C_{DB}\na_A\na^B f+ \frac{1}{2}\na_D C^{BD}F_{AB}= \frac{1}{4}\na^D C_{DA}\Delta f. \]

Per $\na^D C_{AD} = \na^D (C_{AD}(f)) - N_{AD}\na^D f$, we write
\begin{align*}
\na^D C_{AD} \Delta f = \na^D \lt( C_{AD}(f) \Delta f \rt) - C_{AD} \na^D\Delta f - N_{AD}\na^D f \Delta f.
\end{align*}

Therefore,
\[\begin{split}&\int_{S^2} Y^A (\Xi_A-\frac{1}{2}C_{AB}\na^B f)\\
&=\int_{S^2}Y^A \lt[ - \frac{1}{4} N_{AD} \na^D f \Delta f+ \frac{1}{2}N_{AB}\na^B\na_D f \na^D f
 - \frac{1}{2}N_{BD}\na^B\na^D f \na_A f+ \frac{1}{2}N^{BD}F_{AB}\na_D f\rt]\end{split}\]
 
 One checks that the bracket term in the last line equals \[\frac{1}{2} \lt( N_{AB} F^B_D + N_D^B F_{BA} - N_{BE} F^{BE} \sigma_{AD} \rt) \na^D f,\] which vanishes by virtue of \eqref{product of symmetric traceless 2-tensors}.

Putting these together, we get
\begin{align*}
&\int_{S^2} Y^A \lt( N_A' - 2m' \na_A f - \frac{1}{4}C'_{AB}\na_D C^{'DB} \rt) \\
&= \int_{S^2} Y^A \lt( N_A + m \na_A f - \frac{1}{4}C_{AB}\na_D C^{DB} \rt) + \int_{S^2} -\frac{1}{4}Y^A P_{BA}\na^B f + \frac{1}{2} \na_B Y^A \na^D C_{DA}\na^B f\\
&\quad + \int_{S^2} Y^A \lt( \frac{1}{2}C_{AB}N^{BD}\na_D f + \frac{1}{2}C_{AB}\na^B f \rt).\end{align*}

\end{proof}

\begin{proof}[Proof of Theorem \ref{DS}] Integrating \eqref{evolution} over $\Omega$ and applying Lemma \ref{using transformation formula}  leads to Theorem \ref{DS}.

\end{proof}

\subsection{Interpretation in differential forms}
In this subsection, we reinterpret the flux integral $-\frac{1}{4} \int_{\Omega} N^{AB} \mathcal{L}_Y C_{AB}$ in \eqref{DS1} as the integral of a differential 3-form \eqref{domega} on $\mathscr{I}+$, and identify an exact expression for the Drey-Streubel angular momentum of a general section in terms of a differential 2-form \eqref{2-form}.

We first write \eqref{evolution} in terms of differential forms. Let $\slashed\epsilon = \epsilon_{AB} dx^A dx^B$ denote the area form of $(S^2,\sigma)$ and $\slashed d$ denote the differential of $S^2$.

Define the 2-form $\omega(u,x)$ on $\mathscr{I}^+$
\begin{align}\label{2-form}
\omega = Y^A \lt( N_A - \frac{1}{4}C_{AB}\na_D C^{BD} \rt) \slashed\epsilon - (V \lrcorner \slashed\epsilon) \wedge du.
\end{align} 
Applying the identity
\begin{align*}
\slashed d \lt( V \lrcorner \slashed\epsilon \rt) = (\na_A V^A) \slashed\epsilon
\end{align*}
to time-dependent vector fields on $S^2$, we see that \eqref{evolution} is equivalent to
\begin{align}\label{domega}
d\omega = -\frac{1}{4} N^{AB} \mathcal{L}_Y C_{AB} \slashed\epsilon \wedge du.
\end{align}

By Stokes' Theorem,
\begin{align*}
\int_\Omega - \frac{1}{4} N^{AB} \mathcal{L}_Y C_{AB} = \int_{ u=f(x)} \omega - \int_{S^2} Y^A \lt( N_A(0,x) - \frac{1}{4}(C_{AB}\na_D C^{BD})(0,x) \rt).
\end{align*}
On the section $u = f(x)$, we have
\begin{align*}
\lt( V \lrcorner \slashed{\epsilon} \rt) \wedge du  = \sum_{A < B} \lt( V^D \epsilon_{DA} \na_B f - V^D \epsilon_{DB} \na_A f \rt) dx^A \wedge dx^B = - V^D \na_D f \slashed{\epsilon}
\end{align*} and hence 
\begin{align*}
\int_{u=f(x)} \omega = \int_{S^2} Y^A \lt( N_A - \frac{1}{4}C_{AB}\na_D C^{BD} \rt) + V^D \na_D f
\end{align*}
where $N_A, C_{AB}, V^D$ are evaluated at $(f(x),x)$. We thus obtain the following explicit formula:

\begin{prop} The Drey-Streubel angular momentum of a section $u=f(x)$ is given by the integral of the restriction of the two-form $\omega (u, x)$ \eqref{2-form} to the section $u=f(x)$. 

\end{prop}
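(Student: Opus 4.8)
The plan is to obtain the Proposition as a direct consequence of the two ingredients already assembled above: the exterior-derivative identity \eqref{domega} for the two-form $\omega$, and Stokes' Theorem applied over the region $\Omega\subset\mathscr{I}^+$ bounded by the sections $u=0$ and $u=f(x)$. Concretely, the claim is that the integral of $\omega$ restricted to the graph $u=f(x)$ equals the Dray-Streubel angular momentum $J(u=f)$ as defined in \eqref{DS1}, so I would establish the equality by comparing the two boundary contributions produced by Stokes' Theorem.

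First I would invoke $d\omega = -\frac{1}{4}N^{AB}\mathcal{L}_Y C_{AB}\,\slashed\epsilon\wedge du$ from \eqref{domega} — itself the differential-form packaging of the evolution equation \eqref{evolution} via the identity $\slashed d(V\lrcorner\slashed\epsilon)=(\na_A V^A)\slashed\epsilon$ — and integrate it over $\Omega$. Stokes' Theorem then gives $\int_{u=f(x)}\omega-\int_{u=0}\omega=-\frac{1}{4}\int_\Omega N^{AB}\mathcal{L}_Y C_{AB}$, with the two boundary sections carrying the orientations induced from $\Omega$. Next I would evaluate $\omega$ on each boundary piece using \eqref{2-form}: on $u=0$ the pullback of $du$ vanishes, so the second term drops and $\int_{u=0}\omega=\int_{S^2}Y^A\big(N_A(0,x)-\frac{1}{4}(C_{AB}\na_D C^{BD})(0,x)\big)=J(u=0)$; on the graph $u=f(x)$ the pullback of $du$ is $\na_A f\,dx^A$ and the identity $(V\lrcorner\slashed\epsilon)\wedge du=-V^D\na_D f\,\slashed\epsilon$ established above produces $\int_{u=f(x)}\omega=\int_{S^2}Y^A\big(N_A-\frac{1}{4}C_{AB}\na_D C^{BD}\big)+V^D\na_D f$, with all Bondi data evaluated at $(f(x),x)$.

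Substituting $\int_{u=0}\omega=J(u=0)$ into the Stokes identity and rearranging yields $\int_{u=f(x)}\omega=J(u=0)-\frac{1}{4}\int_\Omega N^{AB}\mathcal{L}_Y C_{AB}$, which is exactly the defining formula \eqref{DS1} for $J(u=f)$, completing the argument. The only place requiring genuine care — and hence the main (if mild) obstacle — is the orientation bookkeeping together with the pullback of $du$ to the graph $u=f(x)$, so that the sign of the $(V\lrcorner\slashed\epsilon)\wedge du$ term is recorded correctly; once \eqref{2-form} and \eqref{domega} are in hand, everything else is a routine application of Stokes' Theorem to the already-constructed $\omega$.
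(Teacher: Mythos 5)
Your proposal is correct and follows essentially the same route as the paper: integrate the identity \eqref{domega} over $\Omega$, apply Stokes' Theorem, identify the $u=0$ boundary term with $J(u=0)$ (the $du$ term pulling back to zero there), and use the computation $(V\lrcorner\slashed\epsilon)\wedge du=-V^D\na_D f\,\slashed\epsilon$ on the graph $u=f(x)$ so that the resulting boundary integral matches the defining formula \eqref{DS1}. The sign and orientation bookkeeping you flag is exactly what the paper handles in its displayed computation of $(V\lrcorner\slashed\epsilon)\wedge du$, and your treatment of it is consistent with theirs.
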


\subsection{Generalization to the BMS charge of a general section}
In this subsection, we generalize the results in the last two subsections to the BMS charge of a general section of null infinity. Consider a $u=$ constant section, say $u=u_0$, in $\mathscr{I}^+$ that corresponds to $r = \infty$ in a Bondi-Sachs coordinate system $(u,r,x^A)$. The BMS charge \cite{DS} (see also \cite[(3.5)]{FN}) of $u=u_0$ section for a BMS vector field $\lt( Y^0 + uY^1 \rt) \pl_u + Y^A \pl_A$ is defined as
\begin{align}\label{u=const}
Q(u=u_0) = \int_{S^2} Y^A \lt( N_A - \frac{1}{4} C_{AB}\na_D C^{BD} \rt) + \frac{1}{8}Y^1 C_{BD}C^{BD} + \lt(Y^0 + u_0 Y^1 \rt)\cdot 2m.
\end{align}
where the Bondi data $N_A, C_{AB}, m$ are evaluated at $(u_0, x)$ and the integral $\int_{S^2}$ is taken with respect to the standard metric $\sigma_{AB}$ (we omit the area form). Here $Y^0$ is a function on $S^2$, $Y^A$ is a conformal Killing vector field on $S^2$ with $\na_A Y^A = 2Y^1$.

Let $f(x)$ be a positive function on $S^2$. For a BMS vector field $uY^1 \pl_u + Y^A \pl_A$, there are two equivalent definitions for the BMS charges of the section $u=u_0 + f(x)$. The first is defined as the BMS charge of the section $u=u_0$ plus the Ashtekar-Streubel flux \cite{AS} 
\begin{align}
Q_1(u=u_0+ f) = Q(u=u_0) +  \int_{\Omega} - \frac{1}{4}N^{AB} \mathcal{L}_Y C_{AB} -\frac{1}{4} u Y^1 N_{BD}N^{BD} + \frac{1}{4} Y^1 C_{BD}N^{BD} 
\end{align}
where $\mathcal{L}$ denotes the Lie derivative and $\Omega$ is the region in $\mathscr{I}^+$ bounded by the $u=u_0$ and $u=u_0+f(x)$ sections. 
Consider the supertranslation $u = u' + f(x)$ under which the $u=u_0 + f(x)$ section corresponds to the $u'=u_0$ section. The second definition is given by
\begin{align}
\begin{split}
Q_2(u=u_0 + f) &= \int_{S^2} Y^A \lt( N_A' - \frac{1}{4} C'_{AB} \na_D C^{'BD} \rt) + \frac{1}{8} Y^1 C'_{BD}C^{'BD} \\
&\quad + \int_{S^2} \lt( -  Y^A \na_A f + u_0 Y^1 + f Y^1 \rt)\cdot 2m'
\end{split}
\end{align}
where $N_A', C_{AB}', m'$ are the Bondi data of $u'=u_0$ hypersurface in the Bondi-Sachs coordinate system $(u',r',x^A)$. The right-hand side is physically the BMS charge of the $u'=u_0$ section for the BMS vector field $Y^A \frac{\pl}{\pl X^A} + \lt( fY^1 + u_0 Y^1 - Y^A \na_A f \rt) \frac{\pl}{\pl u'}$.

The equivalence of the two definitions was proved by Shaw \cite{Shaw} and Dray \cite{Dray} using spinor calculus. We give a proof of the equivalence using the transformation formulae for the Bondi data and tensor calculus.

To simply the presentation, we suppress the dependence of $x\in S^2$ in the second slot if there is no confusion. For example, $C_{AB}(f)$ stands for $C_{AB}(u_0 + f(x),x)$. Moreover, we will make use of the following identities for the conformal Killing vector field $Y^A:$
\[ \na^A Y^B + \na^B Y^A = 2Y^1 \sigma^{AB}, \na_A Y^A = 2Y^1, \Delta Y^A = - Y^A, \na_A\na_B Y^1 = -Y^1 \sigma_{AB}.\]

We begin with the evolution formula of the integrand in \eqref{u=const}.
\begin{lem}
\begin{align}\label{evolution}
\begin{split}
&\pl_u \lt( Y^A \lt( N_A - \frac{1}{4}C_{AB}\na_D C^{BD} \rt) + \frac{1}{8} Y^1 C_{BD}C^{BD} + uY^1 \cdot 2m \rt) - \na_D \tilde{V}^D \\
&= -\frac{1}{4}N^{AB} \lt( \mathcal{L}_Y C_{AB} + uY^1 N_{AB} - Y^1 C_{AB} \rt) \end{split} 
\end{align}
where the divergence of the vector field $\tilde{V}$ on $\mathscr{I}^+$ takes the form
\begin{align}
\begin{split}
\na_D \tilde{V}^D &= \na_A \lt( Y^A m\rt) - \frac{1}{4}\na^B (Y^A P_{BA}) + \frac{1}{2}\na_D \lt(Y^A C_{AB}N^{BD} \rt)\\
&\quad +\frac{1}{2} \na_B (\na^B Y^A \na^D C_{DA}) + \frac{1}{2}\na^D (Y^A C_{DA})\\
&\quad - \frac{1}{2} \na_A \lt( Y^1 \na_B C^{BA} \rt) + \frac{1}{2} \na^D \lt( \na^A Y^1 C_{DA} \rt)\\
&\quad + \na_B \lt( \frac{u}{2} Y^1 \na_D N^{BD} \rt) - \na_A \lt( \frac{u}{2} \na_B Y^1 N^{AB} \rt)
\end{split}
\end{align}
with $P_{BA} = \na_B\na^D C_{DA} - \na_A\na^D C_{DB}$.
\end{lem}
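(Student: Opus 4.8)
The plan is to differentiate the bracketed integrand in \eqref{evolution} term by term, using three ingredients: the evolution equation for $N_A$ quoted from \cite[(5.103)]{CJK} (the same one as in the previous Lemma), the news relation $\pl_u C_{AB}=N_{AB}$ from \eqref{news}, and the vacuum Bondi mass-loss formula $\pl_u m=\frac14\na^A\na^B N_{AB}-\frac18 N_{AB}N^{AB}$. The angular-momentum part $Y^A(N_A-\frac14 C_{AB}\na_D C^{BD})$ is processed exactly as in the previous Lemma, the only change being that $Y^A$ is now merely a conformal Killing field, so that $\na^B Y^A=\frac12(\na^B Y^A-\na^A Y^B)+Y^1\sigma^{AB}$ carries a nonzero trace. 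The antisymmetric part reproduces, verbatim, the Lie-derivative flux $-\frac14 N^{AB}\mathcal{L}_Y C_{AB}$ together with the first two lines of $\na_D\tilde V^D$ (the $P$-term being rewritten as in the derivation of \eqref{V-formula}), while the pure-trace part $Y^1\sigma^{AB}$ supplies the extra $Y^1$-divergences on the third line. Here I would lean on Lemma \ref{symmetric_traceless} and the identity \eqref{product of symmetric traceless 2-tensors} to collapse the accumulated contractions into divergence form.

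I would then dispatch the two genuinely new terms in the integrand. Differentiating $\frac18 Y^1 C_{BD}C^{BD}$ and invoking the news relation gives $\frac14 Y^1 C_{BD}N^{BD}$, which is precisely the $+\frac14 Y^1 N^{AB}C_{AB}$ contribution to the right-hand flux. Differentiating $2uY^1 m$ produces $2Y^1 m+2uY^1\pl_u m$; substituting the mass-loss formula turns this into $2Y^1 m+\frac12 uY^1\na^A\na^B N_{AB}-\frac14 uY^1 N_{AB}N^{AB}$, whose last term is exactly the $-\frac14 uY^1 N^{AB}N_{AB}$ flux.

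The remaining two pieces are matched to divergences. The scalar $2Y^1 m$ combines with the $Y^A\na_A m$ coming from the leading term $\na_A m$ of the $N_A$-evolution: since $\na_A Y^A=2Y^1$, one has $Y^A\na_A m+2Y^1 m=\na_A(Y^A m)$, completing the first term of $\na_D\tilde V^D$. For the piece $\frac12 uY^1\na^A\na^B N_{AB}$ I would verify the identity $\na_B(\tfrac{u}{2}Y^1\na_D N^{BD})-\na_A(\tfrac{u}{2}\na_B Y^1 N^{AB})=\frac12 uY^1\na^A\na^B N_{AB}$, which uses $\na_A\na_B Y^1=-Y^1\sigma_{AB}$ together with the tracelessness of $N_{AB}$; this reproduces the last line of $\na_D\tilde V^D$.

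The main obstacle is the tensorial bookkeeping in the angular-momentum part: replacing the rotation Killing field of the previous Lemma by a conformal Killing field introduces a symmetric-trace correction at essentially every step, and one must show that these corrections reorganize precisely — partly into the full Lie derivative $\mathcal{L}_Y C_{AB}$ appearing in the flux (which now carries its own $2Y^1 C_{AB}$ trace piece), partly into the three-line $Y^1$- and $u$-divergences of $\na_D\tilde V^D$ — with no residue. The repeated use of the symmetric-traceless identities, together with $\na_A\na_B Y^1=-Y^1\sigma_{AB}$ and $\Delta Y^A=-Y^A$, is what forces the cancellations. Once the bookkeeping is organized, assembling all contributions yields exactly the stated flux on the right-hand side together with a total $S^2$-divergence, which proves the formula.
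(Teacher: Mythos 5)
Your proposal takes essentially the same route as the paper's proof: differentiate the integrand term by term using the CJK evolution equation for $N_A$, the news relation $\pl_u C_{AB}=N_{AB}$, and the Bondi mass-loss formula, then reorganize via Lemma \ref{symmetric_traceless}, identity \eqref{product of symmetric traceless 2-tensors}, and the conformal-Killing identities ($\na_A Y^A = 2Y^1$, $\Delta Y^A=-Y^A$, $\na_A\na_B Y^1=-Y^1\sigma_{AB}$) into the stated flux plus the divergence $\na_D\tilde{V}^D$ — in particular your treatment of $\na_A(Y^Am)$, of the $\tfrac14 Y^1C_{BD}N^{BD}$ and $-\tfrac14 uY^1N_{BD}N^{BD}$ flux terms, and of the last line of $\na_D\tilde{V}^D$ coincides with the paper's. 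The one loose phrase — that the antisymmetric part of $\na^BY^A$ reproduces the flux $-\tfrac14 N^{AB}\mathcal{L}_YC_{AB}$ ``verbatim'' — cannot be literally true since $\mathcal{L}_YC_{AB}$ carries the trace piece $2Y^1C_{AB}$, but your closing paragraph already corrects for this, so there is no genuine gap.
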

\begin{proof}
By the evolution formula of $N_A$ (see \cite[(5.103)]{CJK} for example) 
\begin{align*}
 \partial_u N_A &= \na_A m -\frac{1}{4}\nabla^B P_{BA} +\frac{1}{4}\nabla_A(C_{BE} N^{BE})-\frac{1}{4}\nabla_B (C^{BD} N_{DA})+\frac{1}{2} C_{AB}\nabla_D N^{DB},
\end{align*}
we have
\begin{align}
&\pl_u \lt( Y^A \lt( N_A - \frac{1}{4}C_{AB}\na_D C^{BD} \rt) + \frac{1}{8}Y^1 C_{BD}C^{BD} + uY^1\cdot 2m \rt) \notag\\
&= Y^A \lt[ -\frac{1}{4}\nabla^B P_{BA} +\frac{1}{4}\nabla_A(C_{BE} N^{BE})-\frac{1}{4}\nabla_B (C^{BD} N_{DA})+\frac{1}{4} C_{AB}\nabla_D N^{DB} - \frac{1}{4}N_{AB} \na_D C^{BD} \rt] \label{RHS_line1}\\
&\quad + Y^A \na_A m + \frac{1}{4}Y^1 C_{BD}N^{BD} + Y^1\cdot 2m + uY^1 \lt( -\frac{1}{4} N_{BD}N^{BD} + \frac{1}{2} \na^B\na^D N_{BD} \rt)\label{RHS_line2} 
\end{align}
Denoting the last 4 terms of \eqref{RHS_line1} by $\frac{1}{4}I$, we have 
\begin{align*}
I &= \na_A \lt( Y^A C_{BD} N^{BD} \rt) - 2Y^1 C_{BD}N^{BD} - \na_B \lt( Y^A C^{BD}N_{DA} \rt) + \na_B Y^A C^{BD}N_{DA} \\
&\quad + \na_D \lt( Y^A C_{AB}N^{DB}\rt) - \na^D Y^A C_{AB} N^{BD} -  Y^A \na_D C_{AB}  N^{BD} - N_{AB} Y^A \na_D C^{BD} 
\end{align*}
Applying Lemma \ref{symmetric_traceless} to the second to the last term in the second line, we get
\begin{align*}
I &= \na_A \lt( Y^A C_{BD} N^{BD} \rt) - \na_B \lt( Y^A C^{BD}N_{DA} \rt) + \na_D \lt( Y^A C_{AB}N^{DB}\rt)\\
&\quad + \lt( -2Y^1 C_{AB} \rt)N^{AB} + N_{AB}\na_D Y^A C^{BD} - N_{AB}\na^A Y^D C_{DB} - Y^A \na_A C_{BD}N^{BD}.
\end{align*}
Recalling the definition of Lie derivative
\[
 \mathcal{L}_Y C_{AB} = Y^D \na_D C_{AB} + \na_A Y^D C_{DB} + \na_B Y^D C_{AD},
\] we recognize that the last line is simply $-  \mathcal{L}_Y C_{AB} N^{AB}$. 

Hence \eqref{RHS_line1} is equal to
\begin{align*}
& - \frac{1}{4} \na^B (Y^A P_{BA}) + \frac{1}{4}\na^B Y^A P_{BA} -\frac{1}{4}N^{AB}\mathcal{L}_Y C_{AB}\\
&\quad + \frac{1}{4} \na_A \lt( Y^A C_{BD}N^{BD}\rt) - \frac{1}{4} \na_A \lt( Y^B C^{AD}N_{DB} \rt) + \frac{1}{4} \na_A \lt( Y^B C_{BD} N^{DA} \rt)
\end{align*}
and by \eqref{product of symmetric traceless 2-tensors} the last line is equal to $\frac{1}{2}\na_A \lt( Y^B C_{BD} N^{DA}\rt)$. We further simplify 
\begin{align*}
&\frac{1}{4}\na^B Y^A P_{BA} \\
&= \frac{1}{8} \lt( \na^B Y^A - \na^A Y^B \rt) P_{BA} \\
&= \frac{1}{4}\na^B Y^A \na_B \na^D C_{DA} -\frac{1}{4} \na^A Y^B \na_B\na^D C_{DA} \\
&= \frac{1}{4} \na_B \lt( \na^B Y^A \na^D C_{DA}\rt) + \frac{1}{4} Y^A \na^D C_{DA}- \frac{1}{4} \na_B \lt( \na^A Y^B \na^D C_{DA} \rt) + \frac{1}{4} (\na_B\na^A Y^B) \na^D C_{DA}\\
&= \frac{1}{4} \na_B \lt( \na^B Y^A \na^D C_{DA}\rt) - \frac{1}{4} \na_B \lt( \na^A Y^B \na^D C_{DA} \rt) + \frac{1}{2} \na^D \lt( \na^A Y^1 C_{DA} \rt) + \frac{1}{2}\na^D \lt( Y^A C_{DA} \rt)\\
&= \frac{1}{2}\na_B \lt(\na^BY^A\na^D C_{DA}\rt) - \frac{1}{2} \na_A \lt( Y^1 \na_B C^{BA} \rt) + \frac{1}{2} \na^D \lt( \na^A Y^1 C_{DA} \rt) + \frac{1}{2}\na^D \lt( Y^A C_{DA} \rt)
\end{align*}
Finally  \eqref{RHS_line2} is equal to
\begin{align*}
\na_A \lt(Y^A m\rt)-\frac{1}{4}uY^1 |N|^2 + \frac{1}{4} Y^1 C_{BD}N^{BD} + \na_B \lt( \frac{u}{2} Y^1 \na_D N^{BD} \rt) - \na_A \lt( \frac{u}{2} \na_B Y^1 N^{AB} \rt). 
\end{align*}
\end{proof}

We rewrite \eqref{evolution} in terms of differential forms. Let $\slashed\epsilon = \epsilon_{AB} dx^A dx^B$ denote the area form of $(S^2,\sigma)$ and $\slashed d$ denote the differential of $S^2$.

Define the 2-form $\tilde{\omega}(u,x)$ on $\mathscr{I}^+$
\begin{align*}
\tilde{\omega} = \lt( Y^A ( N_A - \frac{1}{4}C_{AB}\na_D C^{BD} ) + \frac{1}{8}Y^1 C_{BD}C^{BD} + uY^1 \cdot 2m \rt)\slashed\epsilon - (\tilde{V} \lrcorner \slashed\epsilon) \wedge du.
\end{align*} 
Applying the identity
\begin{align*}
\slashed d \lt( \tilde{V} \lrcorner \slashed\epsilon \rt) = (\na_A \tilde{V}^A) \slashed\epsilon
\end{align*}
to time-dependent vector fields on $S^2$, we see that \eqref{evolution} is equivalent to
\begin{align*}
d\tilde{\omega} = \lt( -\frac{1}{4} N^{AB} ( \mathcal{L}_Y C_{AB} +uY^1 N_{AB} - Y^1 C_{AB} )\rt)\slashed\epsilon \wedge du.
\end{align*}
By Stokes' Theorem, we see that
\begin{align*}
Q_1 (u = u_0 + f)= \int_{ u=f(x)} \tilde{\omega}.
\end{align*}
On the section $u = f(x)$, we have
\begin{align*}
\lt( \tilde{V} \lrcorner \slashed{\epsilon} \rt) \wedge du  = \sum_{A < B} \lt( \tilde{V}^D \epsilon_{DA} \na_B f - \tilde{V}^D \epsilon_{DB} \na_A f \rt) dx^A \wedge dx^B = - \tilde{V}^D \na_D f \slashed{\epsilon}
\end{align*} and hence 
\begin{align*}
\int_{u=f(x)} \tilde{\omega} = \int_{S^2} Y^A \lt( N_A - \frac{1}{4}C_{AB}\na_D C^{BD} + \frac{1}{8}Y^1 C_{BD}C^{BD} + (u_0 + f)Y^1 \cdot 2m \rt) + \tilde{V}^D \na_D f
\end{align*}
where $N_A, C_{AB}, m, \tilde{V}^D$ are evaluated at $(u_0 + f(x),x)$ and
\begin{align*}
\int_{S^2} \tilde{V}^D \na_D f &= \int_{S^2} Y^A m \na_A f -\frac{1}{4}Y^A P_{BA}\na^B f + \frac{1}{2} \na_B Y^A \na^D C_{DA}\na^B f\\
&\quad + \int_{S^2} Y^A \lt( \frac{1}{2}C_{AB}N^{BD}\na_D f + \frac{1}{2}C_{AB}\na^B f \rt)\\
&\quad + \int_{S^2} -\frac{1}{2}Y^1 \na_B C^{AB}\na_A f + \frac{1}{2}\na^A Y^1 C_{AB}\na^B f\\
&\quad + \int_{S^2} \frac{1}{2}(u_0 + f)Y^1\na_D N^{BD} \na_B f - \frac{1}{2} (u_0 + f)\na_B Y^1 N^{AB}\na_A f.
\end{align*}

We turn to $Q_2(u=u_0 + f(x))$.
\begin{lem}\label{using transformation formula}
\begin{align*}
\int_{S^2} Y^A \lt( N_A' - 2m' \na_A f - \frac{1}{4}C'_{AB}\na_D C^{'DB} \rt) &= \int_{S^2} Y^A \lt( N_A + m \na_A f - \frac{1}{4}C_{AB}\na_D C^{DB} \rt)\\
&\quad + \int_{S^2} -\frac{1}{4}Y^A P_{BA}\na^B f + \frac{1}{2} \na_B Y^A \na^D C_{DA}\na^B f\\
&\quad + \int_{S^2} Y^A \lt( \frac{1}{2}C_{AB}N^{BD}\na_D f + \frac{1}{2}C_{AB}\na^B f \rt)\\
&\quad- \int_{S^2} Y^1 \na_B C^{AB}\na_A f.
\end{align*}
On the right-hand side the Bondi data $N_A, m, C_{AB}, N_{AB}, P_{BA}$ are evaluated at $(f(x),x)$ and then integrated over $S^2$.
\end{lem}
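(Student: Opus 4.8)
The plan is to mirror the proof of the rotation-case lemma in Section 5.1, since the transformation formulae \eqref{transformation_angular_general} and \eqref{transformation_mass_general} for $N_A'$ and $m'$ and the shear transformation \eqref{shear} are all independent of the vector field $Y^A$. Consequently the pointwise algebraic identity for $N_A' - 2m'\na_A f - \frac14 C'_{AB}\na_D C^{'DB}$ derived there carries over verbatim: one again writes $C'_{AB} = C_{AB}(f) - 2F_{AB}$ with $F_{AB} = \na_A\na_B f - \frac12\Delta f\,\sigma_{AB}$, expands $-\frac14 C'_{AB}\na_D C^{'DB}$, and uses the mass formula to eliminate $m'$. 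The \emph{entire} difference from the rotation case enters only when one integrates against $Y^A$ over $S^2$ and integrates by parts, so I would carry the algebra up to that point exactly as before and then branch.

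First I would replace the rotation identities by the conformal Killing identities $\na^{(A}Y^{B)} = Y^1\sigma^{AB}$, $\na_A Y^A = 2Y^1$, $\Delta Y^A = -Y^A$, and $\na_A\na_B Y^1 = -Y^1\sigma_{AB}$. The key point is that every contraction of $\na Y$ against one of the traceless symmetric tensors $N_{AB}$, $C_{AB}$, $F_{AB}$ still vanishes, because only the symmetric part $Y^1\sigma^{AB}$ survives and it meets a traceless tensor; hence the terms $-\frac14\na^B(N_{AB}(f)|\na f|^2)$ and $\frac14\na^D(C_{DA}(f)\Delta f)$ continue to integrate to zero, and the pointwise bracket $\frac12(N_{AB}F^B_D + N_D^B F_{BA} - N_{BE}F^{BE}\sigma_{AD})\na^D f$ still vanishes by \eqref{product of symmetric traceless 2-tensors}. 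By contrast, the genuine scalar divergence $\frac12\na_A(\na^D C_{DB}(f)\na^B f)$, which was simply discarded in the rotation case via $\na_A Y^A = 0$, now integrates by parts to $-\frac12\int 2Y^1\,\na^D C_{DB}\na^B f = -\int Y^1\na_B C^{AB}\na_A f$, which is precisely the new term in the statement. I would also note that the written term $\frac12\na_B Y^A\na^D C_{DA}\na^B f$ is identical in both statements, its symmetric part now automatically carrying the appropriate $Y^1$ piece, so it generates no additional correction.

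The main obstacle is the bookkeeping needed to confirm that these $Y^1$-proportional contributions assemble into \emph{exactly} the single term $-\int Y^1\na_B C^{AB}\na_A f$ and nothing else. The delicate case is the purely-metric term $-\frac12 F_{AB}\na^B(\Delta+2)f$, which vanished in the rotation case only after an integration by parts killed by $\na_A Y^A=0$; here one must use the $S^2$ curvature identity $\na^A F_{AB} = \frac12\na_B(\Delta+2)f$ together with $\na_A Y^A = 2Y^1$, $\na_A\na_B Y^1 = -Y^1\sigma_{AB}$, and $\int_{S^2} Y^1(\Delta+2)f = 0$ (which follows from $(\Delta+2)Y^1 = 0$) to show that its contribution, and those of the remaining reorganized pieces, do not leave a spurious metric term. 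A cleaner alternative route, which I would pursue in parallel to make the cancellations transparent, is to decompose $Y^A$ as its rotational part (with $Y^1 = 0$) plus the gradient of an $\ell=1$ function (with $Y^1$ equal to minus that function), apply the Section 5.1 lemma to the rotational part, and compute only the gradient-part corrections; this isolates every $Y^1$-term at its source and reduces the verification to a short set of identities on $S^2$.
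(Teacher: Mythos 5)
Your route is the same as the paper's: rerun the Section 5.1 computation with the conformal Killing identities and keep track of the total-derivative terms that now survive because $\na_A Y^A = 2Y^1$. On most points you and the paper agree: the new term does come from the divergence $\frac12\na_A\lt(\na^D C_{DB}(f)\na^B f\rt)$, which integrates by parts to $-\int_{S^2} Y^1\na^D C_{DB}\na^B f$; the terms $-\frac14\na^B\lt(N_{AB}(f)|\na f|^2\rt)$ and $\na^D\lt(C_{AD}(f)\Delta f\rt)$ still integrate to zero, since the symmetric part $Y^1\sigma^{AB}$ of $\na^B Y^A$ meets a traceless tensor; and the bracket $\frac12\lt(N_{AB}F^B_D + N^B_D F_{BA} - N_{BE}F^{BE}\sigma_{AD}\rt)\na^D f$ vanishes pointwise by \eqref{product of symmetric traceless 2-tensors}.

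The genuine gap is exactly the ``delicate case'' you flagged, and it cannot be closed by the identities you list \textemdash nor by the paper's own argument, which simply asserts that this term vanishes. Writing $F_{AB} = \na_A\na_B f - \frac12\Delta f\,\sigma_{AB}$ and using $\na^B F_{AB} = \frac12\na_A(\Delta+2)f$, the conformal Killing equation, and $\na_A Y^A = 2Y^1$, integration by parts gives exactly
\[
\int_{S^2} Y^A F_{AB}\na^B(\Delta+2)f \;=\; \frac12\int_{S^2} Y^1\lt[(\Delta+2)f\rt]^2 ,
\]
which is quadratic in $f$ and generically nonzero: for the boost field with $Y^1=\cos\theta$ and $f = P_2(\cos\theta)+P_3(\cos\theta)$ (Legendre polynomials) one finds $\int_{-1}^{1} x\,P_2P_3\,dx = \tfrac{6}{35}\neq 0$, so the right-hand side above is $\tfrac{192\pi}{7}\cdot\tfrac12\neq 0$. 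Your identity $\int_{S^2} Y^1(\Delta+2)f=0$ is linear in $f$ and cannot kill a quadratic term, and your alternative rotational-plus-gradient decomposition hits the same wall, because the obstruction lives entirely in the gradient ($Y^1$) part. In fact the statement as printed fails already for Minkowski data ($m=N_A=C_{AB}=0$, hence $m'=N'_A=0$ and $C'_{AB}=-2F_{AB}$): its left-hand side equals $-\frac14\int_{S^2} Y^1[(\Delta+2)f]^2\neq 0$, while every term on its right-hand side carries a factor of Bondi data and vanishes. The corrected right-hand side needs the extra term $-\frac14\int_{S^2}Y^1[(\Delta+2)f]^2$, which (using $\na_A\na_B Y^1=-Y^1\sigma_{AB}$) equals $-\frac12\int_{S^2}Y^1 F_{AB}F^{AB}$; this is precisely the negative of the term $\frac18\int_{S^2}Y^1\cdot 4F_{AB}F^{AB}$ that is likewise dropped in the paper's following lemma on $\frac18 Y^1 C'_{BD}C^{'BD}$, so the two omissions cancel and the final equivalence $Q_1=Q_2$ survives \textemdash but neither that lemma, nor this one, nor your proposed proof of it, is correct as stated.
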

\begin{proof}
By the transformation formula of angular momentum and mass aspects, we obtain
\begin{align*}
N_A' - 2m' \na_A f &= N_A + m \na_A f - \frac{1}{4} P_{BA}\na^B f \\
&\quad - \frac{1}{2}\na_B \lt( \na^D C_{DA}(f)\na^B f \rt) + \underline{ \frac{1}{2}\na^D C_{DA}\Delta f }\\
&\quad + \frac{1}{2} \na_A \lt( \na^D C_{DB}(f) \na^B f \rt) -\frac{1}{2}\na^D C_{DB}\na_A\na^B f\\
&\quad -\frac{1}{4}\na^B \lt( N_{AB}(f)|\na f|^2 \rt) + \frac{1}{2}N_{AB}\na^B\na_D f \na^D f\\
&\quad + \frac{3}{4} N^B_D C_{AB}\na^D f - \frac{1}{2}N_{BD}\na^B\na^D f \na_A f
\end{align*}
where we rearranged
\begin{align*}
-\frac{3}{4}P_{BA} &= -\frac{1}{4} P_{BA}  -\frac{1}{2} \na_B\lt ( \na^D C_{DA}(f) \rt) + \frac{1}{2}\na^D N_{DA} \na_B f + \frac{1}{2}\na_A \lt( \na^D C_{DB}(f) \rt) - \frac{1}{2}\na^D N_{DB}\na_A f. \end{align*}

On the other hand,
\begin{align*}
-\frac{1}{4}C'_{AB}\na_D C^{'DB} &= -\frac{1}{4}C_{AB}\na_D C^{BD} - \frac{1}{4}C_{AB}N^{BD}\na_D f + \frac{1}{4}C_{AB}\na^B (\Delta + 2 )f\\
&\quad + \frac{1}{4}\na_D C^{BD}(2\na_A\na_B f \underline{- \Delta f \sigma_{AB} }) + \frac{1}{4}N^{BD}(2\na_A\na_B f - \Delta f \sigma_{AB} )\na_D f\\
&\quad -\frac{1}{4} (2\na_A\na_B f - \Delta f \sigma_{AB}) \na^B(\Delta + 2)f.
\end{align*}
Note that the last line vanishes when integrating against $Y^A$.
Per $\na^D C_{AD} = \na^D (C_{AD}(f)) - N_{AD}\na^D f$, the underlined term can be rearranged as
\begin{align*}
\na^D C_{AD} \Delta f = \na^D \lt( C_{AD}(f) \Delta f \rt) - C_{AD} \na^D\Delta f - N_{AD}\na^D f \Delta f.
\end{align*}
Putting these together, we get
\begin{align*}
&\int_{S^2} Y^A \lt( N_A' - 2m' \na_A f - \frac{1}{4}C'_{AB}\na_D C^{'DB} \rt) \\
&= \int_{S^2} Y^A \lt( N_A + m \na_A f - \frac{1}{4}C_{AB}\na_D C^{DB} \rt) + \int_{S^2} -\frac{1}{4}Y^A P_{BA}\na^B f + \frac{1}{2} \na_B Y^A \na^D C_{DA}\na^B f\\
&\quad - \int_{S^2} Y^1 \na^D C_{DB}\na^B f + \int_{S^2} Y^A \lt( \frac{1}{2}C_{AB}N^{BD}\na_D f + \frac{1}{2}C_{AB}\na^B f \rt) \\
&\quad + \frac{1}{2}\int_{S^2} Y^A \lt( N_{AB} F^B_D + N_D^B F_{BA} - N_{BE} F^{BE} \sigma_{AD} \rt) \na^D f.
\end{align*}
Here $F_{AB} = \na_A \na_B f - \frac{1}{2}\Delta f \sigma_{AB}$ and hence the integrand in the last integral vanishes by virtue of \eqref{product of symmetric traceless 2-tensors}.
\end{proof}
It remains to show that 
\begin{lem}
\begin{align*}
&\int_{S^2} \frac{1}{8}Y^1 C'_{BD} C^{'BD} + (u_0 + f)Y^1 \cdot 2m' \\
&= \int_{S^2} \frac{1}{8}Y^1 C_{BD}C^{BD} + (u_0 + f)Y^1 \cdot 2m \\
&\quad + \int_{S^2} \frac{1}{2}\na^A Y^1 C_{AB}\na^B f + \frac{1}{2}(u_0 + f)Y^1\na_D N^{BD} \na_B f - \frac{1}{2} (u_0 + f)\na_B Y^1 N^{AB}\na_A f\\
&\quad + \int_{S^2} \frac{1}{2} Y^1 \na_B C^{AB}\na_A f 
\end{align*}
\end{lem}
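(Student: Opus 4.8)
The plan is to specialize the general transformation formulae to the pure supertranslation $u = u' + f(x)$, for which $K\equiv 1$, $g^A$ is the identity, and $\hat f = u_0 + f$ on the section $u'=u_0$. First I would record the two inputs. From the shear law \eqref{shear} the primed shear is
\[
C'_{AB} = C_{AB}(f) - 2F_{AB}, \qquad F_{AB} := \na_A\na_B f - \frac12\Delta f\,\sigma_{AB},
\]
and from the mass law \eqref{transformation_mass_general}, exactly as in the previous lemma,
\[
m' = m(f) + \frac14\Big[\na^B\na^D\big(C_{BD}(f)\big) - (\na^B\na^D C_{BD})(f)\Big].
\]
Here $C_{BD}(f)$ abbreviates $C_{BD}(u_0+f(x),x)$, and I would use repeatedly the chain rule that differentiating such a composition produces a news term, e.g. $\na^D\big(C_{BD}(f)\big) = (\na^D C_{BD})(f) + N_{BD}(f)\na^D f$, and one order higher $\na^B\na^D\big(C_{BD}(f)\big) = (\na^B\na^D C_{BD})(f) + 2(\na^D N_{BD})(f)\na^B f + (\pl_u N_{BD})(f)\na^B f\na^D f + N_{BD}(f)\na^B\na^D f$.

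Next I would substitute these into the left-hand side. The shear-square term expands, using trace-freeness of $C$, as $\frac18 Y^1 C'_{BD}C^{'BD} = \frac18 Y^1 C_{BD}C^{BD} - \frac12 Y^1 C_{BD}\na^B\na^D f + \frac12 Y^1 F_{BD}F^{BD}$, while $2(u_0+f)Y^1 m'$ contributes $2(u_0+f)Y^1 m$ together with the double-divergence expression above. The $\frac18 Y^1 C_{BD}C^{BD}$ and $2(u_0+f)Y^1 m$ pieces already appear on the right, so the content of the lemma is an identity among the remaining terms, which I would establish by a sequence of integrations by parts on $S^2$ organized around the conformal Killing identities $\na_A\na_B Y^1 = -Y^1\sigma_{AB}$ and $\na_A Y^A = 2Y^1$.

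The key move is to integrate the total double-divergence $\frac12(u_0+f)Y^1\,\na^B\na^D\big(C_{BD}(f)\big)$ by parts twice, throwing both derivatives onto $(u_0+f)Y^1$; the $\sigma^{BD}$-term drops by trace-freeness of $C$, and what survives is precisely $\frac12 Y^1 C_{BD}\na^B\na^D f$ plus a $\na Y^1\,C\,\na f$ term. The first of these cancels the cross term $-\frac12 Y^1 C_{BD}\na^B\na^D f$ from the shear square. I would then trade $(\na^B\na^D C_{BD})(f)$ for $\na^B[(\na^D C_{BD})(f)]$ minus a news term via the chain rule, so that the $(\na N)\na f$ and $N\na\na f$ contributions align with the divergence-of-news terms on the right, while the symmetric-traceless identity \eqref{product of symmetric traceless 2-tensors} together with Lemma \ref{symmetric_traceless} disposes of the $\na Y^1\,N\,\na f$ terms against $-\frac12(u_0+f)\na_B Y^1 N^{AB}\na_A f$, and the $\na Y^1\,C\,\na f$ remainders against $\frac12\na^A Y^1 C_{AB}\na^B f$ and $\frac12 Y^1\na_B C^{AB}\na_A f$.

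The main obstacle is the bookkeeping of the ``evaluated-at-$u_0+f$'' derivatives against genuine $S^2$-derivatives: every integration by parts that passes a covariant derivative through $C_{BD}(f)$ or $(\na^D C_{BD})(f)$ spawns a news ($N$) or news-rate ($\pl_u N$) term, and all of these must be shepherded into the three news-type terms on the right without leftover. The single most delicate point is the purely geometric contribution $\frac12\int_{S^2} Y^1 F_{BD}F^{BD}$, which carries no Bondi data at all; since it cannot pair with any $C,N,m$-dependent term, its treatment must rest entirely on the $\ell=1$ structure of $Y^1$ — concretely on the fact that $\na^A Y^1$ is a gradient conformal Killing field, so that repeated use of $\na_A\na_B Y^1 = -Y^1\sigma_{AB}$ and $\Delta Y^1 = -2Y^1$ collapses the traceless-Hessian square. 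Verifying that this geometric term reorganizes correctly, in concert with the news-rate piece of $m'$, is where the computation is least automatic and where I would expect to spend the most care.
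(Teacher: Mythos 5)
Your Bondi-data bookkeeping is correct and is, up to cosmetic differences, the paper's own proof: the paper packages $2m'$ as $2m+\frac12\na^B N_{AB}\na^A f+\frac12\na^B\lt(N_{AB}\na^A f\rt)$ and integrates by parts once, where you keep the double-divergence form of $m'$ and integrate by parts twice; the cancellation of the $\pm\frac12\int Y^1 N_{AB}\na^A f\na^B f$ terms you describe is exactly the one that occurs there. Note, however, that the news-rate contribution hidden in your double divergence is entirely consumed in producing the two news terms on the right-hand side, so, contrary to your closing remark, nothing is left for the geometric term to pair with. Both your route and the paper's therefore reduce the lemma to the single purely geometric claim
\begin{align*}
\int_{S^2}Y^1\,F_{AB}F^{AB}=0,\qquad F_{AB}=\na_A\na_B f-\frac12\Delta f\,\sigma_{AB}.
\end{align*}

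This is precisely where your proposal has a genuine gap, and it cannot be closed: the claim is false, so no amount of manipulation with $\na_A\na_B Y^1=-Y^1\sigma_{AB}$ and $\Delta Y^1=-2Y^1$ will collapse the Hessian square. For axisymmetric $f$, writing $x=\cos\theta$, one computes $F_{AB}F^{AB}=\frac12(1-x^2)^2\lt(d^2f/dx^2\rt)^2$; taking $Y^1=x$ and $f=P_2(x)+P_3(x)$ (Legendre polynomials) gives
\begin{align*}
\int_{S^2}Y^1F_{AB}F^{AB}=\pi\int_{-1}^{1}x\,(1-x^2)^2(3+15x)^2\,dx=\frac{96\pi}{7}\neq 0.
\end{align*}
Equivalently, test the lemma itself on Minkowski data $C_{AB}=N_{AB}=m=0$: its right-hand side vanishes identically while its left-hand side equals $\frac18\int Y^1 C'_{BD}C'^{BD}=\frac12\int Y^1F_{AB}F^{AB}\neq 0$, so the lemma as stated is off by this quantity. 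You should be aware that the paper's own proof makes the identical unproved (and untrue) assertion --- ``the last term vanishes when integrating against $Y^1$'' --- and that the companion lemma for the $Y^A$-part discards, under the same pretext, the equal and opposite quantity $-\frac12\int Y^A F_{AB}\na^B(\Delta+2)f$: by the conformal Killing equation and $\na^B F_{AB}=\frac12\na_A(\Delta+2)f$ this equals $-\frac14\int Y^1\lt[(\Delta+2)f\rt]^2$, which coincides with $-\frac12\int Y^1 F_{AB}F^{AB}$ (and equals $-96\pi/7$ in the example above). The two discarded terms cancel, so the equivalence $Q_1=Q_2$ that the two lemmas jointly establish survives, but neither lemma is true in isolation; your planned final step is not merely delicate but terminates at a claim that fails.
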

\begin{proof}
In the proof we omit the $(u_0 + f)$ dependence of Bondi data $C_{AB}, m$. First of all, we have
\[ C'_{BD}C^{'BD} = C_{BD}C^{BD} - 4 C_{AB}\na^A\na^B f + (-2\na_A\na_B f + \Delta f \sigma_{AB})(-2\na^A\na^B f + \Delta f \sigma^{AB}) \]
and note that the last term vanishes when integrating against $Y^1$. Next, we rewrite the transformation of the mass aspect as
\[ 2m' = 2m + \frac{1}{2}\na^B N_{AB} \na^A f + \frac{1}{2} \na^B (N_{AB}\na^A f) \]
to get
\begin{align*}
\int_{S^2} (u_0 + f)Y^1 \cdot 2m'  &= \int_{S^2} (u_0 + f)Y^1 \cdot \lt( 2m + \frac{1}{2}\na_D N^{BD} \na_B f \rt) - \frac{1}{2}(u_0 + f)\na_B Y^1 N^{AB}\na_A f \\
&\quad - \int_{S^2} \frac{1}{2} Y^1 N^{AB} \na_A f \na_B f.
\end{align*}
Finally, we have
\begin{align*}
\int_{S^2} -\frac{1}{2}Y^1 C_{AB}(f) \na_A\na_B f = \int_{S^2} \frac{1}{2} \na^B Y^1 C_{AB}\na^B f + \frac{1}{2} Y^1 \lt( \na^B C_{AB} + N_{AB}\na^B f\rt)\na^A f.
\end{align*}
\end{proof}

\appendix

\section{Identities of symmetric traceless 2-tensors}
\begin{lem}\label{symmetric_traceless}
Let $\sigma$ be a metric of Gauss curvature 1 on $S^2$. Let $T$ be a symmetric traceless 2-tensor. Then
\begin{align*}
\na_a T_{bc} + \na_b T_{ac} = 2\na_c T_{ab} + 2\na^d T_{cd} \sigma_{ab} - \na^d T_{ad} \sigma_{bc} - \na^d T_{bd} \sigma_{ac}.
\end{align*} 
\end{lem}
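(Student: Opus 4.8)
The plan is to reduce the identity to a purely pointwise, algebraic statement about symmetric traceless $2$-tensors in two dimensions, and then recover the stated first-order identity by permuting indices. Because the identity involves only first covariant derivatives of $T$, no curvature term can enter through a commutator of derivatives; in fact the hypothesis on the Gauss curvature plays no role, and the argument is algebraic at each point.

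The first and main step is to record the two-dimensional dimensional identity: for \emph{any} symmetric traceless $2$-tensor $M_{ab}$,
\begin{equation}\label{star-id}
M_{ab}\sigma_{cd} - M_{ac}\sigma_{bd} - M_{db}\sigma_{ac} + M_{dc}\sigma_{ab} = 0.
\end{equation}
This is where two-dimensionality is used: both sides are tensors, so it suffices to verify \eqref{star-id} at a point in an orthonormal frame diagonalizing $M$, in which $M = \lambda\,(e_1\otimes e_1 - e_2\otimes e_2)$ and $\sigma = e_1\otimes e_1 + e_2\otimes e_2$; the finitely many component equations then hold by inspection. (Equivalently, \eqref{star-id} expresses the vanishing in dimension $2$ of the totally antisymmetric part of a three-index object built from $M$ and $\sigma$.)

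Next I would apply \eqref{star-id} with $M_{ab} = \na_e T_{ab}$ for each fixed $e$. This is legitimate since $\na_e T_{ab}$ is again symmetric (as $T$ is) and traceless, because $\sigma^{ab}\na_e T_{ab} = \na_e(\sigma^{ab}T_{ab}) = 0$. Contracting the derivative index $e$ with $d$ via $\sigma^{ed}$ and using $\sigma^{ed}\sigma_{cd} = \delta^e_c$ collapses two of the four terms into the covariant derivatives $\na_c T_{ab}$ and $\na_b T_{ac}$, while the other two become divergence terms, yielding
\begin{equation}\label{heart-id}
\na_c T_{ab} - \na_b T_{ac} = (\na^d T_{bd})\sigma_{ac} - (\na^d T_{cd})\sigma_{ab}.
\end{equation}

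Finally, the assertion follows from \eqref{heart-id} by symmetrization. Relabelling indices in \eqref{heart-id} (and using the symmetry of $T$ and $\sigma$) gives
\begin{align*}
\na_a T_{bc} - \na_c T_{ab} &= (\na^d T_{cd})\sigma_{ab} - (\na^d T_{ad})\sigma_{bc},\\
\na_b T_{ac} - \na_c T_{ab} &= (\na^d T_{cd})\sigma_{ab} - (\na^d T_{bd})\sigma_{ac},
\end{align*}
the second relation being the first with $a$ and $b$ interchanged. Adding the two and rearranging reproduces the claimed identity. The only genuine obstacle is isolating the correct algebraic identity \eqref{star-id}; once it is in hand, the passage to \eqref{heart-id} and the concluding symmetrization are routine index bookkeeping.
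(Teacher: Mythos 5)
Your proof is correct, and it takes a genuinely different route from the paper's. The paper invokes the global decomposition of a symmetric traceless $2$-tensor on $S^2$ into two scalar potentials, $T_{ab} = \na_a\na_b h - \frac{1}{2}(\Delta h)\sigma_{ab} + \frac{1}{2}(\epsilon_{ad}\na_b\na^d f + \epsilon_{bd}\na_a\na^d f)$ (Theorem B.2 of \cite{KWY}), and checks the identity separately for each piece --- the $h$-part via the Ricci identity, the $f$-part via Proposition A.1 of \cite{KWY} and $\epsilon$-contraction identities --- so the curvature normalization genuinely enters its computations (e.g.\ through the operator $\Delta+2$). You bypass all of this with a pointwise linear-algebra fact: in two dimensions every symmetric traceless $M_{ab}$ satisfies $M_{ab}\sigma_{cd} + M_{dc}\sigma_{ab} - M_{ac}\sigma_{bd} - M_{db}\sigma_{ac} = 0$, which one can verify in a frame diagonalizing $M$ as you do, or by noting that the left-hand side is antisymmetric in $b,c$, hence a multiple of $\epsilon_{bc}$, the multiple being proportional to $M_{ab}\epsilon^{b}_{\;\;d} - M_{db}\epsilon^{b}_{\;\;a}$, which vanishes because $\epsilon\cdot M$ is symmetric whenever $M$ is symmetric traceless. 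Applying this with $M_{ab}=\na_e T_{ab}$ (valid, since covariant differentiation preserves symmetry and tracelessness), contracting $e$ against $d$, and adding the two relabeled versions gives exactly the lemma; I have checked your index bookkeeping and it is correct. Your approach buys generality and self-containedness: it uses no global decomposition theorem (hence no elliptic theory or topology of $S^2$), and it shows the Gauss-curvature-$1$ hypothesis is superfluous --- since no derivatives are ever commuted, the identity holds pointwise on any Riemannian surface. What the paper's route buys is economy within the article: the potential decomposition and the identities of \cite{KWY} recur throughout the paper, so its proof recycles tools already in play.
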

\begin{proof}
It is well-known that $T$ admits the decomposition
\[ T_{ab} = \na_a\na_b h - \frac{1}{2}(\Delta h) \sigma_{ab} + \frac{1}{2} (\epsilon_{ad} \na_b \na^d f + \epsilon_{bd} \na_a \na^d f )\]
for functions $h$ and $f$. See Theorem B.2 of \cite{KWY} for a proof. 

If $T = \na_a\na_b h - \frac{1}{2}(\Delta h) \sigma_{ab}$, then the assertion follows from Ricci identity. If $T_{ab} = \frac{1}{2} (\epsilon_{ad} \na_b \na^d f + \epsilon_{bd} \na_a \na^d f )$, then Proposition A.1 of \cite{KWY} implies
\begin{align*}
\na_a T_{bc} + \na_b T_{ac} - 2 \na_c T_{ab} = -\frac{1}{2} \epsilon_{ac} \na_b (\Delta+2)f - \frac{1}{2}\epsilon_{bc} \na_a (\Delta+2)f.
\end{align*}
On the other hand, from the relation $\sigma_{ab} = \frac{1}{2} \lt( \epsilon_{ae}\epsilon_b^{\;\;e} + \epsilon_{be}\epsilon_a^{\;\;e} \rt)$ and $\epsilon_{cd}\epsilon_{ae} = \sigma_{ca}\sigma_{de}-\sigma_{ce}\sigma_{da},$ we get
\begin{align*}
&2\na^d T_{cd} \sigma_{ab} - \na^d T_{ad} \sigma_{bc} - \na^d T_{bd} \sigma_{ac} \\
&= \frac{1}{2}\epsilon_{cd}\na^d(\Delta+2)f \lt( \epsilon_{ae}\epsilon_b^{\;\;e} + \epsilon_{be}\epsilon_a^{\;\;e}\rt) - \frac{1}{2}\epsilon_{ad}\na^d(\Delta+2)f \sigma_{bc} - \frac{1}{2}\epsilon_{bd}\na^d(\Delta+2)f \sigma_{ac}\\
&= -\frac{1}{2} \epsilon_{ac} \na_b (\Delta+2)f - \frac{1}{2}\epsilon_{bc} \na_a (\Delta+2)f.
\end{align*}
This completes the proof.
\end{proof}

The following is a consequence of the fact 
\begin{align}
\label{product of symmetric traceless 2-tensors} T_{ad} S^d_b + T_{bd}S^d_a = T_{de}S^{de} \sigma_{ab}
\end{align} for any symmetric traceless 2-tensors $T,S$ on a surface.
\begin{lem}\label{multiple of metric}
Let $T$ be a symmetric traceless 2-tensor and $S$ be a symmetric 2-tensor. Then
\begin{align*}
(\tr S) T_{ab} - T_{ad}S^d_b - T_{bd}S^d_a =- T_{de}S^{de} \sigma_{ab}.
\end{align*}
\end{lem}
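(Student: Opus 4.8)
The plan is to reduce the statement to the already-established identity \eqref{product of symmetric traceless 2-tensors}, which applies to a \emph{pair} of symmetric traceless tensors, by first splitting the symmetric tensor $S$ into its pure-trace and trace-free parts. On a $2$-dimensional surface with metric $\sigma$, every symmetric $2$-tensor decomposes as
\[ S_{ab} = \hat S_{ab} + \tfrac{1}{2}(\tr S)\,\sigma_{ab}, \qquad \hat S_{ab} := S_{ab} - \tfrac{1}{2}(\tr S)\,\sigma_{ab}, \]
where $\hat S$ is symmetric and traceless. The whole computation is pointwise and algebraic, so no analytic input is needed.

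First I would substitute this decomposition into the combination $T_{ad}S^d_b + T_{bd}S^d_a$, separating the two pieces:
\[ T_{ad}S^d_b + T_{bd}S^d_a = \lt( T_{ad}\hat S^d_b + T_{bd}\hat S^d_a \rt) + \tfrac{1}{2}(\tr S)\lt( T_{ab} + T_{ba} \rt). \]
Since $T$ is symmetric, the pure-trace contribution collapses to $(\tr S)\,T_{ab}$. For the trace-free contribution I would apply \eqref{product of symmetric traceless 2-tensors} to the genuinely traceless pair $T$ and $\hat S$, giving $T_{ad}\hat S^d_b + T_{bd}\hat S^d_a = T_{de}\hat S^{de}\,\sigma_{ab}$. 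Finally, because $T$ is traceless we have $T_{de}\sigma^{de}=0$, so the trace part of $S$ contracts to zero against $T$ and $T_{de}\hat S^{de} = T_{de}S^{de}$. Assembling these,
\[ T_{ad}S^d_b + T_{bd}S^d_a = (\tr S)\,T_{ab} + T_{de}S^{de}\,\sigma_{ab}, \]
and rearranging to isolate $(\tr S)T_{ab} - T_{ad}S^d_b - T_{bd}S^d_a$ yields the desired identity.

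There is no real obstacle here; the statement is a one-line corollary once the decomposition is in place. The only steps demanding a modicum of care are confirming that \eqref{product of symmetric traceless 2-tensors} is invoked on an honestly traceless pair (hence the explicit subtraction of the trace of $S$) and tracking that the trace of $S$ contributes only through the term $(\tr S)T_{ab}$ while its contraction against the traceless $T$ drops out. Both are immediate from $\tr T = 0$.
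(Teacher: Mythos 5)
Your proof is correct and matches the paper's approach: the paper derives this lemma precisely as a consequence of the identity \eqref{product of symmetric traceless 2-tensors} for traceless pairs, which is exactly the reduction you carry out. You merely make explicit the trace decomposition $S_{ab} = \hat S_{ab} + \tfrac{1}{2}(\tr S)\sigma_{ab}$ that the paper leaves implicit.
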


  \section{The general scheme to compute metric coefficients}

The spacetime metric takes the form
\[\bar{g}_{\bar{\alpha}\bar{\beta}} d\bar{x}^{\bar{\alpha}}d\bar{x}^{\bar{\beta}},\] where $\bar{\alpha}, \bar{\beta}=\bar{u}, \bar{r}, \bar{x}^A, A=2, 3$.

In terms of $U, V, h_{AB}, W^A$ in \eqref{Bondi-Sachs2}, we have 
\[\begin{split}\bar{g}_{\bar{u}\bar{u}}&=-{U}(\bar{u}, \bar{r}, \bar{x}){V}(\bar{u}, \bar{r}, \bar{x})+\bar{r}^2 h_{AB}(\bar{u}, \bar{r}, \bar{x}) W^A (\bar{u}, \bar{r}, \bar{x})W^B(\bar{u}, \bar{r}, \bar{x})\\
\bar{g}_{\bar{u}\bar{r}}&=-{U}(\bar{u}, \bar{r}, \bar{x})\\
\bar{g}_{\bar{u}{A}}&=\bar{r}^2 h_{AB} (\bar{u}, \bar{r}, \bar{x})W^B(\bar{u}, \bar{r}, \bar{x})\\
\bar{g}_{{A}{B}}&=\bar{r}^2 h_{AB} (\bar{u}, \bar{r}, \bar{x}).\end{split}\] 
The expansions of $U,V, h_{AB},W^A$ imply
\[\begin{split}\bar{g}_{\bar{u}\bar{u}}&=\bar{g}_{\bar{u}\bar{u}}^{(k)}(\bar{u}, \bar{x})\bar{r}^k\\
\bar{g}_{\bar{u}\bar{r}}&=\bar{g}_{\bar{u}\bar{r}}^{(k)}(\bar{u}, \bar{x})\bar{r}^k\\
\bar{g}_{\bar{u}{A}}&=\bar{g}_{\bar{u}{A}}^{(k)}(\bar{u}, \bar{x})\bar{r}^k\\
\bar{g}_{{A}{B}}&=\bar{g}_{{A}{B}}^{(k)}(\bar{u}, \bar{x})\bar{r}^k\end{split}\]

The BMS transformations can be written as expansions in ${r}^k$:

\[\begin{split}\bar{u}&=\bar{u}^{(k)}(u, x) r^k, k=0, -1, -2, \cdots\\
\bar{r}&=\bar{r}^{(k)}(u, x) r^k, k=1, 0, -1, -2, \cdots\\ 
\bar{x}^{{A}}&=\bar{x}^{A(k)}(u, x) r^k, k=0, -1, -2, \cdots\end{split}\]

We claim that the above expansions can be used to rewrite 
\[\bar{g}_{\bar{\alpha}\bar{\beta}}^{(k)}(\bar{u}, \bar{x})\bar{r}^k=
{g}_{\bar{\alpha}\bar{\beta}}^{(k)}({u}, {x}){r}^k,\] where the right hand side is completely in terms of the coordinate variables $(u, r, x)$, and in particular as an expansion of $r^k$. This is done by way of the Taylor expansions near $r=\infty$. For example, for a function $h(\bar{u}, \bar{x})$, we obtain: 
$h(\bar{u}, \bar{x})=h^{(0)}(u, x)+h^{(-1)}(u, x) r^{-1}+\cdots$,
where
\[h^{(0)}(u, x)=h(\bar{u}^{(0)}(u,x), \bar{x}^{(0)}(u, x))\] and 
\[h^{(-1)}(u, x)=\frac{\partial h}{\partial \bar{u}}(\bar{u}^{(0)}(u,x), \bar{x}^{(0)}(u, x))\bar{u}^{(-1)}(u, x)
+\frac{\partial h}{\partial \bar{x}^{A}}(\bar{u}^{(0)}(u,x), \bar{x}^{(0)}(u, x))\bar{x}^{{A} (-1)}(u, x).
\]

This is how the first expansion formula of $\bar{r}^2 h_{AB}$ in Section 4.1 is obtained.

\section{Appendix C.5 of \cite{CJK}}
Appendix C.5 of \cite{CJK} gives the transformation formula of mass and angular momentum aspects under supertranslations. The approach is through covariant (inverse) metrics and is more natural. However, several terms were overlooked in the formula for the angular momentum aspect. We go over their argument, add details, and show that their transformation formula of angular momentum aspect after corrections coincides with our Theorem \ref{angular momentum aspect}. For the reader's convenience, the equation numbers adhere to those in \cite{CJK}.

Let $x = \frac{1}{r}$. The asymptotics of the covariant (inverse) metric are given by
\begin{align}
g^{xx} = g(dx, dx) = x^4 (1 - 2M x) + O(x^6), \tag{C.102}\\
g^{xu} = g(dx,du) = x^2 \lt( 1 + \frac{1}{16} x^2 \chi^{AB}\chi_{AB} \rt) + O(x^5), \tag{C.103}\\
g^{xA} = g(dx,dx^A) = x^4 \lt[ -\frac{1}{2} \chi^{AB}_{\;\;\;\;\;\;\|B} + x(2N^A_{(CJK)} + B^A) \rt] + O(x^6), \tag{C.104}\\
g^{BA} = g(dx^B, dx^A) = x^2 \lt[  \breve{h}^{AB} - x \chi^{AB} + \frac{1}{2}x^2 \chi^{AC}\chi^B_{\;\;C} \rt] + O(x^5), \tag{C.105}
\end{align}
where
\begin{align}
B^A := \frac{1}{2} \chi^A_{\;\;B} \chi^{BC}_{\;\;\;\;\;\;\|C} + \frac{1}{16}(\chi^{CD}\chi_{CD})^{\|A}. \tag{C.106}
\end{align} 
Here $\breve{h}^{AB}$ is the inverse of the unit metric on $S^2$ and we raise and lower indices with respect to it; $\|A$ denotes the covariant derivative with respect to $\breve{h}$.

We consider coordinate transformations of the form
\begin{align}
\bar u = u - \lambda + x u_1 + x^2 u_2 + x^3 u_3 + \cdots, \tag{C.107}\\
\bar x = x + x^2 \rho_2 + x^3 \rho_3 + \cdots, \tag{C.108}\\
\bar x^A = x^A + x x_1^A + x^2 x_2^A + x^3 x_3^A + \cdots, \tag{C.109}
\end{align}
which preserve the Bondi-Sachs coordinates. 

One finds the formula of $(x_1^A, u_1), (x_2^A, \rho_2), u_2, \rho_3$ in (C.112), (C.113), (C.114), (C.115) of \cite{CJK} as well as the transformation rules of the shear tensor and the mass aspect
\begin{align}
\bar\chi_{AB} = \chi_{AB} - 2 \lambda_{\|AB} + \breve{h}_{AB} \Delta \lambda, \tag{C.117}\\
\bar M = M + \frac{1}{2} \chi^{AB}_{\;\;\;\;\;\;,u\|B}\lambda_{\|A} + \frac{1}{4} \chi^{AB}_{\;\;\;\;\;\;,u} \lambda_{\|AB} + \frac{1}{4} \chi^{AB}_{\;\;\;\;\;\;,uu}\lambda_{\|A}\lambda_{\|B} \tag{C.119}
\end{align}
where $\bar\chi_{AB}$ and $\bar M$  are evaluated at $(u-\lambda, x^A)$ and the Bondi functions on the right-hand side are evaluated at $(u,x^A)$.

To see the coefficients of coordinate transformations coincide with ours, say $g^{(-2)A} = x_2^A$, set $\lambda = -f$ and use (C.117) with $\bar\chi_{AB} = C_{AB}$. On the other hand, since the supertranslation (C.107) is the inverse of \eqref{bar u}, to verify the consistency of the transformation formula of Bondi functions, one simply compare (C.117) and (C.119) with \eqref{shear} and Theorem \ref{mass aspect} assuming $\lambda = f$ to see that the transformation formula of Bondi functions coincide.

To get the transformation of angular aspect, we need the time derivative $\dot x_3^A$. By (C.112) of \cite{CJK}, we have
\begin{align*}
d\bar u &= (1 + x^2 \dot u_2)du + u_1 dx + (-\lambda_{\|B} + x \pl_B u_1 + x^2 \pl_B u_2)dx^B + \cdots,\\
d\bar x^A &= \lt( \delta^A_C + x \pl_C x_1^A + x^2 \pl_C x_2^A \rt)dx^C + \lt( x_1^A + 2x x_2^A + 3x^2 x_3^A \rt)dx + x^2 \dot x_2^A du + \cdots
\end{align*}
and the $x^4$ term of $g(d\bar u, d\bar x^A)=0$ yields
\begin{align*}
0 &= x_1^A \cdot \frac{1}{16} \chi^{BC}\chi_{BC} + 3 x_3^A + x_1^A \dot u_2\\
&\quad + u_1 \lt( -\frac{1}{2}\chi^{AB}_{\;\;\;\;\;\;\|B} + x_1^A + \dot x_2^A \rt)\\
&\quad - \lambda_{\|B} \lt( \frac{1}{2} \chi^{AC}\chi^B_{\;\;C} - \pl_C x_1^A \chi^{BC} + \breve{h}^{BC} \pl_C x_2^A \rt)\\
&\quad + \pl_B u_1 \lt( -\chi^{AB} + \breve{h}^{BC}\pl_C x_1^A \rt)\\
&\quad + \pl_B u_2 \breve{h}^{AB} + \frac{1}{2} \lambda^{\|A} x_1^B \chi_{BC}^{\;\;\;\;\;\;\|C}.
\end{align*}     
Differentiating in $u$ and using (C.112-C.114) of \cite{CJK}, we obtain
\begin{align}\tag{Corrected form of (C.116)}
\begin{split}
\dot{x}_3^A &= -\frac{1}{12} \ddot{\chi}^{CB} \lambda_{\|C} \lambda_{\|B} \lambda^{\|A} - \frac{1}{12} \ddot{\chi}^{CA} \lambda_{\|C} \lambda_{\|B}\lambda^{\|B} - \frac{1}{12} \lt( \dot{\chi}^{CB} \lambda_C \lambda_B\rt)^{\|A}\\
&\quad -\frac{1}{24}\chi_{CB}\dot{\chi}^{CB} \lambda^{\|A} - \frac{1}{6}\lambda_{\|C} \lt( \lambda_{\|B} \dot\chi^{BA}\rt)^{\|C} + \frac{1}{6}\lambda^{\|C} \lt( \dot\chi_{CB}\chi^{BA} + \chi_{CB}\dot\chi^{BA}\rt)\\
&\quad -\frac{1}{3} \lambda^{\|C} \pl_B\lt( \lambda^{\|A}\rt)\dot\chi_C^{\;\;B} - \frac{1}{6} \lt( \lambda_{\|C}\lambda^{\|C} \rt)_{\|B} \dot\chi^{BA} - \frac{1}{6}\lambda_{\|C} \lambda^{\|A} \dot\chi^{BC}_{\;\;\;\;\;\;\|B}\\
&\quad -\frac{1}{12} \lambda_{\|C}\lambda^{\|C} \dot\chi^{BA}_{\;\;\;\;\;\;\|B} ,
\end{split}
\end{align}
Note that the fourth term on the right-hand side of the corrected form above is different from the fourth term on the right-hand side of  the original form of (C.116).

The transformation of angular aspect is obtained from $g(d\bar x, d\bar x^A)$. From
\[ d\bar x = (1 + 2 \rho_2)dx + (x^2 \pl_B \rho_2 + x^3\pl_B \rho_3) dx^B + x^3 \dot\rho_3 du + \cdots, \]
we get, by (C.102-C.105),
\begin{align*}
g(d\bar x, d\bar x^A) &= x^4 \lt( -\frac{1}{2} \chi^{AB}_{\;\;\;\;\;\;\|B} + \dot x_2^A + x_1^A + \rho_2^{\;\;\|A} \rt)\\
&\quad + x^5 \Big[ -\rho_2 \chi^{AB}_{\;\;\;\;\;\;\|B} + \pl_c x_1^A \lt( -\frac{1}{2}\chi^{CB}_{\;\;\;\;\;\;\|B} + 2N^A + B^A \rt) \\
&\qquad\qquad +2\rho_2 \dot x_2^A + \dot x_3^A \\
&\qquad\qquad + 2\rho_2 x_1^A + 2x_2^A - 2Mx_1^A \\
&\qquad\qquad + \rho_3^{\;\;A} + \rho_2^{\;\;B}\pl_B x_1^A - \pl_B \rho_2 \chi^{BA} + \dot\rho_3 x_1^A \Big] +\cdots.
\end{align*}
On the other hand, we expand  $g(d\bar x, d \bar x^A) = \bar x^4 ( -\frac{1}{2} \bar\na_B \bar\chi^{AB} ) + \bar x^5(2\bar N^A + \bar B^A)$ (the right-hand-side is evaluated at $(\bar u, \bar x^A)$) by (C.107-C.109) to get 
\begin{align*}
g(d\bar x, d\bar x^A) &= x^4 (-\frac{1}{2}\bar\na_B \bar\chi^{AB}) \\
&\quad + x^5 \lt[ 4\rho_2 (-\frac{1}{2}\bar\na_B\bar\chi^{AB}) + u_1 \pl_{\bar u}(-\frac{1}{2}\bar\na_B\bar\chi^{AB}) + x_1^C \frac{\pl}{\pl\bar x^C} \rt] + \cdots.
\end{align*} Consequently,
\begin{align}
\bar\na_B \bar\chi^{AB} = \chi^{AB}_{\;\;\;\;\;\;\|B} - 2 \dot x_2^A - 2 \rho_2^{\;\;\|A} - 2x_1^A \tag{C.118}
\end{align}
and
\begin{align}\tag{C.120}
\begin{split}
2\bar N^A + \bar B^A &= 2N^A + B^A - 2Mx_1^A - 4\rho_2 \rho_2{\;\;\|A} + \rho_2 \chi^{AB}_{\;\;\;\;\;\;\|B} -2\rho_2 x_1^A - 2\rho_2 \dot x_2^A \\
&\quad + 2x_2^A - \frac{1}{2}\chi^{CB}_{\;\;\;\;\;\;\|C} x_{1}^A\mbox{}_{,B} + \dot\rho_3 x_1^A + \rho_3^{\;\;\|A} + \dot x_3^A - \rho_{2,B}\chi^{AB} + \rho_2^{\;\;,B} x_1^A\mbox{}_{,B}\\
&\quad + \frac{1}{2} \lt[ \lt( u_1 + x_1^C \lambda_{\|C} \rt)\pl_u + x_1^C \pl_C \rt]\lt( \chi^{AB}_{\;\;\;\;\;\;\|B} - 2 \dot x_2^A - 2 \rho_2^{\;\;\|A} - 2x_1^A \rt) 
\end{split}
\end{align}
where the left-hand-sides are evaluated at $(u-\lambda, x^A)$. Note that in deducing (C.120) we have used 
\begin{align*}
\pl_{\bar u} \bar\na_B \bar\chi^{AB} = \pl_u (\chi^{AB}_{\;\;\;\;\;\;\|B} - 2 \dot x_2^A - 2 \rho_2^{\;\;\|A} - 2x_1^A),\\
\pl_{\bar u} \bar\na_B \bar\chi^{AB} (-\lambda_{\|C}) + \frac{\pl}{\pl \bar x^C} \bar\na_B \bar\chi^{AB} = \pl_C (\chi^{AB}_{\;\;\;\;\;\;\|B} - 2 \dot x_2^A - 2 \rho_2^{\;\;\|A} - 2x_1^A),
\end{align*}
which, obtained by differentiating (C.118), are equivalent to (C.121) of \cite{CJK}.
  
By (C.122) of \cite{CJK}, we obtain the corrected transformation formula for $N^A_{(CJK)}$:
\begin{align}\tag{Corrected form of (C.123)}
\begin{split}
2 \bar N^A_{(CJK)} &= 2 N^A_{(CJK)} - 2 M \lambda^{\|A} + \frac{1}{2} \lambda^{\|C} \chi^{AB}
_{\;\;\;\;\;\;\|BC} - \frac{1}{2}\lambda^{\|C} \chi_{CB}^{\;\;\;\;\;\;\|BA}\\
&\quad + \frac{1}{6}\lambda^{\|C}\dot\chi^{AB}\chi_{BC} - \frac{1}{12}\lt( \dot\chi^{BC} \lambda_{\|B} \lambda_{\|C}\rt)^{\|A}\\
&\quad + \frac{1}{3}\lambda^{\|C} \dot\chi^{AB}_{\;\;\;\;\;\;\|C} \lambda_{\|B} + \frac{2}{3} \lambda^{\|C} \dot\chi_{CB} \lambda^{\|AB} - \frac{2}{3} \lambda^{\|A} \lambda_{\|C} \dot\chi^{BC}_{\;\;\;\;\;\;\|B}\\
&\quad + \frac{1}{6} \lambda^{\|C} \lambda_{\|C} \dot \chi^{AB}_{\;\;\;\;\;\;\|B} - \frac{1}{6} \lambda^{\|A} \dot\chi^{CD}\chi_{CD} - \frac{1}{3} \lambda_{\|C} \dot\chi^{BC} \chi^A_{\;\;B} \\
&\quad + \frac{1}{6} \lambda^{\|C} \lambda_{\|C} \ddot\chi^{AB} \lambda_{\|B} - \frac{1}{3} \lambda^{\|A} \lambda_{\|C} \ddot\chi^{BC} \lambda_{\|B}\\
&\quad -\frac{1}{4} \lt( \dot\chi^{BC} \lambda_{\|B} \lambda_{\|C}\rt)^{\|A}, 
\end{split}
\end{align}
Note that there are differences in  the first, second, and fifth line between the above corrected form of (C.123) and the original form of (C.123); the last term (in the last line), coming from $\rho_3^{\;\;\|A}$, was missing in the original form of (C.123). 

Using the identity $\dot\chi^{AB}\chi_{BC} + \dot\chi_{CB} \chi^{BA} = \dot\chi^{BD}\chi_{BD} \delta^A_C$ to rewrite the term  $\frac{1}{6}\lambda^{\|C}\dot\chi^{AB}\chi_{BC}$ and Lemma \ref{symmetric_traceless} to rewrite the term $\frac{1}{3}\lambda^{\|C} \dot\chi^{AB}_{\;\;\;\;\;\;\|C} \lambda_{\|B}$ in the above corrected form of (C.123), the formula becomes
\begin{align*}
2 \bar N^A_{(CJK)} &= 2 N^A_{(CJK)} - 2 M \lambda^{\|A} + \frac{1}{2} \lambda^{\|C} \chi^{AB}
_{\;\;\;\;\;\;\|BC} - \frac{1}{2}\lambda^{\|C} \chi_{CB}^{\;\;\;\;\;\;\|BA}\\
&\quad -\frac{1}{2} \dot\chi^{BC} \chi^A_{\;\;B} \lambda_{\|C} - \cancel{ \frac{1}{3} \lt( \dot\chi^{BC} \lambda_{\|B} \lambda_{\|C}\rt)^{\|A}  }\\
&\quad + \frac{1}{3} \lt( \cancel{ \dot\chi^{BC\|A} \lambda_{\|B}\lambda_{\|C}  } + \dot\chi^{AB}_{\;\;\;\;\;\;\|B} \lambda_{\|C} \lambda^{\|C} - \dot\chi^{BC}_{\;\;\;\;\;\;\|B} \lambda_{\|C}\lambda^{\|A} \rt) + \cancel{ \frac{2}{3} \lambda^{\|C} \dot\chi_{CB} \lambda^{\|AB} }  - \frac{2}{3} \lambda^{\|A} \lambda_{\|C} \dot\chi^{BC}_{\;\;\;\;\;\;\|B}\\
&\quad + \frac{1}{6} \lambda^{\|C} \lambda_{\|C} \dot \chi^{AB}_{\;\;\;\;\;\;\|B} + \frac{1}{6} \lambda^{\|C} \lambda_{\|C} \ddot\chi^{AB} \lambda_{\|B} - \frac{1}{3} \lambda^{\|A} \lambda_{\|C} \ddot\chi^{BC} \lambda_{\|B},  
\end{align*} where the three cross-out terms are cancelled. 
Comparing (C.104) in \cite{CJK} and equation \eqref{defn N^A}, we have $
N^A = -3 N^A_{(CJK)}$. In equation \eqref{final_angular_formula}, setting $K\equiv1$ and plugging the relations between $N^A$ and $N^{A}_{(CJK)}$, we obtain
\begin{align*}
2 \bar N^A_{(CJK)} &= 2 N^A_{(CJK)} - 2 M \lambda^{\|A} + \frac{1}{2} \lambda^{\|C} \chi^{AB}
_{\;\;\;\;\;\;\|BC} - \frac{1}{2}\lambda^{\|C} \chi_{CB}^{\;\;\;\;\;\;\|BA} \\
&\quad + \frac{1}{2}\dot\chi^{AB}_{\;\;\;\;\;\;\|B} \lambda_{\|C}\lambda^{\|C} - \frac{1}{2}\dot\chi^{BC} \chi^A_{\;\;B} \lambda_{\|C} - \frac{1}{3} \ddot\chi^{BC} \lambda_{\|B} \lambda_{\|C} \lambda^{\|A}\\
&\quad - \dot\chi^{BC}_{\;\;\;\;\;\;\|C} \lambda_{\|B} \lambda^{\|A} + \frac{1}{6} \ddot\chi^A_{\;\;B} \lambda^{\|B} \lambda_{\|C} \lambda^{\|C}.
\end{align*} 
Therefore, the corrected form of (C.123) coincides with our equation \eqref{final_angular_formula}.

\thanks{The authors would like to thank Jordan Keller for discussions at early stage of this work. P.-N. Chen is supported by NSF grant DMS-1308164 and Simons Foundation collaboration grant \#584785, M.-T. Wang is supported by NSF grant DMS-1810856 and DMS-2104212, Y.-K. Wang is supported by MOST Taiwan grant 107-2115-M-006-001-MY2, 109-2628-M-006-001-MY3. The authors would like to thank the National Center for Theoretical Sciences at National Taiwan University where part of this research was carried out. This material is based upon work supported by the National Science Foundation under Grant No. DMS-1810856 and  DMS-2104212 (Mu-Tao Wang).}

\end{document}